\newif\ifabstract
\newif\iffull
\par\vspace{4mm}}
\newcounter{definition}
\newenvironment{definition}{\refstepcounter{definition}\par\medskip\noindent%
   \textbf{Definition~\thedefinition.} \rmfamily}{\medskip}
\newcommand{\qed}{\hfill\vbox{\hrule height.2pt\hbox{\vrule width.2pt height5pt \kern5pt
\vrule width.2pt} \hrule height.2pt}}
\newcommand{\connect}{\leadsto}
\newcommand{\sconnect}{\overset{\mbox{\tiny{1:1}}}{\leadsto}}
\newcommand{\SC}{\mathsf{SC}}
\newcommand{\alphawl}{\ensuremath{\alpha_{\mbox{\tiny{\sc WL}}}}}
\newcommand{\alphaWL}{\alphawl}
\newcommand{\algsc}{\ensuremath{{\mathcal{A}}_{\mbox{\textup{\scriptsize{ARV}}}}}\xspace}
\newcommand{\alphasc}{\ensuremath{\alpha_{\mbox{\tiny{\sc ARV}}}}}
\newcommand{\gkrv}{\ensuremath{\gamma_{\mbox{\tiny{\sc CMG}}}}}
\newcommand{\gKRV}{\gkrv}
\newcommand{\gcmg}{\ensuremath{\gamma_{\mbox{\tiny{\sc CMG}}}}}
\newcommand{\alphaCMG}{\ensuremath{\alpha_{\mbox{\tiny{\sc CMG}}}}}
\newcommand{\floor}[1]{\ensuremath{\left\lfloor#1\right\rfloor}}
\newcommand{\event}{{\cal{E}}}
\newcommand{\NP}{\mbox{\sf NP}\xspace}
\newcommand{\polylog}[1]{\mathrm{polylog(#1)}}
\newcommand{\ZPTIME}{\mbox{\sf ZPTIME}}
\newcommand{\opt}{\mathsf{OPT}}
\newcommand{\set}[1]{\left\{ #1 \right\}}
\newcommand{\sse}{\subseteq}
\newcommand{\partition}{\mathsf{PARTITION}}
\newcommand{\separate}{\mathsf{SEPARATE}}
\newcommand{\tset}{{\mathcal T}}
\newcommand{\ttset}{\tilde{\mathcal T}}
\newcommand{\tT}{\tilde{T}}
\newcommand{\tmset}{\tilde{\mathcal M}}
\newcommand{\iset}{{\mathcal{I}}}
\newcommand{\pset}{{\mathcal{P}}}
\newcommand{\tpset}{\tilde{\mathcal{P}}}
\newcommand{\qset}{{\mathcal{Q}}}
\newcommand{\cset}{{\mathcal{C}}}
\newcommand{\fset}{{\mathcal{F}}}
\newcommand{\mset}{{\mathcal M}}
\newcommand{\wset}{{\mathcal{W}}}
\newcommand{\rset}{{\mathcal{R}}}
\newcommand{\hset}{{\mathcal{H}}}
\newcommand{\sset}{{\mathcal{{S}}}}
\newcommand{\gset}{{\mathcal{G}}}
\newcommand{\nots}{\overline S}
\newcommand{\be}{\begin{enumerate}}
\newcommand{\ee}{\end{enumerate}}
\newcommand{\bd}{\begin{description}}
\newcommand{\ed}{\end{description}}
\newcommand{\bi}{\begin{itemize}}
\newcommand{\ei}{\end{itemize}}
\newtheorem{lemma}{Lemma}
\newtheorem{theorem}{Theorem}
\newtheorem{corollary}{Corollary}
\newtheorem{claim}{Claim}
\newenvironment{proof}{\smallskip\noindent{\bf Proof:}}{\hfill\stopproof}
\def\stopproof{\square}
\def\square{\vbox{\hrule height.2pt\hbox{\vrule width.2pt height5pt \kern5pt
\vrule width.2pt} \hrule height.2pt}}
\newcommand{\tk}{\tilde k}
\renewcommand{\phi}{\varphi}
\newcommand{\eps}{\epsilon}
\newcommand{\half}{\ensuremath{\frac{1}{2}}}
\newcommand{\poly}{\operatorname{poly}}
\newcommand{\prob}[2][]{\text{\bf Pr}_{#1}\left [#2\right]}
\newenvironment{properties}[2][0]
{
\begin{enumerate} \setcounter{enumi}{#1}}{\end{enumerate}}
\newcommand{\EDP}{\mbox{\sf EDP}\xspace}
\newcommand{\EDPwC}{\mbox{\sf EDPwC}\xspace}
\newcommand{\out}{\operatorname{out}}
\begin{document}
\title{A Polylogarithimic Approximation Algorithm for Edge-Disjoint Paths with Congestion $2$}

\author{
Julia Chuzhoy \thanks{Toyota Technological Institute, Chicago, IL
60637. Email: {\tt cjulia@ttic.edu}. Supported in part by NSF CAREER grant CCF-0844872 and Sloan Research Fellowship.}
\and
Shi Li\thanks{Center for Computational Intractability,
Department of Computer Science, Princeton University. Email: {\tt shili@cs.princeton.edu}.
Supported by NSF awards MSPA-MCS 0528414, CCF 0832797,  AF 0916218 and CCF-0844872.}
}

\begin{titlepage}

\maketitle
\thispagestyle{empty}
\begin{abstract}
In the Edge-Disjoint Paths with Congestion problem (\EDPwC), we are given an undirected $n$-vertex graph $G$, a collection $\mset=\set{(s_1,t_1),\ldots,(s_k,t_k)}$ of demand pairs and an integer $c$. The goal is to connect the maximum possible number of the demand pairs by paths, so that the maximum edge congestion - the number of paths sharing any edge - is bounded by $c$. When the maximum allowed congestion is $c=1$, this is the classical Edge-Disjoint Paths problem (\EDP).

The best current approximation algorithm for \EDP achieves an $O(\sqrt n)$-approximation, by rounding 
 the standard multi-commodity flow relaxation of the problem. This matches the $\Omega(\sqrt n)$ lower bound  on the integrality gap of this relaxation. We show an $O(\poly\log k)$-approximation algorithm for \EDPwC with congestion $c=2$, by rounding the same multi-commodity flow relaxation. This gives the best possible congestion for a sub-polynomial approximation of \EDPwC via this relaxation. Our results are also close to optimal in terms of the number of pairs routed, since \EDPwC is known to be hard to approximate to within a factor of $\tilde{\Omega}\left((\log n)^{1/(c+1)}\right )$ for any constant congestion $c$. Prior to our work, the best approximation factor for \EDPwC with congestion $2$ was $\tilde O(n^{3/7})$, and the best algorithm achieving a polylogarithmic approximation required congestion $14$.
\end{abstract}
\end{titlepage}

\section{Introduction}
One of the central and most extensively studied graph routing problems is the Edge-Disjoint Paths problem (\EDP). In this problem, we are given an undirected $n$-vertex graph $G=(V,E)$, and a collection $\mset=\set{(s_1,t_1),\ldots, (s_k,t_k)}$ of $k$ source-sink pairs, that we also call demand pairs. The goal is to find a collection $\pset$ of edge-disjoint paths, connecting the maximum possible number of the demand pairs.

Robertson and Seymour~\cite{RobertsonS} have shown that \EDP can be solved efficiently, when the number $k$ of the demand pairs is bounded by a constant. However, for general values of $k$, it is NP-hard to even decide whether all pairs can be simultaneously routed via edge-disjoint paths~\cite{Karp}. A standard approach to designing approximation algorithms for \EDP and other routing problems, is to first compute a multi-commodity flow relaxation, where instead of connecting the demand pairs with paths, we are only required to send the maximum amount of multi-commodity flow between the demand pairs, with at most one flow unit sent between every pair. Such a fractional solution can be computed efficiently by using the standard multi-commodity flow LP-relaxation, and it can then be rounded to obtain an integral solution. Indeed, the best current approximation algorithm for the \EDP problem, due to Chekuri, Khanna and Shepherd~\cite{EDP-alg}, achieves an $O(\sqrt n)$-approximation using this approach. Unfortunately, a simple example by Garg, Vazirani and Yannakakis~\cite{trees2}  (see also Section~\ref{sec: gap} in the Appendix), 
shows that the integrality gap of the multi-commodity flow relaxation can be as large as $\Omega(\sqrt n)$, thus implying that the algorithm of~\cite{EDP-alg} is essentially the best possible for \EDP, when using this approach.
This integrality gap appears to be a major barrier to obtaining better approximation algorithms for \EDP. Indeed, we do not know how to design better approximation algorithms even for some seemingly simple special cases of planar graphs, called the brick-wall graphs (see section~\ref{sec: gap} of the Appendix). With the current best hardness of approximation factor standing on $\Omega(\log^{1/2-\epsilon}n)$ for any constant $\eps$ (unless $\NP$ is contained in $\ZPTIME(n^{\poly \log n})$~\cite{AZ-undir-EDP,ACGKTZ}), the approximability of the \EDP problem remains one of the central open problems in the area of routing.

A natural question is whether we can obtain better approximation algorithms by slightly relaxing the disjointness requirement, and allowing the paths to share edges. We say that a set $\pset$ of paths is an $\alpha$-approximate solution with congestion $c$, iff the paths in $\pset$ connect at least $\opt/\alpha$ of the demand pairs, while every edge of $G$ appears on at most $c$ paths in $\pset$. Here, $\opt$ is the value of the optimal solution to \EDP, where no congestion is allowed. This relaxation of the \EDP problem is called \EDP with congestion  (\EDPwC). The \EDPwC problem is a natural framework to study the tradeoff between the number of pairs routed and the congestion, and it is useful in scenarios where we can afford a small congestion on edges.

The classical randomized rounding technique of Raghavan and Thompson~\cite{RaghavanT} gives a constant factor approximation for \EDPwC, when the  congestion $c$ is $\Omega(\log n/\log\log n)$. More generally, for any congestion value $c$, factor $O(n^{1/c})$-approximation algorithms are known for \EDPwC~\cite{AzarR, BavejaS, KolliopoulosS}. Recently, Andrews~\cite{Andrews} has shown a randomized $O(\poly\log n)$-approximation algorithm  with congestion $c=O(\poly\log\log n)$, and Chuzhoy~\cite{EDP-old} has shown a randomized $O(\poly\log k)$-approximation algorithm with congestion $14$. For the congestion value $c=2$, Kawarabayashi and Kobayashi~\cite{KawarabayashiK} have recently shown an $\tilde O(n^{3/7})$-approximation algorithm, thus improving the best previously known $O(\sqrt n)$-approximation for $c=2$~\cite{AzarR, BavejaS, KolliopoulosS}. We note that all the above mentioned algorithms rely on the standard multi-commodity flow LP relaxation of the problem. It is easy to see that the values of the optimal solution of this LP relaxation for the \EDP problem, where no congestion is allowed, and for the \EDPwC problem, where congestion $c$ is allowed, are within a factor $c$ from each other. Therefore, the statements of these results remain valid even when the approximation factor is computed with respect to the optimal solution to the \EDPwC problem.

In this paper, we show a randomized $O(\poly\log k)$-approximation algorithm for \EDPwC with congestion $2$. Given  an instance $(G,\mset)$ of the \EDP problem, our algorithm w.h.p. routes at least $\Omega(\opt/\poly\log k)$ pairs with congestion $2$, where $\opt$ is the maximum number of pairs that can be routed with no congestion. Our algorithm also achieves an $O(\poly\log k)$-approximation when compared with the optimal solution to \EDPwC with congestion $2$. As all the algorithms for \EDP and \EDPwC mentioned above, our algorithm also performs a rounding of the standard multi-commodity flow relaxation for \EDP. Therefore, our result shows that when congestion $2$ is allowed, the integrality gap of this relaxation improves from $\Omega(\sqrt n)$ to polylogarithmic. 
Our result is essentially optimal with respect to this relaxation, both for the congestion and the number of pairs routed, in the following sense. As observed above, if we are interested in obtaining a sub-polynomial approximation for \EDP via the multi-commodity flow relaxation, then the best congestion we can hope for is $2$. On the other hand, Andrews et al.~\cite{ACGKTZ} have shown that the integrality gap of the multi-commodity flow relaxation for \EDPwC  is $\Omega\left (\left(\frac{\log n}{(\log\log n)^2}\right )^{1/(c+1)}\right )$ for any constant congestion $c$. In particular, the integrality gap for congestion $2$ is polylogarithmic, though the degree of the logarithm is much lower than the degree we obtain in our approximation algorithm. Andrews et al.~\cite{ACGKTZ} have  also shown that for any constant $\eps$, for any $1\leq c\leq O\left(\frac{\log \log n}{\log\log\log n}\right )$, there is no $O\left ((\log n)^{\frac{1-\eps}{c+1}}\right)$-approximation algorithm for \EDPwC with congestion $c$, unless $\NP \subseteq \ZPTIME(n^{\poly \log n})$. In particular, this gives an $\Omega\left (\log^{(1-\eps)/3}n\right )$-hardness of approximation for \EDPwC with congestion $2$.

While the approximability status of the \EDP problem remains open, our results show a fundamental difference between routing with congestion $1$ and routing with congestion 2 or higher. Suppose we are given a solution $\pset$ to the \EDP problem that connects $D$ of the demand pairs with congestion $c$, and we are interested in obtaining another solution with a lower congestion. Our results provide an efficient randomized algorithm to find a solution connecting $\Omega\left(D/(c\poly\log k)\right )$ of the demand pairs with congestion $2$. That is, we can lower the congestion to $2$ with only a factor $(c\poly\log k)$ loss in the number of the demand pairs routed\footnote{We can view the set $\pset$ of paths as a fractional solution to the \EDP problem instance, where $1/c$ flow units are sent along each path. This gives a fractional solution of value $D/c$. We can then use our algorithm to route $\Omega\left(D/(c\poly\log k)\right )$ of the demand pairs with congestion $2$, by rounding this fractional solution.}. However, if we are interested in routing with no congestion, then we may have to lose an $\Omega(\sqrt n)$-factor in the number of pairs routed. For example, in the integrality gap construction of Garg, Vazirani and Yannakakis~\cite{trees2}, there is a solution that routes $k=\Theta(\sqrt n)$ pairs with congestion $2$, but if we require a routing with congestion $1$, then at most one pair can be routed (see section~\ref{sec: gap} of the Appendix).

We note that better approximation algorithms are known for some special cases of the \EDP problem. Rao and Zhou~\cite{RaoZhou} have shown that if the value of the global minimum cut in the input graph $G$ is $\Omega(\log^5n)$, then there is an efficient randomized $O(\poly\log n)$-approximation algorithm for \EDP. Interestingly, this algorithm is also based on the multi-commodity flow relaxation. The \EDP problem is also known to have polylogarithmic approximation algorithms on bounded-degree expander graphs~\cite{LR,BFU,BFSU,KleinbergR,Frieze}, and constant-factor approximation algorithms on trees~\cite{trees2,trees1}, grids and grid-like graphs~\cite{grids1,grids2,grids3,grids4}.
Routing problems have also been extensively studied on planar graphs. Chekuri, Khanna and Shepherd~\cite{CKS-planar1,CKS-planar2} have shown a poly-logarithmic approximation algorithm for \EDPwC with congestion $2$ and a constant approximation algorithm with congestion $4$ on planar graphs. Both results have recently been improved by Seguin-Charbonneau and Shepherd~\cite{SS-planar}, who showed a constant factor approximation algorithm with congestion $2$. When no congestion is allowed, Kleinberg~\cite{Kleinberg-planar} has shown an $O(\log^2n)$-approximation for Eulerian planar graphs. Kawarabayashi and Kobayashi~\cite{Kplanar} have recently improved this result to an $O(\log n)$-approximation, for both Eulerian and $4$-connected planar graphs. However, improving the $O(\sqrt n)$-approximation algorithm for \EDP on general planar graphs still remains elusive.

A problem closely related to \EDP is {\sf{Congestion Minimization}}, where the goal is to route all the demand pairs, while minimizing the maximum congestion. The classical randomized rounding technique of Raghavan and Thompson~\cite{RaghavanT}, when applied to the multi-commodity flow relaxation of the problem,  achieves an $O(\log n/\log\log n)$-approximation. This is the best currently known approximation algorithm for the problem. The best current hardness of approximation ratio, due to Andrews and Zhang~\cite{AZ-undir-cong}, is $\Omega\left(\frac{\log\log n}{\log\log \log n}\right )$, under the assumption that $\NP \not\subseteq \ZPTIME(n^{\poly \log n})$.

{\bf Our results.}
Our main result is summarized in the following theorem.


\begin{theorem}\label{thm: main}
There is an efficient randomized algorithm, that, given a graph $G$, and a collection $\mset$ of $k$ source-sink pairs, w.h.p. finds a routing of $\Omega(\opt/(\poly\log k))$ of the pairs in $\mset$ with congestion at most $2$, where $\opt$ is the maximum number of pairs that can be routed with congestion 2.
\end{theorem}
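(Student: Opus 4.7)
The plan follows the now-standard high-level template for \EDPwC: reduce to routing in a well-linked sub-instance, then construct a routing structure inside it that uses each edge at most twice. First, solve the multi-commodity flow LP relaxation for $(G,\mset)$ and apply the Chekuri--Khanna--Shepherd well-linkedness decomposition, which, losing only a $\poly\log k$ factor, reduces the problem to a sub-instance $(G',\mset')$ whose terminal set is bandwidth-well-linked (for any partition $(X,\overline X)$ of the terminals, the minimum $X$--$\overline X$ cut in $G'$ has capacity $\Omega(\min\{|X|,|\overline X|\}/\poly\log k)$), with $|\mset'|=\Omega(\opt/\poly\log k)$. All subsequent work happens inside $(G',\mset')$.

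The core of the argument is the construction of a \emph{path-of-sets system} of length $h=\poly\log k$ and width $w=\Omega(|\mset'|/\poly\log k)$ inside $G'$: vertex-disjoint clusters $C_1,\ldots,C_h$, each internally well-linked with respect to a set of $w$ ``anchor'' vertices, together with $w$ vertex-disjoint paths linking the anchors of $C_i$ to those of $C_{i+1}$. Given such a system, a random assignment of sources to $C_1$-anchors and sinks to $C_h$-anchors, followed by internal rerouting inside each $C_i$, routes $\Omega(w/\poly\log k)$ demand pairs with congestion $2$: the factor $2$ arises because both a source-side and a sink-side flow may share edges inside each cluster, while the linking paths between consecutive clusters each carry only a single flow unit.

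To build the path-of-sets system, the plan is to iterate a Khandekar--Rao--Vazirani-style cut-matching game $h$ times. At each step, one either finds a sparse cut in the remaining graph that lets us peel off a large, internally well-linked cluster $C_i$ together with vertex-disjoint linkage paths to the rest of $G'$, or one obtains an expander-like matching certifying that a large well-linked core remains and the game can proceed. The critical bookkeeping is to restrict each iteration's embedded matching to a previously reserved ``unused'' portion of the edge set, so that across all $h$ iterations every edge of $G'$ is charged to at most two of the clusters and linking-path systems combined.

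\textbf{Main obstacle.} The main difficulty, and the source of the improvement over the prior congestion-$14$ algorithm, is precisely ensuring that this iterative construction uses every edge at most twice rather than $\Theta(\log k)$ times as in a naive cut-matching embedding. I expect the heart of the proof to be a potential-function argument showing that at every iteration enough slack capacity remains after reserving edges already used by previous clusters, so that the cut-matching game still terminates in $\poly\log k$ rounds and still produces a sufficiently well-linked cluster at each step, even under the strong edge-reuse restriction; the analysis must also balance the loss from the well-linkedness decomposition, from the random matching of terminals to anchors, and from rerouting within each cluster, so that the final bound on the number of routed pairs is still $\Omega(\opt/\poly\log k)$.
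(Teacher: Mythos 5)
Your first step (the multi-commodity flow LP plus the Chekuri--Khanna--Shepherd well-linked decomposition) matches the paper exactly. After that the routes diverge, and the proposal has a genuine gap in how the routing structure is \emph{used}. You propose to route by attaching sources to the anchors of $C_1$, sinks to the anchors of $C_h$, and pushing flow across the path-of-sets system with internal rerouting in each $C_i$. This only sends flow from the \emph{set} of sources to the \emph{set} of sinks; nothing ensures that the flow leaving $s_i$ arrives at its designated partner $t_i$ rather than at some other sink. Pairing up the correct endpoints is the hard part of \EDP, and it is why the paper converts its structure (the ``good crossbar'': $\gkrv=O(\log^2k)$ disjoint well-linked clusters together with $2k^*$ trees, each tree containing one terminal and one boundary edge of every cluster) into an \emph{embedded expander}: the Khandekar--Rao--Vazirani cut-matching game is played with one matching routed through each cluster, yielding an expander on the terminals embedded with congestion $2$, on which the Rao--Zhou algorithm then routes the actual demand pairs via vertex-disjoint paths. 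A path-of-sets system of length $\Omega(\log^2k)$ could serve the same purpose, but only by running the cut-matching game \emph{through} it; you instead place the cut-matching game in the construction phase and omit it from the routing phase, leaving the pairing step unaddressed.

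The second gap is in the construction and the congestion accounting. The paper does not achieve congestion $2$ by ``reserving a previously unused portion of the edge set'' at each iteration; that style of charging is essentially what produced congestion $14$ in the earlier algorithm, and the paper explicitly argues that running the cluster-building, tree-building, and terminal-connection steps separately accumulates congestion well above $2$. Instead it maintains a potential function over vertex clusterings and, in each phase, either extracts the entire crossbar with all three components built \emph{simultaneously} (so that every edge lies on at most two trees and every edge inside a cluster lies on at most one, leaving room for exactly one matching per cluster), or strictly decreases the potential via a $\partition$ or $\separate$ operation on a large cluster. Your plan correctly identifies edge reuse as the main obstacle, but the proposed remedy is the one the paper is specifically designed to avoid, and the sketch gives no mechanism for why the reserved-capacity bookkeeping would terminate with a structure of the required width while keeping every edge on at most two objects.
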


We now provide an overview of our  techniques and compare them to~\cite{EDP-old}. We denote by $\tset$ the set of vertices participating in the demand pairs, and we call them \emph{terminals}. 
As in previous work~\cite{CKS,RaoZhou,Andrews,EDP-old}, we use the notion of well-linkedness. Given a graph $G=(V,E)$ and a subset $\tset$ of vertices called terminals, we say that $G$ is $\alpha$-well linked for the terminals, iff for any partition $(A,B)$ of $V$, 
$|E(A,B)|\geq \alpha\cdot \min\set {|A\cap \tset|,|B\cap \tset|}$. Chekuri, Khanna and Shepherd~\cite{ANF,CKS} have shown an efficient algorithm, that, given any \EDP instance $(G,\mset)$, partitions it into a number of sub-instances $(G_1,\mset_1),\ldots,(G_{\ell},\mset_{\ell})$, such that, on the one hand, each instance $G_i$ is $1$-well-linked for the set of terminals participating in $\mset_i$, and on the other hand, the sum of the values of the optimal fractional solutions in all these instances is $\Omega(\opt/\log^2k)$. Therefore, it is enough to find a polylogarithmic approximation with congestion $2$ in each such sub-instance separately. From now on we assume that we are given an instance $(G,\mset)$, where $G$ is $1$-well-linked for the set $\tset$ of terminals.

Chekuri, Khanna and Shepherd~\cite{ANF,CKS,CKS-planar1} have suggested the following high-level approach to solving \EDP instances $(G,\mset)$, where $G$ is well-linked for the terminals. They start by defining a graph called a \emph{crossbar}: a graph $H$ with a subset $Y\sse V(H)$ of vertices is called a crossbar with congestion $c$, iff any matching over the vertices of $Y$ can be routed with congestion at most $c$ in $H$. They then note that if we could show an algorithm that finds a crossbar $(H,Y)$ in graph $G$, with $|Y|=k/\poly\log k$, and constant congestion, then we can obtain a polylogarithmic approximation to \EDPwC with constant congestion. 
An algorithm for constructing such a crossbar with a constant congestion follows from the recent work of~\cite{EDP-old}.

We follow this approach, and define a structure that we call a \emph{good crossbar}, which gives slightly stronger properties than the general crossbar defined above. For any subset $S\sse V$ of vertices, let $\out(S)$ denote the set of edges with one endpoint in $S$ and one endpoint in $V\setminus S$. Informally, we say that a vertex set $S$ is $\alpha$-well-linked for a subset $\Gamma\sse \out(S)$ of edges, iff the graph $G[S]$ is $\alpha$-well-linked for the set $\Gamma$ of terminals. Formally, we require that for any partition $(A,B)$ of $S$, $|E(A,B)|\geq \alpha\cdot \min\set{|\out(A)\cap \Gamma|,|\out(B)\cap \Gamma|}$.

A good crossbar consists of three parts. The first part is a \emph{good family of vertex subsets} $\fset=\set{S_1,\ldots, S_{\gamma}}$, where $\gamma=O(\log^2n)$. The sets $S_1,\ldots,S_{\gamma}$ are all vertex-disjoint, and only contain non-terminal vertices. Each set $S_j$ is associated with a subset $\Gamma^*_j\sse \out(S_j)$ of $k^*=k/\poly\log k$ edges, such that $S_j$ is $1$-well-linked for $\Gamma^*_j$. 
The second part of the good crossbar is a collection $\mset^*\sse \mset$ of $k^*/2$ demand pairs, and the third part is a collection $\tau={T_1,\ldots,T_{k^*}}$ of $k^*$ trees. Let $\tset^*$ denote the set of all terminals participating in the demand pairs in $\mset^*$. Then for each $1\leq i\leq k^*$, the tree $T_i$ contains a distinct terminal $t_i\in \tset^*$, and a distinct edge $e_{i,j}\in \Gamma^*_j$, for all $1\leq j\leq \gamma$. In other words, for each set $S_j\in \fset$, we can view $\Gamma^*_j=\set{e_{1,j},\ldots,e_{k^*,j}}$, where $e_{i,j}\in E(T_i)$ for all $1\leq i\leq k^*$. (See Figure~\ref{fig:crossbar}.)

\begin{figure}[h]
\scalebox{0.6}{\includegraphics{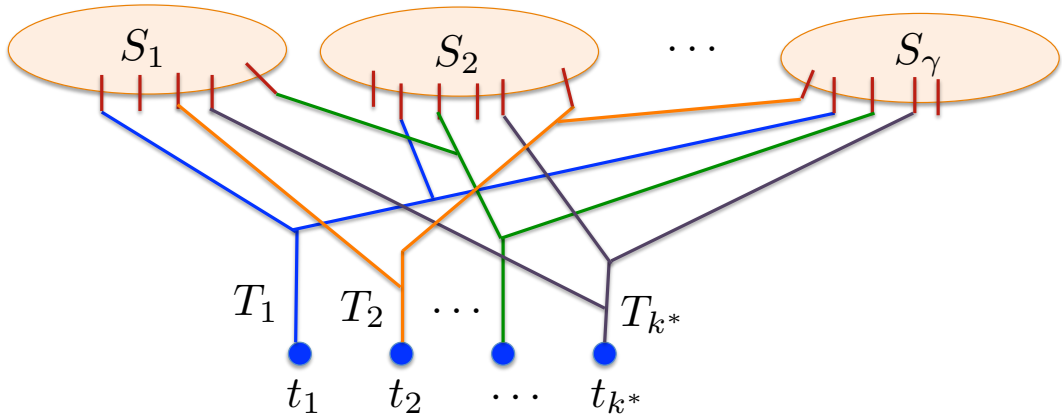}} \caption{A good crossbar}\label{fig:crossbar}
\end{figure}

Given a good crossbar $(\fset,\mset^*,\tau)$, we say that the congestion of this crossbar is $c$ iff every edge of $G$ appears in at most $c$ trees in $\tau$, and every edge of $G$ with both endpoints in the same set $S_j\in \fset$ appears in at most $c-1$ such trees. Chuzhoy~\cite{EDP-old} has implicitly defined a good crossbar, and has shown that, given a good crossbar that causes congestion $c$, there is an efficient randomized algorithm to route $\Omega(k^*/\poly\log k)$ demand pairs with congestion at most $c$ in graph $G$. This algorithm uses the cut-matching game of Khandekar, Rao and Vazirani~\cite{KRV} to embed an expander into $G$, and then finds a routing in this expander using the algorithm of Rao and Zhou~\cite{RaoZhou}.
She has also shown an efficient algorithm for constructing a good crossbar with congestion $14$, thus obtaining an $O(\poly\log k)$-approximation with congestion $14$ for \EDPwC.

We follow a similar approach here, except that we construct a good crossbar with congestion $2$. Combining this with the result of \cite{EDP-old}, we obtain an $O(\poly\log k)$-approximation to \EDPwC with congestion $2$. We now provide more details about the construction of the good crossbar of~\cite{EDP-old}, and the barriers that need to be overcome to reduce the congestion to $2$. The algorithm of~\cite{EDP-old} consists of three steps. In the first step, a good family $\fset=\set{S_1,\ldots,S_{\gamma}}$ of vertex subsets is constructed, and for each $1\leq j\leq \gamma$, a subset $\Gamma^*_j\sse \out(S_j)$ is selected. Additionally, for each set $S_j\in \fset$, there is a set $\pset_j$ of $k^*$ edge-disjoint paths in graph $G$, connecting every edge in $\Gamma^*_j$ to a distinct terminal in $\tset$.
In the second step, we construct a family $\tau'=\set{T_1',\ldots,T_{k^*}'}$ of trees, where every tree $T_i\in \tau'$ contains a distinct terminals $t_i\in \tset$, and a distinct edge $e_{i,j}\in \Gamma^*_j$ for each $1\leq j\leq \gamma$. However, the terminals in set $\tset'=\set{t_i\mid T_i\in \tau'}$ do not necessarily form source-sink pairs in $\mset$. In order to construct the trees in $\tau'$,  consider the graph $G'$ obtained from $G$ by contracting every set $S_j\in \fset$ into a super-node $v_j$. Then the problem of finding the set $\tau'$ of trees is similar to the problem of packing Steiner trees in $G'$. Moreover, the collections $\set{\pset_j}_{j=1}^{\gamma}$ of paths can be interpreted as a fractional solution of value $\Omega(k/\poly\log k)$ to this problem. We can then round this solution to obtain the family $\tau'$ of trees, using known algorithms for packing Steiner trees. The fact that we need to deal with the clusters $S_j$ instead of the super-nodes $v_j$ makes the problem more challenging technically, but due to the fact that the sets $S_j$ are well-linked for the sets $\Gamma^*_j$ of edges, the problem is still tractable. The well-linkedness of the sets $S_j\in \fset$ is exploited in this algorithm, by routing several matchings between the edges of $\Gamma^*_j$ across each such set $S_j$, in order to simulate the super-node $v_j$.
Finally, in the third step, we select a subset $\mset^*\sse \mset$ of $k^*/2$ demand pairs, and connect all terminals participating in the pairs in $\mset^*$ to the terminals in set $\tset'$. The union of these new paths with the trees in $\tau'$ gives the final collection $\tau$ of trees.

There are several factors contributing to the accumulation of congestion in this construction. We mention the main two barriers to reducing the congestion to $2$ here. The first problem is that, given the good family $\fset$ of vertex subsets, it is possible that for some set $S_j\in \fset$ the paths in $\pset_j$ may contain edges lying in other clusters $S_{j'}\in \fset$. Moreover, these paths do not necessarily enter and leave the set $S_{j'}$ through the edges in $\tset^*_{j'}$. 
The construction of the trees in $\tau'$ uses each set $S_{j'}\in \fset$ to route several matchings between the edges of $\Gamma^*_{j'}$. If the edges of $G[S_{j'}]$ additionally participate in the paths in $\set{\pset_j}_{j=1}^{\gamma}$, then this leads to accumulation of congestion. Indeed, an edge of graph $G$ may appear in up to $8$ trees of $\tau'$ in the construction of~\cite{EDP-old}. Ideally, we would like to construct a good family $\fset$ of vertex subsets, where for each set $S_j\in \fset$, the paths in $\pset_j$ do not contain any vertices lying in other subsets $S_{j'}\in \fset$. However, it is not clear whether it is possible to achieve such strong properties.

The second difficulty is that step 2 and step 3 are executed separately, each contributing to the total congestion. It appears that one has to incur a congestion of at least $2$ when constructing the set $\tau'$ of trees, using current techniques. If the terminals in set $\tset'$ do not form demand pairs in $\mset$, then we need to additionally select a subset $\mset^*\sse \mset$ of the demand pairs, and to route the terminals participating in $\mset^*$ to the terminals of $\tset'$, thus increasing the congestion beyond $2$.

In order to find a good family $\fset$ of vertex subsets, the algorithm of~\cite{EDP-old} performs a number of iterations. In each iteration we start with what is called a legal contracted graph $G'$. This graph is associated with a collection $\cset$ of disjoint subsets of non-terminal vertices of $G$, such that each set $C\in \cset$ is well-linked for $\out(C)$, and $G'$ is obtained from $G$ by contracting every cluster $C\in \cset$ into a super-node. Additionally, we require that for each cluster $C\in \cset$, $|\out(C)|$ is small. We call such a clustering $\cset$ a \emph{good clustering}. At the beginning of the algorithm, $\cset=\emptyset$ and $G'=G$. In every iteration, given a legal contracted graph $G'$, the algorithm either computes a good family $\fset$ of vertex subsets, or produces a new legal contracted graph $G''$, containing strictly fewer vertices than $G'$. This guarantees that after $n$ iterations, the algorithm produces a good family $\fset$ of vertex subsets.

In order to overcome the two problems mentioned above and avoid accumulating the congestion, we combine all three steps of the algorithm together. 
We define a potential function $\phi$ over collections $\cset$ of disjoint non-terminal vertex subsets, where $\phi(\cset)$ roughly measures the number of edges in the graph obtained from $G$ by contracting every cluster in $\cset$ into a super-node. The potential function $\phi$ has additional useful properties, that allow us to perform a number of standard operations (such as the well-linked decomposition) on the clusters of $\cset$, without increasing the potential value.

Our algorithm also consists of a number of iterations (that we call phases). In each such phase, we start with some legal contracted graph $G'$ and a corresponding good clustering $\cset'$. We then either construct a good crossbar, or produce a new good clustering $\cset''$ with $\phi(\cset'')<\phi(\cset')$, together with the corresponding new legal contracted graph $G''$. Each phase is executed as follows. We start with some clustering $\cset^*$ of the vertices of $G$, where $\phi(\cset^*)<\phi(\cset')$, but $\cset^*$ is not necessarily a good clustering. We then perform a number of iterations. In each iteration, we select a family $\fset=\set{S_1,\ldots,S_{\gamma}}$ of disjoint subsets of non-terminal vertices, that we treat as a potential good family vertex subsets, and we try to find a subset $\mset^*\sse \mset$ of $k^*$ demand pairs and a family $\tau$ of trees, to complete the construction of a good crossbar. If we do not succeed in constructing a good crossbar in the current iteration, then we use the family $\fset$ of vertex subsets to refine the current clustering $\cset^*$, such that the potential of the new clustering goes down by a significant amount. This ensures that after polynomially-many iterations, we will succeed in either constructing a good crossbar, or a good clustering $\cset^*$ with $\phi(\cset^*)<\phi(\cset')$.

This combination of all three steps of the algorithm of~\cite{EDP-old} appears necessary to overcome the two barriers described above. For example, it is possible that the family $\fset$ of vertex subsets is a good family, but we are still unable to extend it to a good crossbar (for example because of the problem of the paths in $\pset_j$ using the edges of $G[S_{j'}]$, as described above). Still, we will be able to make progress in such cases by refining the current clustering $\cset^*$. Similarly, we construct the trees and connect the terminals participating in pairs in $\mset^*$ to them simultaneously, to avoid accumulating congestion. Again, whenever we are unable to do so, we will  be able to refine the current clustering $\cset^*$.

{\bf Organization.} We start with preliminaries in Section~\ref{sec: Prelims} and provide an overview of our algorithm in Section~\ref{sec: alg overview}. We develop machinery to analyze vertex clusterings in Section~\ref{sec: clusterings}, and complete the algorithm description in Sections~\ref{sec: Alg} and~\ref{sec: proof of iteration theorem}. For convenience, a list of parameters is provided in Section~\ref{sec: appendix-params}, and the known lower bounds on the integrality gap of the multi-commodity flow relaxation for \EDP are provided in Section~\ref{sec: gap} of the Appendix.

\label{--------------------------------------------------sec: Preliminaries------------------------------------------------}
\section{Preliminaries}\label{sec: Prelims}


We assume that we are given an undirected  $n$-vertex graph $G=(V,E)$, and a set $\mset=\set{(s_1,t_1),\ldots,(s_k,t_k)}$ of $k$ source-sink pairs, that we also call demand pairs. We denote by $\tset$ the set of vertices that participate in pairs in $\mset$, and we call them \emph{terminals}. Let $\opt$ denote the maximum number of demand pairs that can be simultaneously routed via edge-disjoint paths. Given any subset $\mset'\sse \mset$ of the demand pairs, we denote by $\tset(\mset')\sse \tset$ the subset of terminals participating in the pairs in $\mset'$.

Using a standard transformation, we can assume w.l.o.g. that every terminal $t\in \tset$ participates in exactly one source-sink pair: otherwise, for each demand pair in which $t$ participates, we can add a new terminal to the graph, that will replace $t$ in the demand pair, and connect this new terminal to $t$. Similarly, using standard transformations, we can assume that the degree of every terminal is exactly $1$, and the degree of every non-terminal vertex is at most $4$. In order to achieve the latter property, we replace every vertex $v$ whose degree $d_v>4$ with a $d_v\times d_v$ grid, and connect the edges incident on $v$ to the vertices of the first row of the grid. It is easy to verify that these transformations do not affect the solution value. From now on we assume that the degree of every terminal is $1$, the degree of every non-terminal vertex is at most $4$, and every terminal participates in at most one demand pair.

For any subset $S\sse V$ of vertices, we denote by $\out_G(S)=E_G(S,V\setminus S)$, and by $E_G(S)$ the subset of edges with both endpoints in $S$, omitting the subscript $G$ when clear from context. Throughout the paper, we say that a random event succeeds w.h.p., if the probability of its success is $(1-1/\poly(n))$. All logarithms are to the base of $2$.

Let $\pset$ be any collection of paths in graph $G$. We say that paths in $\pset$ cause congestion $\eta$ in $G$, iff for every edge $e\in E$, at most $\eta$ paths in $\pset$ contain $e$.
Given any pair $(E_1,E_2)$ of subsets of edges, we denote by $F:E_1\connect_{\eta}E_2$ the flow where every edge in $E_1$ sends one flow unit to the edges in $E_2$, and the congestion due to the flow $F$ is at most $\eta$. If additionally every edge in $E_2$ receives at most one flow unit, then we denote this flow by $F:E_1\sconnect_{\eta} E_2$. Similarly, given a set $\pset$ of paths connecting the edges of $E_1$ to the edges of $E_2$, we denote $\pset:E_1\connect_{\eta} E_2$ iff $\pset=\set{P_e\mid e\in E_1}$, where $e$ is the first edge on $P_e$, and the total congestion caused by the paths in $\pset$ is at most $\eta$. If every edge of $E_2$ has at most one path terminating at it, then we denote $\pset:E_1\sconnect_{\eta}E_2$. We use a similar notation for flows and path sets connecting subsets of vertices to each other, or a subset vertices with a subset of edges. 

Given a subset $S$ of vertices and two subsets $E_1,E_2\sse \out(S)$ of edges, we say that the flow $F:E_1\connect_{\eta}E_2$ is \emph{contained in $S$} iff every flow-path is completely contained in $G[S]$, except for its first and last edges, that belong to $\out(S)$. Similarly, we say that a set $\pset:E_1\connect_{\eta}E_2$ of paths is contained in $S$ iff all inner edges on every path in $\pset$ belong to $G[S]$.



{\bf Sparsest Cut and the Flow-Cut Gap.} 
Suppose we are given a graph $G=(V,E)$, and a subset $\tset\sse V$ of $k$ terminals. 
The sparsity of a cut $(S,\nots)$ in $G$ is $\Phi(S)=\frac{|E(S,\nots)|}{\min\set{|S\cap \tset|, |\nots\cap \tset|}}$, and the value of the sparsest cut in $G$ is defined to be:
$\Phi(G)=\min_{S\subset V}\set{\Phi(S)}$.
The goal of the sparsest cut problem is, given an input graph $G$ and a set $\tset$ of terminals, to find a cut of minimum sparsity. Arora, Rao and Vazirani~\cite{ARV} have shown an $O(\sqrt {\log k})$-approximation algorithm for the sparsest cut problem. We denote this algorithm by \algsc, and its approximation factor by $\alphasc(k)=O(\sqrt{\log k})$.

A problem dual to sparsest cut is the maximum concurrent flow problem. For the above definition of the sparsest cut problem, the corresponding variation of the concurrent flow problem asks to find the maximum value $\lambda$, such that every pair of terminals can send $\lambda/k$ flow units to each other simultaneously with no congestion. The flow-cut gap is the maximum ratio, in any graph, between the value of the minimum sparsest cut and the maximum value $\lambda$ of concurrent flow. The value of the flow-cut gap in undirected graphs, that we denote by $\beta(k)$ throughout the paper, is $\Theta(\log k)$~\cite{LR, GVY,LLR,Aumann-Rabani}. Therefore, if $\Phi(G)=\alpha$, then every pair of terminals can send $\frac{\alpha}{k\beta(k)}$ flow units to each other with no congestion.
Equivalently, every pair of terminals can send $1/k$ flow units to each other with congestion at most $\beta(k)/\alpha$. Moreover, any matching on the set $\tset$ of terminals can be fractionally routed with congestion at most $2\beta(k)/\alpha$.

{\bf Well-Linkedness.}
The notion of well-linkedness has been widely used in graph decomposition and routing, see e.g.~~\cite{CKS,RaoZhou,Andrews}. While the main idea is similar, the definition details differ from paper to paper. Our definition of well-linkedness is similar to that of~\cite{EDP-old}.


\begin{definition}
Let $S$ be any subset of vertices of a graph $G$. For any integer $k_1$, for any $0<\alpha\leq 1$, we say that set $S$ is $(k_1,\alpha)$-well-linked iff for any pair $T_1,T_2\sse \out(S)$ of disjoint subsets of edges, with $|T_1|+|T_2|\leq k_1$, for any partition $(X,Y)$ of $S$ with $T_1\sse \out(X)$ and $T_2\sse \out(Y)$, $|E_G(X,Y)|\geq \alpha\cdot\min\set{|T_1|,|T_2|}$. \end{definition}

Suppose a set $S$ is not $(k_1,\alpha)$-well-linked. We say that a partition $(X,Y)$ of $S$ is a \emph{$(k_1,\alpha)$-violating partition}, iff there are two subsets $T_1\sse \out(X)\cap \out(S)$, $T_2\sse \out(Y)\cap \out(S)$ of edges with $|T_1|+|T_2|\leq k_1$, and $|E_G(X,Y)|<\alpha\cdot\min\set{|T_1|,|T_2|}$.

\begin{definition} Given a graph $G$, a subset $S$ of its vertices, a parameter $\alpha>0$, and a subset $\Gamma\sse \out(S)$ of edges, we say that $S$ is $\alpha$-well-linked for $\Gamma$, iff for any partition $(A,B)$ of $S$, $|E(A,B)|\geq \alpha\cdot \min\set{|\Gamma\cap \out(A)|,|\Gamma\cap \out(B)|}$. We say that the set $S$ is $\alpha$-well-linked iff it is $\alpha$-well-linked for the set $\out(S)$ of edges.
\end{definition}


Notice that if $|\out(S)|\leq k_1$, then $S$ is $\alpha$-well-linked iff it is $(k_1,\alpha)$-well-linked.

Similarly, if we are given a graph $G$ and a subset $\tset$ of its vertices called terminals, we say that $G$ is $\alpha$-well linked for $\tset$, iff for any partition $(A,B)$ of $V(G)$, $|E(A,B)|\geq \alpha\cdot \min\set{|A\cap \tset|,|B\cap \tset|}$.
Notice that if $G$ is $\alpha$-well-linked for $\tset$, then for any pair $(\tset_1,\tset_2)$ of subsets of $\tset$ with $|\tset_1|=|\tset_2|$, we can efficiently find a collection $\pset$ of paths, $\pset:\tset_1\sconnect_{\lceil 1/\alpha\rceil}\tset_2$. This follows from the min-cut max-flow theorem and the integrality of flow.

Given a graph $G$, a subset $S\sse V(G)$ of vertices, and a subset $\Gamma\sse \out(S)$ of edges, we define an instance $\SC(G,S,\Gamma)$ of the sparsest cut problem as follows. First, we sub-divide every edge $e\in \Gamma$ by a new vertex $t_e$, and we let $\tset(\Gamma)=\set{t_e\mid e\in \Gamma}$. Let $G_S(\Gamma)$ be the sub-graph of the resulting graph, induced by $S\cup \tset(\Gamma)$. The instance $\SC(G,S,\Gamma)$ of the sparsest cut problem is defined over the graph $G_S(\Gamma)$, where the vertices of $\tset(\Gamma)$ serve as terminals. Observe that for all $\alpha\leq 1$, the value of the sparsest cut in $\SC(G,S,\Gamma)$ is at least $\alpha$ iff set $S$ is $\alpha$-well-linked with respect to $\Gamma$ in graph $G$. If $\Gamma=\out_G(S)$, then we will denote the corresponding instance of the sparsest cut problem by $\SC(G,S)$.

{\bf The Grouping Technique.}
The grouping technique was first introduced by Chekuri, Khanna and Shepherd~\cite{ANF}, and has since been widely used in algorithms for network routing~\cite{CKS, RaoZhou, Andrews,EDP-old}, to boost  the network connectivity and well-linkedness parameters. 
We start with a simple standard grouping procedure, summarized in the following theorem. A proof of the following theorem can be found e.g. in~\cite{ANF}.

\begin{theorem}\label{thm: grouping}
Suppose we are given a connected graph $G=(V,E)$, with weights $w(v)$ on vertices $v\in V$, and a parameter $p$, such that for each $v\in V$, $0\leq w(v)\leq p$, and $\sum_{v\in V}w(v)\geq p$. Then we can efficiently find a partition $\gset$ of $V$, and for each group $U\in \gset$, find a tree $T_U\sse G$ containing all vertices of $U$, such that the trees $\set{T_U}_{U\in \gset}$ are edge-disjoint,  and for each $U\in \gset$, $p\leq \sum_{v\in U}w(v)\leq 3p$.
\end{theorem}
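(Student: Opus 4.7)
The plan is to reduce the problem to a spanning tree and apply a bottom-up grouping procedure. First, I would compute a spanning tree $T$ of $G$ (possible since $G$ is connected) and root it at an arbitrary vertex $r$. The algorithm processes the vertices of $T$ in post-order. At each vertex $v$, I maintain a pending set $B_v$ of currently ungrouped vertices, together with a subtree $S_v$ of $T$ contained in $v$'s subtree and containing both $v$ and $B_v$. Initially I set $B_v \leftarrow \{v\}$ and $S_v \leftarrow \{v\}$. I then iterate over the children of $v$; for each child $c$ that passed up a non-empty pending $(B_c, S_c)$, I add the tree edge $(v,c)$ and all of $S_c$ to $S_v$, and absorb $B_c$ into $B_v$. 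Whenever $w(B_v)$ first becomes at least $p$ during this process, I finalize the current $(B_v, S_v)$ as a group $U$ with tree $T_U$, and reset $B_v \leftarrow \emptyset$, $S_v \leftarrow \{v\}$, so that $v$ remains permanently assigned to $U$ but can still serve as a passing connector for subsequent merges. After all children of $v$ have been processed, if $B_v$ is non-empty it is passed up to $v$'s parent together with $S_v$.

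Two facts follow by straightforward induction on the post-order traversal. First, the pending weight is always strictly less than $p$ at the moment it is passed up a level, so any merge that triggers a finalization produces a group of weight in the interval $[p, 2p)$. Second, every tree edge of $T$ is used in at most one of the trees we construct: an edge $(v,c)$ is added at most once, namely when $c$'s non-empty pending is merged into $v$'s, and once committed to either a finalized $T_U$ or the current pending $S_v$ it is never touched again. This establishes edge-disjointness of the resulting trees, and by construction each $T_U$ contains all vertices of the corresponding group $U$.

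The remaining issue, which is the main subtlety, is a possible leftover pending $(B_r, S_r)$ at the root with $0 < w(B_r) < p$. Since $\sum_v w(v) \geq p$, at least one finalized group $U^*$ must exist. If $T_{U^*}$ and $S_r$ share a vertex, I absorb $B_r$ into $U^*$ directly through that vertex. Otherwise, consider the unique $T$-path from $r$ to any vertex of $T_{U^*}$: walking along it we must at some step move from a vertex $u_i \in S_r$ to a vertex $u_{i+1} \notin S_r$. By the edge-classification invariant above, the edge $(u_i, u_{i+1})$ cannot lie in $S_r$, so it is either assigned to some finalized $T_U$ (in which case $u_i \in T_U$ gives a shared vertex with $S_r$ for merging), or it is unused (in which case $u_{i+1}$ lies in some finalized group $U'$, and the edge is free to serve as a bridge). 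Either way, $B_r$ is merged into some finalized group whose new weight is at least $p$, inherited from the original group, and strictly less than $2p + p = 3p$, as required. Thus the output partition $\gset$ satisfies all the stated properties; the delicate point is precisely this edge-classification invariant that justifies the bridging step, and everything else follows directly from the post-order processing.
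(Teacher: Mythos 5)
Your proof is correct, and it is essentially the standard spanning-tree, bottom-up grouping argument that the paper relies on (the paper itself defers this proof to Chekuri--Khanna--Shepherd and proves only the degree-bounded variant, Theorem~\ref{thm: grouping simple degree 4}, by repeatedly peeling off minimal heavy subtrees). Your single-pass pending-set bookkeeping and the explicit bridging step for the leftover at the root both check out --- in particular the invariants $B_v\sse V(S_v)$ and ``each tree edge is committed at most once'' do give a valid merge target of total weight in $[p,3p)$ --- so this is a correct, self-contained rendering of the same approach.
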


We also use a more sophisticated grouping technique, due to Chekuri, Khanna and Shepherd~\cite{ANF}. The next two theorems summarize the specific settings in which the technique is used. Since these settings are slightly different from the setting of~\cite{ANF}, we provide their proofs for completeness in Section~\ref{sec: proofs for groupings} in the Appendix. The proofs closely follow the arguments of~\cite{ANF}.

\begin{theorem}\label{thm: grouping-many-sets}
Suppose we are given a connected graph $G=(V,E)$ with maximum vertex degree at most $4$, and $r$ subsets $\tset_1,\ldots,\tset_r$ of vertices called terminals (we do not require that they are disjoint). Assume further that $G$ is $\alpha$-well-linked for the set $\bigcup_{j=1}^{r}\tset_j$ of terminals, for some $\alpha<1$, $|\tset_1|=k_1$, and that for each $1< j\leq r$, $|\tset_j|=k_2$,
 where $k_1,k_2\geq \Omega(r^2\log r/\alpha)$. Then there is an efficient randomized algorithm that  w.h.p. computes, for each $1\leq j\leq r$, a subset $\tset'_j\sse \tset_j$ of terminals, such that all sets $\set{\tset_j'}_{j=1}^r$ are mutually disjoint, $|\tset_j'|=\Omega(\alpha/r^2)\cdot |\tset_j|$ for all $1\leq j\leq r$, and $G$ is $1$-well-linked for $\bigcup_{j=1}^r\tset_j'$.
\end{theorem}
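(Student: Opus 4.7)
The plan follows the grouping strategy of Chekuri--Khanna--Shepherd (which underlies Theorem~\ref{thm: grouping}), adapted to the multi-set setting.

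First, to make the terminal sets disjoint, I would randomly assign each vertex $v\in \bigcup_j \tset_j$ to a single one of the sets containing it (uniformly at random among them), producing disjoint sets $\ttset_1,\ldots,\ttset_r$ with $|\ttset_j|\geq |\tset_j|/(2r)$ w.h.p.\ by a Chernoff bound, using $|\tset_j|=\Omega(r\log r/\alpha)$. Since $\bigcup_j \ttset_j=\bigcup_j \tset_j$ as vertex sets, $G$ remains $\alpha$-well-linked for this union.

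Second, set $p=C\,r^2\log r/\alpha$ for a sufficiently large constant $C$, and apply Theorem~\ref{thm: grouping} to $G$ with vertex weights $w(v)=\mathbf{1}[v\in\bigcup_j\tset_j]$ and this parameter. This yields a partition $\gset$ together with edge-disjoint trees $\{T_U\}_{U\in \gset}$, where each $w(U)\in[p,3p]$, and the number of groups is $m=\Theta(|\bigcup_j\tset_j|/p)$. A Chernoff-plus-union-bound argument over the $r$ sets and $m\leq r k_1$ groups (this is where the $\log r$ in $p$ is essential) shows that w.h.p.\ $|U\cap \ttset_j|=\Omega(p/r^2)$ for every group $U$ and every $j$. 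For each pair $(U,j)$ pick one representative $\tau_{U,j}\in U\cap \ttset_j$ uniformly at random and set $\tset_j'=\{\tau_{U,j}\mid U\in \gset\}$. Disjointness is immediate from $\tset_j'\sse \ttset_j$, and $|\tset_j'|=m=\Omega(\alpha/r^2)\cdot |\tset_j|$.

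Third, I would verify that $G$ is $1$-well-linked for $T=\bigcup_j \tset_j'$. Fix any partition $(A,B)$ of $V(G)$ and call a group $U$ \emph{pure} if $U\sse A$ or $U\sse B$, and \emph{split} otherwise; let $\mu_A,\mu_B,\mu_C$ count pure-$A$, pure-$B$, and split groups. Two bounds are immediate: each split tree contributes at least one edge of $T_U$ to $E(A,B)$ and the trees are edge-disjoint, so $|E(A,B)|\geq \mu_C$; each pure group contributes $\Omega(p)$ terminals of $\bigcup_j\tset_j$ to its side, so $\alpha$-well-linkedness yields $|E(A,B)|\geq \alpha p\cdot\min(\mu_A,\mu_B)$. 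Since representatives within each group were chosen uniformly and independently, a separate concentration argument over the random sampling bounds the number of split-group representatives that land ``on the wrong side'' of the cut by the terminal imbalance induced on those split groups, which in turn is dominated by $|E(A,B)|/\alpha$. Combining these estimates with $p\geq C r^2/\alpha$ gives $|E(A,B)|\geq \min(|A\cap T|,|B\cap T|)$.

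The main obstacle I anticipate is Step 3: a single split group can contribute up to $r$ representatives to $T$ while its tree contributes as few as one edge to $E(A,B)$, so the naive bound $|E(A,B)|\geq \mu_C$ is too weak by a factor of $r$. The resolution is to exploit the \emph{random} choice of representatives and show, with high probability over the sampling and for every cut, that the representatives track the terminal distribution inside split groups, so that the representative imbalance is controlled by the terminal imbalance (hence by $|E(A,B)|/\alpha$). The union bound over cuts reduces to checking only minimum violating partitions. This concentration, together with the $r$-way union bound in Step 2, is what accounts for the $\log r$ factor in the hypothesis on $k_1,k_2$.
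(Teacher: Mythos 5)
There are two genuine gaps here, and they are not repairable along the lines you sketch. First, the key claim of your Step~2 --- that w.h.p.\ \emph{every} group $U$ satisfies $|U\cap \ttset_j|=\Omega(p/r^2)$ for \emph{every} colour $j$ --- is false. The partition $\gset$ produced by Theorem~\ref{thm: grouping} is built deterministically from a spanning tree and is oblivious to colours; the only randomness in your construction is the tie-breaking of Step~1, which does nothing when the sets $\tset_j$ are already disjoint (which the theorem allows). Taking $G$ to be a long path with the $\tset_1$-terminals on the left half and the $\tset_2$-terminals on the right half, almost every group is monochromatic, so no group supplies a representative of every colour and your sets $\tset_j'$ collapse. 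The paper avoids this by selecting at most one representative \emph{in total} per cluster: each cluster first reserves at most one candidate of each colour and then picks a single one of its $\le r$ candidates uniformly at random, so each colour claims an $\Omega(1/r)$ fraction of the clusters rather than demanding a presence in all of them.

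Second, Step~3 is not a proof, and the difficulty you flag as ``the main obstacle'' is exactly the part that cannot be handled by sampling. A split group contributes up to $r$ representatives against a single tree edge, and even within one split group the number of its terminals on the minority side of the cut can be as large as $|E(A,B)|/\alpha$; $\alpha$-well-linkedness only ever yields $|E(A,B)|\ge \alpha\cdot\min(\cdot,\cdot)$, and recovering that lost factor of $\alpha$ is the entire point of the grouping technique. A concentration bound over the random choice of representatives cannot be union-bounded over the exponentially many cuts (and ``minimum violating partitions'' are not few). The paper's route is deterministic: with exactly one representative per edge-disjoint connected cluster, each representative spreads its unit of flow over $q\ge 4/\alpha$ terminals of its own cluster with congestion $1/2$ (this is the \emph{pseudo-center} property, which itself requires the tagging-and-merging machinery of Lemma~3.13/3.14 of~\cite{ANF}, another ingredient your proposal omits), the spread-out demands are routed between clusters using $\alpha$-well-linkedness scaled down by $q$, and a parity argument on cut edges brings the congestion down to $1$ (Claim~\ref{claim: 1-edge-wl}). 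Your scheme, with $r$ representatives per group, already incurs congestion $\Theta(r)$ in the intra-group spreading step, so even that flow argument is unavailable to you. As a minor further point, your count $|\tset_j'|=m=\Theta\bigl((k_1+(r-1)k_2)/p\bigr)$ with $p=\Theta(r^2\log r/\alpha)$ falls a $\log r$ factor short of the claimed $\Omega(\alpha/r^2)\cdot|\tset_j|$ when $k_1\gg r k_2\log r$.
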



\begin{theorem}\label{thm: grouping-advanced}
Suppose we are given a connected graph $G=(V,E)$ with maximum vertex degree at most $4$, two subsets $\tset_1,\tset_2$ of vertices called terminals, and a perfect matching $\mset$ over the terminals of $\tset_1$. Assume further that $|\tset_1|=k_1$, $|\tset_2|=k_2$, where $k_1,k_2\geq 100/\alpha$,
and $G$ is $\alpha$-well-linked for the set $\tset_1\cup \tset_2$ of terminals, for some $\alpha<1$. Then there is an efficient algorithm, that 
either computes a routing of a subset $\mset'\sse \mset$ of $\Omega(\alpha k_1)$ pairs on edge-disjoint paths, or
returns two {\bf disjoint} subsets $\tset_1'\sse \tset_1$, $\tset_2'\sse \tset_2$, so that $|\tset_1'|=\Omega( \alpha k_1)$, $|\tset_2'|=\Omega(\alpha k_2)$,  $G$ is $1$-well-linked for $\tset_1'\cup \tset_2'$, and for every pair $(s,t)\in \mset$, either both $s$ and $t$ belong to $\tset_1'$, or neither of them does.
\end{theorem}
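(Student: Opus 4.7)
I would prove this theorem by running an iterative well-linked decomposition procedure, adapted from the standard ANF-style approach, that is simultaneously aware of the partition $(\tset_1,\tset_2)$ and the matching $\mset$. The algorithm maintains three objects: a current subset $\tset_1^{*}\sse\tset_1$ that is always a union of complete pairs from $\mset$, a current subset $\tset_2^{*}\sse\tset_2$ disjoint from $\tset_1^{*}$, and a collection $\pset$ of edge-disjoint paths each routing a pair of $\mset$, with $E(\pset)$ removed from the working graph $G^{*}=G\setminus E(\pset)$. Initialization sets $\tset_1^{*}=\tset_1$, $\tset_2^{*}=\tset_2$ (possibly after a first disjointness pass in the style of Theorem~\ref{thm: grouping-many-sets} with $r=2$) and $\pset=\emptyset$.

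In each iteration, I would invoke \algsc to approximately find the sparsest cut $(A,B)$ of $G^{*}$ with respect to the terminal set $\tset_1^{*}\cup\tset_2^{*}$. If its sparsity is above a suitable constant threshold (after absorbing the $\alphasc(k)$ approximation slack into constants), then $G$ itself is already $1$-well-linked for $\tset_1^{*}\cup\tset_2^{*}$, and I output option~(b). Otherwise, let $(A,B)$ be the returned sparse cut with, WLOG, the fewer terminals on the $A$-side, and let $p$ be the number of $\mset$-pairs split by it.

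I would then case-split. In the small-$p$ case, where $p$ is small compared to $|A\cap(\tset_1^{*}\cup\tset_2^{*})|$, I drop $A\cap\tset_1^{*}$ and $A\cap\tset_2^{*}$ from the corresponding sets, and additionally drop the $B$-side endpoints of the $p$ split pairs to restore the invariant that $\tset_1^{*}$ is a union of complete pairs. The standard well-linked decomposition charging (each removed terminal is charged $O(1/\alpha)$ cut edges) bounds the total loss across all iterations by $O(\alpha)\cdot(k_1+k_2)$, ultimately leaving $\Omega(\alpha k_j)$ terminals in each $\tset_j^{*}$. In the large-$p$ case, I would use a max-flow computation between the $A$-side endpoints and the $B$-side endpoints of the split pairs inside $G^{*}$: the present well-linkedness of $\tset_1^{*}\cup\tset_2^{*}$ lower-bounds the min-cut between these two endpoint sets, yielding many edge-disjoint paths across $(A,B)$. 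A bipartite matching reassignment (Hall's theorem on the resulting flow paths, or an equivalent direct flow argument) then reroutes them to true $\mset$-pairs, losing only a constant factor, and I add $\Omega(p)$ such paths to $\pset$. If $|\pset|$ ever reaches $\Omega(\alpha k_1)$, I output option~(a).

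The main obstacle, as I see it, is the large-$p$ case: producing $\Omega(p)$ edge-disjoint paths between the two endpoint multisets does not automatically pair up the specific pairs $(s_i,t_i)\in\mset$ correctly, since the paths might terminate at the ``wrong'' partners. The two-step decoupling above (flow first, then bipartite reassignment) is what I would use, exploiting the $1$-well-linkedness of the residual graph to guarantee both the flow value and the feasibility of Hall's condition on the matching side. A secondary obstacle is showing that each path added to $\pset$ does not degrade the residual well-linkedness by too much; I would control this by bounding the total edge usage of $\pset$ against the well-linkedness deficit, and fold this into a potential-function analysis that simultaneously tracks $|\tset_1^{*}|+|\tset_2^{*}|$ (decreasing in the small-$p$ case) and $|\pset|$ (increasing in the large-$p$ case). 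This gives termination in polynomially many iterations with either option~(a) or option~(b) holding with the claimed $\Omega(\alpha k_1)$ and $\Omega(\alpha k_2)$ guarantees.
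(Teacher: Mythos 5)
There is a genuine gap, and it sits at the heart of your plan: the iterative sparsest-cut/terminal-discarding loop cannot deliver $1$-well-linkedness. When \algsc returns a cut of sparsity above your threshold, you only learn that the true sparsest cut is at least $\mathrm{threshold}/\alphasc(k)$ with $\alphasc(k)=O(\sqrt{\log k})$, so you can never certify well-linkedness better than $\Omega(1/\sqrt{\log k})$ this way; and even with an exact oracle, your "small-$p$" step discards terminals from a sparse side without removing the cut edges or recursing, so the same edges can witness sparse cuts again and again and your $O(\alpha)(k_1+k_2)$ bound on the total loss has no valid charging argument behind it. The boost from $\alpha$-well-linkedness to $1$-well-linkedness is not obtained by decomposition in the paper at all: it is obtained by \emph{grouping}. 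One partitions $V(G)$ into connected, vertex-disjoint clusters each containing $\Theta(1/\alpha)$ terminals (Theorem~\ref{thm: grouping simple degree 4}), keeps a single representative per cluster chosen to be a pseudo-center (using the tagging device of Theorem~\ref{thm: tagging} and a merging step to make every surviving representative a pseudo-center of an edge-disjoint merged cluster), and then Claim~\ref{claim: 1-edge-wl} shows these representatives are $1$-well-linked: each representative first spreads its unit of flow over the $q\geq 4/\alpha$ terminals of its own cluster with congestion $1/2$, and only then invokes the $\alpha$-well-linkedness of the full terminal set, whose congestion $2/\alpha$ is absorbed by the $1/q$ scaling. This $\Theta(\alpha)$ sparsification of the terminal set is exactly where the $\Omega(\alpha k_1)$, $\Omega(\alpha k_2)$ losses come from, and there is no cut-based substitute for it.

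Your large-$p$ case has a second, independent problem: finding $\Omega(p)$ edge-disjoint paths between the multiset of $A$-side endpoints and the multiset of $B$-side endpoints does not route the prescribed pairs, and no bipartite reassignment or Hall-type argument can fix which partner a given path terminates at---rerouting a set-to-set flow so that it respects a specific perfect matching of its endpoints is essentially the edge-disjoint paths problem you are trying to solve. The paper never attempts this. The only pairs it ever routes are those whose two endpoints land in the same connected cluster, or in two clusters joined by a tagging edge or by one of the edge-disjoint connecting paths produced inside a tagged cluster; such a pair is routed trivially inside its own connected, edge-disjoint component. The matching constraint on $\tset_1'$ is then enforced combinatorially---representatives for $\tset_1$ are chosen pair-by-pair, mixed clusters are split into two balanced halves keeping partners on the same side, and the independent-set and odd-cluster bookkeeping always discards both endpoints of a pair together---rather than by any flow rerouting.
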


{\bf Expanders and the Cut-Matching Game.}
We say that a (multi)-graph $G=(V,E)$ is an $\alpha$-expander, iff
$\min_{\stackrel{S\sse V:}{|S|\leq |V|/2}}\set{\frac{|E(S,\nots)|}{|S|}}\geq \alpha$.

We use the cut-matching game of Khandekar, Rao and Vazirani~\cite{KRV}. In this game, we are given a set $V$ of $N$ vertices, where $N$ is even, and two players: a cut player, whose goal is to construct an expander $X$ on the set $V$ of vertices, and a matching player, whose goal is to delay its construction. The game is played in iterations. We start with the graph $X$ containing the set $V$ of vertices, and no edges.
In each iteration $j$, the cut player computes a bi-partition $(A_j,B_j)$ of $V$ into two equal-sized sets, and the matching player returns some perfect matching $M_j$ between the two sets. The edges of $M_j$ are then added to $X$. Khandekar, Rao and Vazirani have shown that there is a strategy for the cut player, guaranteeing that after $O(\log^2N)$ iterations we obtain a $\half$-expander w.h.p. Subsequently, Orecchia et al.~\cite{better-CMG} have shown the following improved bound:


\begin{theorem}[\cite{better-CMG}]\label{thm: CMG}
There is a probabilistic algorithm for the cut player, such that, no matter how the matching player plays, after $\gkrv(N)=O(\log^2N)$ iterations, graph $X$ is an $\alphaCMG(N)=\Omega(\log N)$-expander, with constant probability.
\end{theorem}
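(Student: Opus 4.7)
The plan is to exhibit the cut player's strategy as an SDP-based potential-reduction algorithm and to argue that a potential function tracking the mixing of a random walk on the evolving multigraph $X$ drops by a factor of $1-\Omega(1/\log N)$ per round. Since an $\Omega(\log N)$-expander corresponds to a logarithmic spectral gap, reaching the expansion bound requires the potential to be driven down by $\Omega(\log N)$ log-factors, which after $O(\log^2 N)$ iterations occurs with constant probability.

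First, I would set up the state. Let $X_j$ denote the union of the first $j$ matchings, and associate to each vertex $u \in V$ a probability distribution $p_u^{(j)}$ on $V$, initialized to the point mass $\mathbf{1}_u$. Updating via $p_u^{(j+1)} = \tfrac{1}{2}(p_u^{(j)} + p_{M_j(u)}^{(j)})$ corresponds to one lazy random-walk step on the matching $M_j$. The potential I would track is $\Phi_j = \sum_{u,v}(p_u^{(j)}(v) - 1/N)^2$, or equivalently a spectral quantity of the associated matrix $P^{(j)}$; its decay governs how close the $p_u^{(j)}$ are to uniform, which in turn controls the spectral gap of $X_j$.

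Second, I would describe the cut player's move. At round $j$, solve the natural SDP relaxation that seeks a balanced bipartition $(A_j,B_j)$ maximizing the $\ell_2^2$ separation between $\{p_u^{(j)}\}_{u \in A_j}$ and $\{p_u^{(j)}\}_{u \in B_j}$. By the ARV structure theorem applied to this negative-type metric, the cut player can round the SDP to a balanced bipartition in which a constant fraction of pairs $(a,b) \in A_j \times B_j$ are $\Omega(1/\sqrt{\log N})$-separated in that metric. I would then argue that no matter which perfect matching $M_j \subseteq A_j \times B_j$ the matching player returns, the averaging step over matched pairs contracts $\Phi_j$ by a factor of $1 - \Omega(1/\log N)$, via the parallelogram identity applied to the separated pairs. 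This uses the sharper $O(\sqrt{\log N})$ ARV bound in place of the $\Omega(1)$ bound from KRV, and is the source of the improved $\Omega(\log N)$ expansion (as opposed to $\Omega(1)$).

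Third, I would close the loop. After $T = O(\log^2 N)$ rounds, with constant probability $\Phi_T \le N^{-C}$ for a large enough constant $C$, which forces $\|P^{(T)} - J/N\|_2 = o(1)$ and therefore $\lambda_2(I - P^{(T)}) = \Omega(\log N)$. Finally, a spectral-to-combinatorial conversion (a discrete Cheeger-type inequality, exploiting that $X_T$ is a union of $T$ matchings so has bounded degree) yields that $X_T$ is an $\Omega(\log N)$-expander. The main obstacle is step two: carefully calibrating the SDP rounding guarantee with the per-round multiplicative potential drop so that the exponent matches and yields $\Omega(\log N)$-expansion rather than $\Omega(1)$. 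This is exactly the technical heart of the Orecchia--Schulman--Vazirani--Vishnoi improvement over KRV, and one would invoke their analysis (based on a matrix multiplicative-weights/matrix-exponential potential) to push the bookkeeping through.
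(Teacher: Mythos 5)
You should first note that the paper does not prove this statement at all: Theorem~\ref{thm: CMG} is imported verbatim from~\cite{better-CMG} and used as a black box, so there is no internal proof to compare against. Your task, then, is really to reconstruct the Orecchia et al.\ argument, and judged on that basis your sketch points at the right literature but has an internal inconsistency that leaves a genuine gap.

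The gap is in your choice of potential. The quantity $\Phi_j=\sum_{u,v}\bigl(p_u^{(j)}(v)-1/N\bigr)^2$ is exactly the KRV potential, and driving it below $1/\poly(N)$ certifies only that the walk matrix $P^{(T)}$ is close to $J/N$ in a sense that yields a $\Omega(1)$-expander (KRV's $\tfrac12$-expander), not an $\Omega(\log N)$-expander: the termination condition ``$\Phi_T\leq N^{-C}$'' gives $\lambda_2$ of the \emph{normalized} walk close to $0$, i.e.\ conductance $\Omega(1)$ relative to the degree bound $\gkrv=O(\log^2 N)$ would require a quantitatively stronger statement than $\Phi_T=o(1)$ provides, and your claim that this ``forces $\lambda_2(I-P^{(T)})=\Omega(\log N)$'' does not follow (the eigenvalues of $I-P^{(T)}$ lie in $[0,2]$). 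The actual improvement in~\cite{better-CMG} is obtained by replacing this $\ell_2^2$ potential with a matrix-exponential (softmax/matrix multiplicative-weights) potential on the Laplacian of the accumulated multigraph, together with a cut-player move based on projecting the heat-kernel embedding; the $\Omega(\log N)$ expansion comes out of the spectral guarantee of that potential at termination, not from substituting an ARV-type $O(\sqrt{\log N})$ separation into the KRV round-by-round analysis. You acknowledge this at the end by deferring to ``their analysis,'' but as written the first two steps of your plan would reprove KRV's constant-expansion bound rather than the cited $\alphaCMG(N)=\Omega(\log N)$ bound. A second, smaller point: the ``constant probability'' in the theorem arises from the randomness in the cut player's strategy (random projections of the embedding), which your description should make explicit rather than fold into SDP rounding.

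If you intend to use this theorem the way the paper does, the cleanest course is to cite~\cite{better-CMG} directly; if you want a self-contained proof, you must commit to the matrix-exponential potential from the start and carry out its per-round decrement analysis, since the $\ell_2^2$ potential provably cannot certify the stated expansion.
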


\label{--------------------------------------------------sec: Starting point------------------------------------------------}
\section{Algorithm Overview}\label{sec: alg overview}
Throughout the paper, we denote by $\gkrv=\gkrv(k)=O(\log^2k)$ the parameter from Theorem~\ref{thm: CMG}.
The central combinatorial object in our algorithm is what we call a \emph{good crossbar}, that we define below.

\begin{definition}
Given a graph $G=(V,E)$, a set $\mset$ of $k$ source-sink pairs, and a parameter $k^*=k/\poly\log k$, a good crossbar consists of the following three components:

\begin{enumerate}
\item A family $\sset^*=\set{S^*_1,\ldots,S^*_{\gkrv}}$ of disjoint subsets of {\bf non-terminal} vertices.
 Each set $S_j^*\in \sset^*$ is associated with a subset $\Gamma^*_j\sse \out(S^*_j)$ of $2k^*$ edges, and $S^*_j$ is $1$-well-linked for $\Gamma^*_j$.

\item A subset $\mset^*\sse \mset$ of $k^*$ demand pairs. Let $\tset^*=\tset(\mset^*)$ be the corresponding set of terminals.

\item A collection $\tau^*=\set{T_1,\ldots,T_{2k^*}}$ of trees in graph $G$. Each tree $T_i\in \tau^*$, contains a distinct terminal $t_i\in \tset^*$, and for each $1\leq j\leq \gkrv$, tree $T_i$ contains a distinct edge $e_{i,j}\in \Gamma^*_j$. In other words, $\tset^*=\set{t_1,\ldots,t_{2k^*}}$, where $t_i\in T_i$ for each $1\leq i\leq 2k^*$; and for each $1\leq j\leq \gkrv$, $\Gamma^*_j=\set{e_{1,j},\ldots,e_{2k^*,j}}$, where $e_{i,j}\in T_i$ for each $1\leq i\leq 2k^*$.
\end{enumerate}

We say that the congestion of the good crossbar is $c$ iff every edge of $G$ participates in at most $c$ trees in $\tau^*$, while every edge in $\bigcup_{j=1}^{\gkrv}E(S^*_j)$ belongs to at most $c-1$ such trees.
\end{definition}


Our main result is that if $G$ is $1$-well-linked for the set $\tset$ of terminals, then there is an efficient randomized algorithm, that either finds a good congestion-2 crossbar, or routes a subset of $k/\poly\log k$ demand pairs with congestion at most $2$ in $G$. We summarize this result in the following theorem.

\begin{theorem}\label{thm: main: find good crossbar or find a routing}
Assume that we are given an undirected graph $G=(V,E)$ with vertex degrees at most $4$, and a set $\mset$ of $k$ demand pairs, defined over a set $\tset$ of terminals. Assume further that the degree of every terminal is $1$, every terminal participates in exactly one pair in $\mset$, and $G$ is $1$-well-linked for $\tset$. Then there is an efficient randomized algorithm, that with high probability outputs one of the following:
(i) Either a subset $\mset'\sse \mset$ of $k/\poly\log k$ demand pairs and the routing of the pairs in $\mset'$ with congestion at most $2$ in $G$;
or (ii) a good congestion-2 crossbar  $(\sset^*,\mset^*,\tau^*)$.
\end{theorem}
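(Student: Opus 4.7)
The plan is to implement the phase-based strategy sketched in Section~\ref{sec: alg overview}. We introduce a potential function $\phi$ defined on families $\cset$ of vertex-disjoint non-terminal subsets, designed so that $\phi(\cset)$ approximates the number of edges remaining after contracting each cluster of $\cset$ into a super-node, and so that $\phi$ is (essentially) monotone under the standard well-linked decomposition, sparsest-cut splitting, and grouping operations. Since $\phi(\emptyset)$ is polynomially bounded and every phase either terminates with the desired output or strictly decreases $\phi$, we will run at most $\poly(n)$ phases. A phase starts with a legal contracted graph $G'$ together with its good clustering $\cset'$, and either produces an output satisfying the theorem or returns a good clustering $\cset''$ with $\phi(\cset'') < \phi(\cset')$; initially $\cset' = \emptyset$ and $G' = G$.

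Inside a phase I would run a sequence of iterations, each starting from some (not necessarily good) clustering $\cset^*$ with $\phi(\cset^*) < \phi(\cset')$. The first step of an iteration computes a candidate family $\fset = \set{S_1,\ldots,S_{\gkrv}}$ of vertex-disjoint non-terminal subsets, together with edge sets $\Gamma^*_j \sse \out(S_j)$ of size $2k^*$ for $k^* = k/\poly\log k$, such that each $S_j$ is $1$-well-linked for $\Gamma^*_j$. This is obtained by exploiting the $1$-well-linkedness of $G$ for $\tset$: we extract $\gkrv$ roughly disjoint flows of value $2k^*$ emanating from $\tset$, isolate clusters around the boundaries of these flows using $\algsc$ and a standard well-linked decomposition, and apply Theorem~\ref{thm: grouping-many-sets} to make each piece $1$-well-linked for its selected boundary. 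Parameter choices ensure that $\phi(\cset^*)$ plus an $O(\gkrv \cdot k^*)$ overhead still lies comfortably below $\phi(\cset')$, leaving room for the eventual refinement.

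The central step is to attempt to complete $\fset$ into a good congestion-$2$ crossbar $(\sset^*,\mset^*,\tau^*)$. We must simultaneously choose $\mset^* \sse \mset$ of size $k^*$ and build $2k^*$ trees, each containing a distinct terminal of $\tset(\mset^*)$ and a distinct edge in every $\Gamma^*_j$, while using every edge of $G$ in at most two trees and every edge inside some $S_j$ in at most one tree. I would model this as a Steiner-tree-like packing problem on the graph obtained by contracting both the clusters of $\cset^*$ and the candidate clusters $S_j \in \fset$ into super-nodes, with terminals of $\tset$ and edges of $\Gamma^*_j$ acting as anchors. Fractional feasibility follows from the well-linkedness of $G$ for $\tset$ and of each $S_j$ for $\Gamma^*_j$, together with the observation that a matching inside a $1$-well-linked set can be integrally routed with congestion $1$, so each $S_j$ faithfully simulates a super-node while contributing only one unit of congestion to the trees. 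An integral rounding in the spirit of Steiner-packing algorithms then yields either the crossbar or a witness of infeasibility.

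The main obstacle, and the source of most of the technical work, is to ensure that infeasibility is \emph{constructively} certified by a sparse cut inside some cluster of $\cset^*$ or inside some $S_j \in \fset$, and that this cut refines the current clustering while strictly decreasing $\phi$. This demands that $\phi$ be delicate enough to absorb well-linked and grouping operations for free yet to charge against genuinely sparse cuts; it also demands that the selection of $\mset^*$, the tree construction, and the routing of terminals to trees be executed in a single unified procedure, since handling them separately as in \cite{EDP-old} inevitably doubles the congestion. Once a good congestion-$2$ crossbar $(\sset^*,\mset^*,\tau^*)$ is produced, the promised routing of $\Omega(k/\poly\log k)$ pairs of $\mset^*$ with congestion at most $2$ follows from the cut-matching game (Theorem~\ref{thm: CMG}) applied to $\tset(\mset^*)$ and the Rao--Zhou expander-routing algorithm, along the lines of \cite{EDP-old}, which together with Theorem~\ref{thm: main: find good crossbar or find a routing} completes the proof.
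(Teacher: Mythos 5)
Your write-up reproduces the high-level skeleton that the paper itself lays out in its overview (phases, a potential $\phi$ on clusterings, iterations that either complete a crossbar or refine the clustering), but it stops short of the actual proof at exactly the point where the work happens. You acknowledge this yourself: ``the main obstacle, and the source of most of the technical work, is to ensure that infeasibility is constructively certified by a sparse cut\ldots it also demands that the selection of $\mset^*$, the tree construction, and the routing of terminals to trees be executed in a single unified procedure.'' That obstacle \emph{is} the theorem; naming it is not overcoming it. Concretely, the paper's proof requires (a) a specific potential with tiered values $\phi(h)$ so that well-linked decomposition is free while $\partition$ and $\separate$ each decrease $\phi$ by at least $1$; (b) a random partition of the contracted graph into $\gamma$ pieces to obtain $\gamma$ disjoint large non-terminal clusters, one per piece; (c) the construction of a degree-$3$ tree over a subfamily of these clusters via an auxiliary graph, leaf/weight improvement steps, a two-case analysis, and directed Eulerian edge-splitting (Frank/Jackson) to turn flows into single-congestion path systems; (d) a stable-matching argument to splice the terminal-connecting paths into the tree paths without exceeding congestion $1$ in the bidirected graph; and (e) repeated applications of the grouping theorems to boost well-linkedness to $1$ before assembling the trees. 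None of these appear in your proposal.

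The one place where you propose a genuinely different mechanism --- casting the crossbar construction as a Steiner-tree packing on the doubly contracted graph and invoking ``an integral rounding in the spirit of Steiner-packing algorithms'' --- does not work as stated. Generic Steiner packing roundings lose constant factors in congestion, and the whole point of this theorem is to land at congestion exactly $2$ globally and congestion $1$ inside each $S_j$; this is precisely why the paper replaces black-box packing with the edge-splitting and stable-matching machinery. Moreover, your claim that infeasibility of the packing LP yields a sparse cut that refines $\cset^*$ and decreases $\phi$ is asserted, not proved; the paper instead arranges matters so that every failure mode explicitly produces either a $(k_1,\alpha)$-violating partition or a small cut separating some $S_j$ from $\tset$, each of which drives $\phi$ down via $\partition$ or $\separate$. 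Finally, your closing sentence invokes the theorem being proved as an ingredient of its own proof (the conversion of a crossbar into a routing belongs to the proof of the main theorem, not of this one), which should be removed. As it stands, the proposal is a plan, not a proof.
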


The proof of Theorem~\ref{thm: main} follows from the proof of Theorem~\ref{thm: main: find good crossbar or find a routing} using techniques from previous work, and it appears in Section~\ref{sec: complete proof of main thm} of the Appendix. 
We note that the algorithm of~\cite{EDP-old} also proceeded by (implicitly) constructing a good congestion-$14$ crossbar. Our main challenge is to reduce the congestion of the good crossbar to $2$.
From now on we focus on proving Theorem~\ref{thm: main: find good crossbar or find a routing}.


As in~\cite{EDP-old}, throughout the algorithm, we maintain a partition $\cset$ of the vertices of $G$ into clusters, and a legal contracted graph of $G$, obtained by contracting each cluster $C\in \cset$ into a super-node $v_C$. The algorithm performs a number of phases, where in every phase we either find the desired routing of a subset of demand pairs in graph $G$, or compute a good congestion-$2$ crossbar, or find a new partition $\cset'$ of the vertices of $G$ into clusters, whose corresponding contracted graph is strictly smaller than the current contracted graph. In the next section we define several types of clusterings the algorithm uses, the notion of the legal contracted graphs, and several operations on a given clustering, that are used throughout the algorithm.

\iffalse@@@

\section{Good Family of Vertex Sets}
In this section we define a good family of vertex sets. The rest of the proof will consist of two parts. In the first part we show that we can find a good family in $G$. In the second part we show that given a good family, we can find a routing of a $(1/\poly\log k)$-fraction of pairs in $\mset$ with congestion at most $2$.

\begin{definition}
We say that $\set{S_1,\ldots,S_r}$, for $r\geq 8\gkrv$ is a good family of vertex subsets, iff:

\begin{itemize}
\item All sets $S_i$ are vertex-disjoint and they do not contain terminals in $\tset$.
\item For all $1\leq i\leq r$, we are given a set $\Gamma_i\sse \out(S_i)$ of edges, and $S_i$ is $\alphaWL$-well-linked for $\Gamma_i$. That is, for any partition $(X,Y)$ of $S_i$, $|E(X,Y)|\geq \alphaWL\cdot\min\set{|\out(X)\cap \Gamma_i|,|\out(Y)\cap \Gamma_i|}$.
\item For all $1\leq i<r$, we are given a set  $\pset_i$ of $k/\poly\log k$ edge-disjoint paths, connecting $\Gamma_i$ to $\Gamma_{i+1}$, and these paths do not contain any vertices of $\tset\cup\left (\bigcup_{i=1}^rS_i\right )$ as inner vertices.

\item We are given a subset $\mset'\sse \mset$ of $k/\poly\log k$ source-sink pairs, such that one of the following holds:

\begin{itemize}
\item Either there is a set $\pset_0$ of edge-disjoint paths, connecting every terminal that participates in the pairs in $\mset'$ to $\Gamma_1$;

\item Or we can partition the terminals participating in the pairs in $\mset'$ into two subsets, $\tset'$ and $\tset''$, such that for each pair $(s,t)\in \mset'$, $s\in \tset'$, $t\in \tset''$ or vice-versa, and there are two sets $\pset_0: \tset'\connect_1\Gamma_1$, $\pset_{r+1}:\tset''\connect_1\Gamma_r$ of paths.
\end{itemize}
In either case, the paths of $\pset_0\cup \pset_{r+1}$  do not contain any vertices of $\tset\cup\left (\bigcup_{i=1}^rS_i\right )$ as inner vertices. 
\end{itemize}
\end{definition}

The rest of the proof consists of two parts. First, we show that we can efficiently find a good family of vertex subsets in $G$. The second part finds the routing given the good family of vertex subsets.

@@@
\fi

\label{------------------------------------sec: legal contracted graphs---------------------------------------}
\section{Vertex Clusterings and Legal Contracted Graphs}
\label{sec: clusterings}

Our algorithm uses the following parameters.
Let $\gkrv=\gkrv(k)=O(\log^2k)$ be the parameter for the number of iterations in the cut-matching game from Theorem~\ref{thm: CMG}. We let $\gamma=2^{24}\gkrv^4$.
We will also use the following two parameters for well-linkedness: $\alpha=\frac{1}{2^{11}\gamma\log k}=\Omega\left (\frac{1}{\log^{9}k}\right )$, used to perform the well-linked decomposition, and
$\alphaWL=\frac{\alpha}{\alphasc(k)}=\Omega\left (\frac{1}{\log^{9.5}k}\right )$ - the well-linkedness factor we achieve.
Finally, we use a parameter $k_1=\frac{k}{192\gamma^3\log \gamma}=\frac{k}{\poly\log k}$, and we assume that the parameter $k$ is large enough, so $k_1>4/\alpha$ (otherwise we can simply route one of the demand pairs to obtain an $O(\poly\log k)$-approximation).
We say that a cluster $C\sse V(G)$ is \emph{large} iff $|\out(C)|\geq k_1$, and we say that it is \emph{small} otherwise.


\begin{definition}
Given a partition $\cset$ of the vertices of $V(G)$ into clusters, we say that $\cset$ is an \emph{acceptable clustering} of $G$ iff:

\begin{itemize}
\item  Every terminal $t\in \tset$ is in a separate cluster, that is, $\set{t}\in \cset$; 
\item Each small cluster $C\in \cset$ is $\alphaWL$-well-linked; and 
\item Each large cluster $C\in \cset$ is a connected component.
An acceptable clustering that contains no large clusters is called a \emph{good clustering}.
\end{itemize}
\end{definition}

\begin{definition}
Given a good clustering $\cset$ of $G$, a graph $H_{\cset}$ is a legal contracted graph of $G$ associated with $\cset$, iff we can obtain $H_{\cset}$ from $G$ by contracting every $C\in\cset$ into a super-node $v_C$. We remove all self-loops, but we do not remove parallel edges. (Note that the terminals are not contracted since each terminal has its own cluster).
\end{definition}

The following claim was proved in \cite{EDP-old}. We include its proof here for completeness, since we have changed some parameters.


\begin{claim}\label{claim: legal graph has many edges}
If $G'$ is a legal contracted graph for $G$, then $G'\setminus \tset$ contains at least $k/3$ edges.
\end{claim}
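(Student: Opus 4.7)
The plan is to lower-bound $|E(G' \setminus \tset)|$ by decomposing, for each non-terminal cluster $C \in \cset$, its boundary $\out_G(C)$ into edges going to terminals and edges going to other non-terminal vertices. Let $t(C)$ and $e(C)$ denote these two counts respectively. By a standard handshake argument $2|E(G' \setminus \tset)| = \sum_C e(C)$, where the sum ranges over non-terminal clusters, while $\sum_C t(C) = k$ since each of the $k$ terminals has degree one and sits in its own singleton cluster. Hence the claim reduces to establishing $\sum_C e(C) \geq 2k/3$, and it would even suffice to show $e(C) \geq t(C)$ on a cluster-by-cluster basis.

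The key step is to extract such a per-cluster bound from the hypothesis that $G$ is $1$-well-linked for $\tset$. For each non-terminal cluster $C$, I would apply the well-linkedness condition to the cut $(A,B)$ with $A = C \cup T_C$ and $B = V(G) \setminus A$, where $T_C \subseteq \tset$ denotes the set of terminals adjacent to $C$, so $|T_C| = t(C)$. Since every terminal has degree one and, by the definition of $T_C$, its unique edge goes into $C \subseteq A$, no terminal-incident edge crosses the cut. Therefore $E(A,B)$ consists exactly of the $e(C)$ edges from $C$ to non-terminal vertices outside $C$. The $1$-well-linkedness of $G$ then yields $e(C) \geq \min\{t(C),\, k - t(C)\}$.

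To close the argument, I would invoke the fact that $\cset$ is a good clustering, so every cluster is small: $|\out_G(C)| < k_1$, and hence $t(C) \leq |\out_G(C)| < k_1$. Because the parameter $k_1 = k/(192\gamma^3 \log \gamma)$ is chosen much smaller than $k/2$, this forces $\min\{t(C), k - t(C)\} = t(C)$ for every non-terminal cluster, and summing gives $\sum_C e(C) \geq \sum_C t(C) = k$. Consequently $|E(G' \setminus \tset)| \geq k/2 \geq k/3$, which is somewhat stronger than the stated bound.

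The only point requiring a sanity check is that the cut $(A,B)$ is nontrivial when invoking well-linkedness — one needs $B \neq \emptyset$, equivalently $t(C) < k$, which follows from $t(C) < k_1 < k$; when $t(C) = 0$ the bound $e(C) \geq 0$ is vacuous anyway. Beyond this, the argument is purely book-keeping, and the substantive content is the single application of $1$-well-linkedness to the cut around each cluster.
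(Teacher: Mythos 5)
Your proof is correct, but it takes a genuinely different route from the one in the paper. The paper argues globally: it groups the terminals according to the super-node they attach to in $G'$ (each group has size at most $k_1$ because all clusters are small), splits the groups into two sides $X,Y$ of size $k/3$ each without breaking any group, routes $k/3$ units of flow between $X$ and $Y$ using the $1$-well-linkedness, and observes that every flow-path must cross at least one edge of $G'\setminus\tset$ because its two endpoints attach to distinct super-nodes. You instead apply the well-linkedness \emph{locally}, once per non-terminal cluster $C$, to the cut $(C\cup T_C,\,V(G)\setminus(C\cup T_C))$; the degree-one assumption on terminals makes this cut consist exactly of the $e(C)$ edges leaving $C$ toward non-terminals, and smallness of $C$ (so $t(C)<k_1\ll k/2$) resolves the minimum to $t(C)$, giving $e(C)\geq t(C)$ and hence $\sum_C e(C)\geq k$ by summation. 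Your approach avoids the flow decomposition and the balanced-partition bookkeeping entirely and yields the slightly stronger bound $k/2$ in place of $k/3$; both arguments use the smallness of clusters in an essential (though different) way, yours to force $\min\{t(C),k-t(C)\}=t(C)$ and the paper's to guarantee the balanced split of the groups exists. The only implicit assumption in your bookkeeping identity $\sum_C t(C)=k$ is that no terminal is adjacent to another terminal; this is harmless, since such an edge would make a two-terminal connected component, contradicting $1$-well-linkedness for $k\geq 3$ (and the paper's proof tacitly relies on the same fact).
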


\begin{proof}
For each terminal $t\in \tset$, let $e_t$ be the unique edge adjacent to $t$ in $G'$, and let $u_t$ be the other endpoint of $e_t$. We partition the terminals in $\tset$ into groups, where two terminals $t,t'$ belong to the same group iff $u_t=u_{t'}$. Let $\gset$ be the resulting partition of the terminals. Since the degree of every vertex in $G'$ is at most $k_1$, each group $U\in \gset$ contains at most $k_1$ terminals. Next, we partition the terminals in $\tset$ into two subsets $X,Y$, where $|X|,|Y|\geq k/3$, and for each group $U\in \gset$, either $U\sse X$, or $U\sse Y$ holds. We can find such a partition by greedily processing each group $U\in \gset$, and adding all terminals of $U$ to one of the subsets $X$ or $Y$, that currently contains fewer terminals. Finally, we remove terminals from set $X$ until $|X|=k/3$, and we do the same for $Y$. Since graph $G'$ is $1$-well-linked for the terminals, it is possible to route $k/3$ flow units from the terminals in $X$ to the terminals in $Y$, with congestion at most $1$. Since no group $U$ is split between the two sets $X$ and $Y$, each flow-path must contain at least one edge of $G'\setminus \tset$. Therefore, the number of edges in $G'\setminus \tset$ is at least $k/3$.
\end{proof}

Given {\bf any} partition $\cset$ of the vertices of $G$, we define a potential $\phi(\cset)$ for this clustering. The idea is that $\phi(\cset)$ will serve as a tight bound on the number of edges connecting different clusters in $\cset$. At the same time, the potential function is designed in such a way, that we can perform a number of operations on the current clustering (such as, for example, well-linked decompositions of the clusters), without increasing the potential. 

At a high level, our algorithm maintains a good clustering $\cset$ of $V(G)$, where at the beginning, every vertex belongs to a separate cluster. The algorithm  consists of a number of phases, where in each phase we start with some good clustering $\cset$, and either route $k/\poly\log k$ of the demand pairs with congestion at most $2$ in $G$, or find a good congestion-$2$ crossbar in $G$, or produce another good clustering $\cset'$ with $\phi(\cset')\leq \phi(\cset)-1$. After a polynomial number of phases, we will therefore either obtain a routing of $k/\polylog k$ demand pairs with congestion at most $2$, or find a good congestion-$2$ crossbar.

Each phase of the algorithm is executed as follows. We start with the current good clustering $\cset$, and construct another acceptable clustering $\cset'$ with $\phi(\cset')<\phi(\cset)-1$. We then perform a number of iterations, where in each iteration we either route $k/\poly\log k$ demand pairs in $G$, or find a good congestion-$2$ crossbar, or revisit the current acceptable clustering $\cset'$, and perform an operation that produces another acceptable clustering, whose potential is strictly smaller than that of $\cset'$. In the end, we will either successfully route $k/\poly\log k$ demand pairs in $G$, or construct a good congestion-$2$ crossbar in $G$, or we will find a good clustering $\cset'$ with $\phi(\cset')<\phi(\cset)-1$. We now proceed to define the potential function, and the operations we perform on clusterings.

Suppose we are given {\bf any} partition $\cset$ of the vertices of $G$. 
For any integer $h$, we define a potential $\phi(h)$, as follows. For $h< k_1$,  $\phi(h)=4\alpha\log h$.
In order to define $\phi(h)$ for  $h\geq k_1$, we consider the sequence $\set{n_0,n_1,\ldots}$ of numbers, where $n_i=\left(\frac 3 2\right )^i k_1$. The potentials for these numbers are $\phi(n_0)=\phi(k_1)=4\alpha\log k_1+4\alpha$, and for $i>0$, $\phi(n_i)=4\frac{\alpha k_1}{n_i}+\phi(n_{i-1})$. Notice that for all $i$, $\phi(n_i)\leq 12\alpha+4\alpha\log k_1\leq 8\alpha\log k_1\leq \frac{1}{2^8\gamma}$.

We now partition all integers $h>k_1$ into sets $S_1,S_2,\ldots$, where set $S_i$ contains all integers $h$ with $n_{i-1}\leq h< n_i$. For $h\in S_i$, we define $\phi(h)=\phi(n_{i-1})$.
This finishes the definition of potentials of integers. Clearly, for all $h$, $\phi(h)\leq \frac{1}{2^8\gamma}$.

Assume now that we are given some edge $e\in E$. If both endpoints of $e$ belong to the same cluster of $\cset$, then we set its potential $\phi(e)=0$. Otherwise, if $e=(u,v)$, and $u\in C$ with $|\out(C)|=h$, while $v\in C'$ with $|\out(C')|=h'$, then we set $\phi(e)=1+\phi(h)+\phi(h')$. We think of $\phi(h)$ as the contribution of $u$, and $\phi(h')$ the contribution of $v$ to $\phi(e)$. Notice that $\phi(e)\leq 1.1$. Finally, we set $\phi(\cset)=\sum_{e\in E}\phi(e)$.

Suppose we are given any partition $\cset$ of $V(G)$. Our first step is to show that we can perform a well-linked decomposition of small clusters in $\cset$, without increasing the potential. 


\begin{theorem}\label{thm: well-linked decomposition of small clusters}
Let $\cset$ be any partition of $V(G)$, and let $C\in \cset$ be any small cluster, such that $G[C]$ is connected. Then there is an efficient algorithm that finds a partition $\wset$ of $C$ into small clusters, such that each cluster $R\in \wset$ is $\alphaWL$-well-linked, and additionally, if $\cset'$ is a partition obtained from $\cset$ by removing $C$ and adding the clusters of $\wset$ to it, then $\phi(\cset')\leq \phi(\cset)$.
\end{theorem}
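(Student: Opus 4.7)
The plan is a standard iterative well-linked decomposition driven by the sparsest-cut approximation $\algsc$, combined with a careful check that the potential function $\phi$ is monotonically non-increasing under the operations we perform.

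Initialize $\wset := \{C\}$. In each iteration, pick some $R \in \wset$ and invoke $\algsc$ on the instance $\SC(G,R)$; let $(X,Y)$ be the returned cut, of sparsity $s$. If $s > \alpha = \alphasc(k)\cdot \alphaWL$, then by the approximation guarantee the true sparsest cut of $\SC(G,R)$ has value $> \alphaWL$, so $R$ is $\alphaWL$-well-linked and we leave it alone; otherwise, we replace $R$ in $\wset$ by the connected components of $G[X]$ and $G[Y]$. The procedure terminates in polynomial time because the number of clusters strictly grows at each split.

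I will then verify that every $R$ produced is small. Each split satisfies $|E(X,Y)| \le \alpha \cdot \min\{|\out(R)\cap\out(X)|,\,|\out(R)\cap\out(Y)|\}$, so $\max\{|\out(X)|,|\out(Y)|\} \le (1+\alpha)|\out(R)|$. Because the smaller side of every cut contains at most half of the original $\out(R)$-edges, the recursion tree has depth $O(\log k)$, and $\alpha = O(1/\log k)$ keeps the cumulative multiplicative blow-up at $O(1)$. Since $|\out(C)| < k_1$, every $R \in \wset$ remains small, and in particular all involved clusters throughout the execution fall in the small-cluster regime $\phi(z) = 4\alpha\log z$.

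The heart of the argument is the inequality $\phi(\cset')\le\phi(\cset)$, which I prove by showing that each elementary operation on $\wset$ is non-increasing in $\phi$. Splitting a cluster into its connected components creates no new cross-cluster edges, and for each edge $e \in \out(R)$ the $R$-endpoint contribution changes from $\phi(|\out(R)|)$ to $\phi(|\out(R')|) \le \phi(|\out(R)|)$ by monotonicity of $\phi$, so this step only decreases $\phi$. For a sparse-cut split of $R$ into $X, Y$, set $a_1 = |\out(R)\cap\out(X)|$, $a_2 = |\out(R)\cap\out(Y)|$ with $a_1 \le a_2$, $b = |E(X,Y)| \le \alpha a_1$, $h = |\out(R)|$, and $h_i = a_i + b$. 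Direct bookkeeping gives
\[
\Delta\phi \;=\; b \;+\; h_1\phi(h_1) + h_2\phi(h_2) - h\,\phi(h),
\]
because each of the $b$ new crossing edges contributes $1+\phi(h_1)+\phi(h_2)$, while every $e\in\out(R)$ has its $R$-endpoint contribution shift from $\phi(h)$ to $\phi(h_1)$ or $\phi(h_2)$. Using $\log(a_i+b)\le\log a_i + b/(a_i\ln 2)$, the identity $a_1\log a_1 + a_2\log a_2 = h\log h - a_1\log(h/a_1) - a_2\log(h/a_2)$, and the bound $a_1\log(h/a_1) \ge a_1 \ge b/\alpha$ coming from $a_1\le a_2$ and $b\le\alpha a_1$, a short computation yields
\[
\Delta\phi \;\le\; -3b + 8\alpha b\log h + O(\alpha b),
\]
which is non-positive for $h \le k$ and our choice $\alpha = 1/(2^{11}\gamma\log k)$. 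Summing over all iterations gives $\phi(\cset')\le\phi(\cset)$.

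The main obstacle is this last potential calculation: we need the sparsity threshold to be small enough that the $\Theta(b)$ charge from each newly-crossing edge is absorbed by the gain $4\alpha\cdot a_1\log(h/a_1)$ obtained from moving $\out$-endpoints into strictly smaller sub-clusters. This is exactly the slack built into the definitions of $\alpha$ and $\alphaWL$, which makes the choice of these parameters the crucial design decision underlying the proof.
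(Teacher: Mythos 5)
Your proposal follows the same route as the paper: an iterative well-linked decomposition driven by \algsc with threshold $\alpha=\alphasc(k)\cdot\alphaWL$, followed by a per-split accounting of the potential. Your potential computation, while more laborious than necessary, is sound: the identity $\Delta\phi=b+h_1\phi(h_1)+h_2\phi(h_2)-h\phi(h)$ is correct, and the chain of estimates does yield $\Delta\phi\le 0$ for the paper's choice of $\alpha$. (The paper gets the same conclusion more directly: each edge of $\out(X)\cap\out(S)$ loses $4\alpha\log(h/h_1)\ge 4\alpha\log 1.5\ge 2.3\alpha$ because $h_1\le 2h/3$, while the $|E(X,Y)|\le\alpha\,|\out(X)\cap\out(S)|$ new crossing edges gain at most $2$ each, i.e.\ at most $2\alpha\,|\out(X)\cap\out(S)|$ in total.)

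The one genuinely broken step is your argument that every cluster stays small. You claim the recursion tree has depth $O(\log k)$ and then bound the cumulative blow-up by $(1+\alpha)^{O(\log k)}=O(1)$. The depth claim is false in general: only the side achieving the minimum in the sparsity ratio is guaranteed to carry at most half of $\out(R)$, so the larger side can shrink by an arbitrarily small amount at each split and the recursion can have depth linear in $|C|$; your compounding bound then gives nothing. Fortunately no compounding argument is needed: writing $a_1=|\out(R)\cap\out(X)|\le a_2=|\out(R)\cap\out(Y)|$ and $b=|E(X,Y)|<\alpha a_1$, one has $|\out(X)|=a_1+b<(1+\alpha)a_1\le(1+\alpha)h/2<h$ and $|\out(Y)|=a_2+b<a_2+a_1=h$, so \emph{both} children have strictly fewer boundary edges than their parent and smallness is preserved at every node of the recursion, with no dependence on its depth. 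With that repair (and noting that splitting into connected components only decreases boundary sizes, hence also the potential, as you observe), the proof goes through.
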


\begin{proof}
We perform a standard well-linked decomposition of $C$. We maintain a partition $\wset$ of $C$, where at the beginning, $\wset=\set{C}$. We then perform a number of iterations.

In each iteration, we select a cluster $S\in \wset$, and set up the following instance of the sparsest cut problem.
First, we sub-divide every edge $e\in \out(S)$ with a vertex $t_e$, and let $\tset_S=\set{t_e\mid e\in \out(S)}$. We then consider the sub-graph of the resulting graph induced by $S\cup \tset_S$, where the vertices of $\tset_S$ serve as terminals. We
run the algorithm \algsc on the resulting instance of the sparsest cut problem. If the sparsity of the cut produced by the algorithm is less than $\alpha$, then we obtain a partition $(X,Y)$ of $S$, with $|E(X,Y)|<\alpha\cdot\min\set{|\out(S)\cap \out(X)|,|\out(S)\cap \out(Y)|}$. In this case, we remove $S$ from $\wset$, and add $X$ and $Y$ instead. The algorithm ends when for every cluster $S\in \wset$, $\algsc$ returns a partition of sparsity at least $\alpha$. We are then guaranteed that every cluster in $\wset$ is $\alpha/\alphasc(k)=\alphaWL$-well-linked, and it is easy to verify that all these clusters are small.

It now only remains to show that the potential does not increase. Each iteration of the algorithm is associated with a partition of the vertices of $G$, obtained from $\cset$ by removing $C$ and adding all clusters of the current partition $\wset$ of $C$ to it. It is enough to show that if $\cset'$ is the current partition of $V(G)$, and $\cset''$ is the partition obtained after one iteration, where a set $S\in \cset$ was replaced by two sets $X$ and $Y$, then $\phi(\cset'')\leq \phi(\cset')$.

Assume w.l.o.g. that $|\out(X)|\leq |\out(Y)|$, so $|\out(X)|\leq 2|\out(S)|/3$. Let $h=|\out(S)|,h_1=|\out(X)|, h_2=|\out(Y)|$, and recall that $h,h_1,h_2<k_1$.
The changes to the potential are the following:

\begin{itemize}
\item The potential of the edges in $\out(Y)\cap \out(S)$ only goes down.

\item The potential of every edge in $\out(X)\cap \out(S)$ goes down by $\phi(h)-\phi(h_1)=4\alpha\log h-4\alpha\log h_1=4\alpha\log \frac{h}{h_1}\geq 4\alpha\log 1.5\geq 2.3\alpha$, since $h_1\leq 2h/3$. So the total decrease in the potential of the edges in $\out(X)\cap \out(S)$ is at least $2.3\alpha\cdot |\out(X)\cap \out(S)|$.

\item The edges in $E(X,Y)$ did not contribute to the potential initially, and now contribute at most $1+\phi(h_1)+\phi(h_2)\leq 2$ each. Notice that $|E(X,Y)|\leq \alpha \cdot |\out(X)\cap \out(S)|$, and so they contribute at most $2\alpha \cdot |\out(X)\cap \out(S)|$ in total.
\end{itemize}
Clearly, from the above discussion, the overall potential goes down.
\end{proof}

Assume that we are given an acceptable clustering $\cset$ of $G$. We now define two operations on $G$,  each of which produces a new acceptable clustering of $G$, whose potential is strictly smaller than $\phi(\cset)$.

{\bf Action 1: Partitioning a large cluster.}
Suppose we are given an acceptable clustering $\cset$ of $G$, a large cluster $C\in \cset$, and a $(k_1,\alpha)$-violating partition $(X,Y)$ of $C$. 
In order to perform this operation, we first replace $C$ with $X$ and $Y$ in $\cset$. If, additionally, any of the clusters, $X$ or $Y$, become small, then we perform a well-linked decomposition of that cluster using Theorem~\ref{thm: well-linked decomposition of small clusters}, and update $\cset$ with the resulting partition. 
Clearly, the final partitioning $\cset'$ is an acceptable clustering. 
We denote this operation by $\partition(C,X,Y)$.

\begin{claim}\label{claim: bound on potential for partition}
Let $\cset'$ be the outcome of operation $\partition(C,X,Y)$. Then $\phi(\cset')<\phi(\cset)-2$. 
\end{claim}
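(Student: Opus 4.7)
I would compute $\Delta\phi=\phi(\cset')-\phi(\cset)$ directly. By Theorem~\ref{thm: well-linked decomposition of small clusters}, any subsequent well-linked decomposition does not increase the potential, so it suffices to show that the raw split $C\to\set{X,Y}$ itself drops $\phi$ by more than $2$. Using the identity $\phi(\cset)=|\set{\text{inter-cluster edges}}|+\sum_{C\in\cset}|\out(C)|\,\phi(|\out(C)|)$ together with the shorthand $a=|\out(X)\cap\out(C)|$, $b=|\out(Y)\cap\out(C)|$, $e=|E(X,Y)|$, $h_X=a+e$, $h_Y=b+e$, $h_C=a+b$, this gives
\[
\Delta\phi = e + h_X\phi(h_X) + h_Y\phi(h_Y) - h_C\phi(h_C).
\]
The $(k_1,\alpha)$-violating witness supplies $T_1\sse\out(X)\cap\out(C)$, $T_2\sse\out(Y)\cap\out(C)$ with $|T_1|+|T_2|\leq k_1$ and $e<\alpha\min\set{|T_1|,|T_2|}$. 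Since $G[C]$ is connected $e\geq 1$, so $|T_1|,|T_2|>1/\alpha$; assuming WLOG $a\leq b$ we get $a>1/\alpha$, and choosing the witness to maximize $\min\set{|T_1|,|T_2|}$ yields $e<\alpha\min\set{a,k_1/2}$.

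In particular $e<\alpha a$, so $h_X=a+e<(1+\alpha)a\leq (1+\alpha)h_C/2\leq 2h_C/3$ and $h_Y=h_C-(a-e)<h_C$. Hence $\phi(h_X),\phi(h_Y)\leq\phi(h_C)$, the nonpositive $Y$-contribution $b(\phi(h_Y)-\phi(h_C))$ may be discarded, and using $\phi(h_C)\leq 1/(2^8\gamma)$ we arrive at $\Delta\phi\leq 1.01\,e-a(\phi(h_C)-\phi(h_X))$. A case split on the size of $X$ closes things. If $h_X<k_1$, then $\phi(h_C)-\phi(h_X)\geq 4\alpha\log(k_1/h_X)+4\alpha$; substituting $h_X\leq(1+\alpha)a$ and using $a\leq k_1$ to drop the nonnegative log term, the $X$-side decrease is at least $4\alpha a(1-\alpha)$, and combined with $1.01\,e<1.01\,\alpha a$ and $\alpha a>1$ we get $\Delta\phi<\alpha a(1.01-4(1-\alpha))<-2.9$ for our small $\alpha$. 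If $h_X\geq k_1$, write $h_C\in S_i$, $h_X\in S_j$ with $j<i$; then $\phi(h_C)-\phi(h_X)=12\alpha k_1(1/n_j-1/n_i)$, and combining $n_j\leq\tfrac{3}{2}h_X$ with the strict bound $n_i>h_C\geq 2h_X/(1+\alpha)$ (inherited from $h_X<(1+\alpha)h_C/2$) yields $1/n_j-1/n_i\geq (1-3\alpha)/(6h_X)$, hence a per-edge drop of at least $2\alpha k_1(1-3\alpha)/h_X$. Multiplying by $a\geq h_X/(1+\alpha)$ produces an $X$-side decrease of $\Omega(\alpha k_1)$, and with $e<\alpha k_1/2$ and $\alpha k_1>4$ we conclude $\Delta\phi<-\Omega(\alpha k_1)<-2$.

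The main obstacle I expect is the large-$X$ case. The naive one-bucket bound $\phi(h_C)-\phi(h_X)\geq 4\alpha k_1/n_{i-1}$ decays like $1/n_{i-1}$ and becomes useless when $h_C$ sits many buckets above $k_1$; with $a$ near its lower bound $1/\alpha$ the total decrease would then fall short of $2$. The refinement above exploits the fact that the split puts less than $(1+\alpha)/2$ of $h_C$ on the $X$-side, giving the strict inequality $n_i>2h_X/(1+\alpha)$; this forces $1/n_j-1/n_i$ to scale like $1/h_X$ rather than $1/n_{i-1}$, making the per-edge decrease proportional to $\alpha k_1/h_X$. Multiplying by $a\approx h_X$ produces a decrease proportional to $\alpha k_1=\Omega(1)$, comfortably clearing the $-2$ threshold.
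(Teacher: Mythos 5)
Your proposal is correct, and at the level of overall strategy it matches the paper's proof: reduce to the raw split via Theorem~\ref{thm: well-linked decomposition of small clusters}, charge the increase to the at most $\alpha k_1/2$ new cut edges at about $1$ apiece, discard the non-positive $Y$-side change, and extract a decrease of $\Omega(\alpha k_1)$ (or $\Omega(\alpha a)$ when $X$ becomes small) from the $|\out(X)\cap\out(C)|>h_X/2$ boundary edges of $X$. Where you genuinely depart is in how $\phi(h_C)-\phi(h_X)$ is lower-bounded when $h_X\geq k_1$: the paper asserts the chain $\phi(h)\geq\frac{4\alpha k_1}{n_{i-1}}+\phi(h_1)\geq\frac{4\alpha k_1}{h_1}+\phi(h_1)$, whose second inequality is reversed (one has $h_1<n_{i-1}$ here, so $\frac{4\alpha k_1}{n_{i-1}}<\frac{4\alpha k_1}{h_1}$), whereas you evaluate the telescoping sum exactly as $12\alpha k_1(1/n_j-1/n_i)$ and use the two-sided bucket bounds $n_j\leq\frac32 h_X$ and $n_i>2h_X/(1+\alpha)$ to get a per-edge drop of $\frac{2\alpha k_1(1-3\alpha)}{h_X}$. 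This loses roughly a factor of $2$ against the paper's claimed per-edge bound but is actually justified, and the slack $\alpha k_1>4$ absorbs it. You also treat the sub-threshold case $h_X<k_1$ explicitly via the logarithmic branch of $\phi$ and the lower bound $a>1/\alpha$ forced by connectivity of $G[C]$, a case the paper's write-up passes over silently; your version is the one I would keep.
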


\begin{proof}
Let $\cset''$ be the clustering obtained from $\cset$, by replacing $C$ with $X$ and $Y$. From Theorem~\ref{thm: well-linked decomposition of small clusters}, it is enough to prove that $\phi(\cset'')<\phi(\cset)-2$.

Assume w.l.o.g. that $|\out(X)|\leq |\out(Y)|$. 
Let $h=|\out(C)|$, $h_1=|\out(X)|$, $h_2=|\out(Y)|$, so $h_1< 2h/3$. Assume that $h\in S_i$. Then either $h_1\in S_{i'}$ for $i'\leq i-1$, or $h_1<k_1$. The changes in the potential can be bounded as follows:

\begin{itemize}
\item The potential of every edge in $\out(Y)\cap \out(C)$ does not increase.
\item The potential of every edge in $\out(X)\cap \out(C)$ goes down by $\phi(h)-\phi(h_1)$. 
Assuming that $h\in S_i$, $\phi(h)\geq \frac{4\alpha k_1}{n_{i-1}}+\phi(h_{1})\geq \frac{4\alpha k_1}{h_1}+\phi(h_{1})$. The total decrease in the potential of the edges in $\out(X)\cap \out(C)$ is at least 
$\frac{4\alpha k_1}{h_1}\cdot |\out(X)\cap \out(C)|\geq 2\alpha k_1$, since $|\out(X)\cap \out(C)|>h_1/2$.
\item Additionally, every edge in $E(X,Y)$ now pays $1+\phi(h_1)+\phi(h_2)<2$. Since $|E(X,Y)|\leq \alpha k_1/2$, the total increase in this part is at most $\alpha k_1$.
\end{itemize}

The overall decrease in the potential is at least $\alpha k_1\geq 2$.
\end{proof}

{\bf Action 2: Separating a large cluster.}
Let $\cset$ be the current acceptable partition, and let $C\in \cset$ be a large cluster in $\cset$. Assume further that we are given a cut $(A,B)$ in graph $G$, with $C\sse A$, $\tset\sse B$, and $|E_G(A,B)|< k_1/2$. We perform the following operation, that we denote by $\separate(C,A)$.

Consider some cluster $S\in \cset$. If $S$ is a small cluster, but $S\setminus A$ is a large cluster, then we modify $A$ by removing all vertices of $S$ from it. Notice that in this case, the number of edges in $E(S)$ that originally contributed to the cut $(A,B)$,  $|E(S\cap A,S\cap  B)|>|\out(S)\cap E(A)|$ must hold, so $|\out(A)|$ only goes down as a result of this modification. We assume from now on that if $S\in \cset$ is a small cluster, then $S\setminus A$ is also a small cluster.
We build a new partition $\cset'$ of $V(G)$ as follows. First, we add every connected component of $G[A]$ to $\cset$. Notice that all these clusters are small, as $|\out(A)|<k_1/2$. Next, for every cluster $S\in\cset$, such that $S\setminus A\neq \emptyset$, we add every connected component of $G[S\setminus A]$ to $\cset'$. Notice that every terminal $t\in \tset$ is added as a separate cluster to $\cset'$. So far we have defined a new partition $\cset'$ of $V(G)$. This partition may not be acceptable, since we are not guaranteed that every small cluster of $\cset'$ is well-linked. In our final step, we perform a well-linked decomposition of every small cluster of $\cset'$, using Theorem~\ref{thm: well-linked decomposition of small clusters}, and obtain the final acceptable partition $\cset''$ of vertices of $G$. 
Notice that if $S\in \cset''$ is a large cluster, then there must be some {\bf large} cluster $S'$ in the original partition $\cset$ with $S\sse S'$.

\begin{claim}\label{claim: bound on potential for separation}
 Let $\cset''$ be the outcome of operation  $\separate(C,A)$. Then $\phi(\cset'')\leq \phi(\cset)-1$. 
 \end{claim}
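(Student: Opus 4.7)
The plan is first to reduce the claim to showing $\phi(\cset')\leq\phi(\cset)-1$, and then to verify this by an edge-level potential accounting. The final step in $\separate(C,A)$ applies Theorem~\ref{thm: well-linked decomposition of small clusters} to each small cluster of $\cset'$, so by that theorem $\phi(\cset'')\leq\phi(\cset')$, and only the comparison between $\phi(\cset)$ and $\phi(\cset')$ remains.

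For that comparison I would split the potential using the identity $\phi(\cset)=|E^{\mathrm{cut}}_{\cset}|+\sum_{S\in\cset}|\out(S)|\phi(|\out(S)|)$ and handle the two summands separately. To bound the change in the cut-edge count, let $E_1:=\out(C)\cap E(A)$: since $|\out(C)|\geq k_1$ and $|E(A,B)|<k_1/2$, we have $|E_1|>k_1/2$, and every edge of $E_1$ has both endpoints in $A$ and therefore lies inside the connected component of $G[A]$ containing $C$, so it is cut in $\cset$ but internal in $\cset'$. Conversely, the only edges that can become newly cut in $\cset'$ are those in $E(A,B)\setminus E^{\mathrm{cut}}_{\cset}$, and there are strictly fewer than $k_1/2$ of those, so the cut-edge count drops by at least $1$. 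For the $\phi$-term, I would argue that the contribution of $C$, at least $k_1\phi(k_1)$, gets replaced in $\cset'$ by contributions from the components of $G[A]$, whose combined boundary equals $|E(A,B)|<k_1/2$ and whose $\phi$-values lie in the ``small-cluster'' regime bounded by $\phi(k_1/2)=4\alpha\log k_1-4\alpha$, yielding a net savings of order $\alpha k_1\log k_1$. For every other $S\in\cset$, the sum of $|\out(D)|$ over the components $D$ of $G[S\setminus A]$ is at most $|\out(S)|+|E(A,B)\cap E(S)|$, and using the monotonicity of $h\mapsto h\phi(h)$ together with the aggregate bound $\sum_S|E(A,B)\cap E(S)|\leq|E(A,B)|<k_1/2$, the overhead from all these splits should be absorbed into the savings from $C$. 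Combined with the cut-edge drop this will give $\phi(\cset')\leq\phi(\cset)-1$.

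The hard part will be controlling the overhead above when $S\neq C$ is itself a large cluster that straddles $(A,B)$: a component of $G[S\setminus A]$ can then still be large, so $\phi(|\out(D)|)$ lies in the plateau regime above $\phi(k_1)$, and the naive estimate $|\out(S)|\bigl(\phi(|\out(D)|)-\phi(|\out(S)|)\bigr)$ scales with $|\out(S)|$, which may be much larger than $k_1$. The argument will have to exploit the fine structure of $\phi$ in that regime---specifically the increments $\phi(n_i)-\phi(n_{i-1})=4\alpha k_1/n_i$ between consecutive size levels, which imply $\phi(h')-\phi(h)\leq 4\alpha k_1/h$ whenever $h'\leq \tfrac32 h$ and $h\geq k_1$---to bound each such jump by $O(\alpha k_1)$, and then use the fact that each large straddling $S$ consumes at least one edge of $E(A,B)$, together with $\sum_S|E(A,B)\cap E(S)|<k_1/2$, to dominate the total overhead by the $\alpha k_1\log k_1$ savings coming from $C$.
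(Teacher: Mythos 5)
Your reduction to showing $\phi(\cset')\leq\phi(\cset)-1$ matches the paper, and your global identity $\phi(\cset)=|E^{\mathrm{cut}}_{\cset}|+\sum_{S\in\cset}|\out(S)|\,\phi(|\out(S)|)$ together with the cut-edge count argument is sound as far as it goes. But the paper's actual proof is a two-line edge-level estimate that never opens up the $\phi(h)$ terms of clusters other than $C$: it asserts that the only edges whose potential can go up are the fewer than $(k_1-1)/2$ edges of $\out(A)$, each ending at potential at most $1.1$, while each of the at least $k_1$ edges of $\out(C)$ loses its entire old potential of at least $1$ (more than half of them become internal to the component of $G[A]$ containing $C$, and the rest lie in $\out(A)$ and are already charged there), giving a decrease of at least $k_1-1.1(k_1-1)/2\geq 1$.

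The genuine gap in your proposal is precisely in the case you yourself flag as hard, and your proposed fix does not close it. Your per-cluster overhead bound for a large straddling cluster $S$ is $O(\alpha k_1)$ (a jump of at most $4\alpha k_1/h$ on each of roughly $h=|\out(S)|$ boundary edges), and the only bound you have on the number of such clusters is that each consumes at least one edge of $E(A,B)$, i.e.\ there may be up to $k_1/2$ of them. The total overhead is therefore $O(\alpha k_1^2)$, which is \emph{not} dominated by the $O(\alpha k_1\log k_1)$ savings from dissolving $C$, nor by the $O(k_1)$ savings in the cut-edge count (recall $\alpha k_1>4$, and in fact $\alpha k_1$ is polynomial in $k$). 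The domination asserted in your last sentence is arithmetically false. To finish along these lines you would need either to prove that $\phi(|\out(D)|)\leq\phi(|\out(S)|)$ for every surviving piece $D$ of every cluster $S\neq C$ --- which is exactly what the paper implicitly assumes when it declares the edges of $\out(A)$ to be the only ones whose potential can increase, and which is not automatic, since $|\out(S\setminus A)|$ exceeds $|\out(S)|$ whenever $|E(S\cap A,S\setminus A)|$ exceeds the number of edges of $\out(S)$ incident on $S\cap A$ --- or to obtain a per-cluster overhead proportional to $|E(A,B)\cap E(S)|$ rather than to $k_1$; neither is supplied in the proposal.
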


 \begin{proof}
 In order to prove the claim, it is enough to prove that $\phi(\cset')\leq \phi(\cset)-1$, since, from Theorem~\ref{thm: well-linked decomposition of small clusters}, well-linked decompositions of small clusters do not increase the potential.

We now show that $\phi(\cset')\leq \phi(\cset)-2$.
We can bound the changes in the potential as follows:

\begin{itemize}
\item Every edge in $\out(A)$ contributes at most $1.1$ to the potential of $\cset''$, and there are at most $ \frac{k_1-1} 2$ such edges. These are the only edges whose potential in $\cset''$ may be higher than their potential in $\cset$.

\item Every edge in $\out(C)$ contributed at least $1$ to the potential of $\cset'$, and there are at least $k_1$ such edges, since $C$ is a large cluster.
\end{itemize}

Therefore, the decrease in the potential is at least $k_1-\frac{1.1(k_1-1)}2\geq 1$.
\end{proof}

To summarize, given any acceptable clustering $\cset$ of the vertices of $G$, let $E'$ be the set of edges whose endpoints belong to distinct clusters of $\cset$. Then $|E'|\leq \phi(\cset)\leq 1.1|E'|$. So the potential is a good estimate on the number of edges connecting the different clusters. We have also defined two actions on large clusters of $\cset$: $\partition(C,X,Y)$, that replaces a large cluster $C$ with a pair $X,Y$ of clusters, where $(X,Y)$ is a $(k,\alpha)$-violating partition of $C$, and $\separate(C,A)$, where $A$ is a cut of size less than $k/2$, separating a large cluster $C$ from the terminals. Each such action returns a new acceptable clustering, whose potential goes down by at least $1$.

\label{------------------------------------------------------sec: alg---------------------------------------------}
\section{The Algorithm}\label{sec: Alg}

In this section we prove Theorem~\ref{thm: main: find good crossbar or find a routing}, by providing an efficient randomized algorithm, that w.h.p. either computes a subset $\mset'\sse \mset$ of $k/\poly\log k$ demand pairs and their routing with congestion at most $2$ in $G$, or finds a good congestion-$2$ crossbar in $G$.

We maintain, throughout the algorithm, a good clustering $\cset$ of $G$. Initially, $\cset$ is a partition of $V(G)$, where every vertex of $G$ belongs to a distinct cluster, that is, $\cset=\set{\set{v}\mid v\in V(G)}$. Clearly, this is a good clustering. We then perform a number of phases. In every phase, we start with some good clustering $\cset$ and the corresponding legal contracted graph $H_{\cset}$. The phase output is one of the following: either (1) a subset $\mset'\sse \mset$ of $k/\poly\log k$ pairs, with  the routing of the pairs in $\mset'$ with congestion at most $2$ in $G$, or (2) a good congestion-$2$ crossbar $(\sset^*,\mset^*,\tau^*)$ in $G$; or (3) another good clustering $\cset'$, such that $\phi(\cset')\leq \phi(\cset)-1$. 
In the first two cases, we terminate the algorithm, and output either the routing of the pairs in $\mset'$, or the good crossbar. In the latter case, we continue to the next phase. After $O(|E|)$ phases, our algorithm will successfully terminate with the required output. It is therefore enough to prove the following theorem.

\begin{theorem}\label{thm: find good crossbar or a smaller contracted graph}
Let $\cset$ be any good clustering of the vertices of $G$, and let $H_{\cset}$ be the corresponding legal contracted graph. Then there is an efficient randomized algorithm that w.h.p. computes one of the following:

\begin{itemize}
\item Either a subset $\mset'\sse \mset$ of $k/\poly\log k$ pairs and a routing of the pairs in $\mset'$ in graph $G$ with congestion at most $2$;

\item or a good congestion-$2$ crossbar $(\sset^*,\mset^*,\tau^*)$ in $G$;

\item or a new good clustering $\cset'$, such that $\phi(\cset')\leq \phi(\cset)-1$.
\end{itemize}
\end{theorem}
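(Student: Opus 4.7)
My plan is to reduce Theorem~\ref{thm: find good crossbar or a smaller contracted graph} to an iteration subroutine (whose proof is the content of Section~\ref{sec: proof of iteration theorem}) and combine it with the potential-reducing operations $\partition$ and $\separate$ established in Section~\ref{sec: clusterings}. The procedure maintains an acceptable (not necessarily good) clustering $\cset^*$ of $V(G)$, initialized to $\cset^* := \cset$. At each iteration it invokes the subroutine on $\cset^*$; the subroutine returns one of the following: (a) a routing of $k/\poly\log k$ pairs with congestion at most $2$ in $G$; (b) a good congestion-$2$ crossbar $(\sset^*,\mset^*,\tau^*)$; or (c) either a $(k_1,\alpha)$-violating partition $(X,Y)$ of some large cluster $C\in\cset^*$ or a cut $(A,B)$ with $C\sse A$, $\tset\sse B$ and $|E_G(A,B)|<k_1/2$ for some large $C\in\cset^*$. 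In case (a) or (b) we terminate with the corresponding success output; in case (c) we apply $\partition(C,X,Y)$ or $\separate(C,A)$ respectively, which by Claims~\ref{claim: bound on potential for partition} and~\ref{claim: bound on potential for separation} produces a new acceptable clustering whose potential drops by at least~$1$, and then iterate.

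Because $\phi$ is non-negative and always bounded above by $1.1|E|$, and because each case~(c) update strictly decreases $\phi$ by at least $1$, the loop terminates after at most $1.1|E|$ iterations. To guarantee that termination yields an output of the desired form, I need the invariant that whenever $\cset^*$ still contains a large cluster, the subroutine is forced to produce output (b), output (c), or output (a); in particular, if the loop ever exits in case (c) with $\phi(\cset^*)\leq \phi(\cset)-1$ and no large clusters remaining, then $\cset^*$ is a good clustering and we return it. This can be arranged because any large cluster $C$ in the current $\cset^*$ either fails to be $(k_1,\alpha)$-well-linked (yielding case (c) via $\partition$), or can be separated from $\tset$ by a small cut (yielding case (c) via $\separate$), or is sufficiently well-connected to the terminals for the crossbar construction to go through, yielding case (a) or (b). Summing the per-iteration potential drops against the upper bound on $\phi(\cset)$ gives the polynomial iteration bound, and well-linked decomposition (Theorem~\ref{thm: well-linked decomposition of small clusters}) can be interleaved freely since it never increases $\phi$.

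The hard part will be establishing the iteration subroutine itself, which is what Section~\ref{sec: proof of iteration theorem} must accomplish. The subroutine must simultaneously select a candidate family $\fset=\set{S_1,\ldots,S_{\gamma}}$ of pairwise-disjoint non-terminal vertex subsets together with their designated edge sets $\Gamma_j^*\sse \out(S_j)$ making each $S_j$ $1$-well-linked for $\Gamma_j^*$, choose a subset $\mset^*\sse \mset$ of $k^*$ demand pairs, and build the tree collection $\tau^*$ connecting each $t_i\in\tset(\mset^*)$ to exactly one edge of each $\Gamma_j^*$, all while respecting the crossbar congestion constraint. The principal obstacle, discussed in the algorithm overview, is that the three-step construction of~\cite{EDP-old} accumulates congestion well beyond $2$ (both because paths in $\pset_j$ may re-enter other clusters $S_{j'}$, and because the tree construction and the terminal-to-tree routing contribute congestion separately); our task is to fuse these steps into a single construction, and, whenever any stage fails, to translate that failure into exactly the kind of $(k_1,\alpha)$-violating partition or small separating cut needed to feed $\partition$ or $\separate$, thereby closing the potential-reduction loop.
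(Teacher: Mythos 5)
Your outer loop (invoke an iteration subroutine; on failure apply $\partition$ or $\separate$ to drop the potential by $1$; terminate when either a routing/crossbar is found or the clustering becomes good) is indeed the skeleton of the paper's proof. But there is a genuine gap at the very first step: you initialize $\cset^*:=\cset$, and $\cset$ is by hypothesis a \emph{good} clustering, i.e.\ it contains no large clusters and satisfies $\phi(\cset^*)=\phi(\cset)$. Your subroutine's case (c) outputs, and the operations $\partition$ and $\separate$, all act on large clusters, so with this initialization the loop has nothing to act on, no mechanism to force the first potential drop, and no candidate sets from which to build the family $\sset^*$ of the crossbar. The statement you must prove requires producing a good clustering with potential \emph{strictly} smaller than $\phi(\cset)$, and nothing in your proposal manufactures that initial decrease.

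The missing ingredient is the paper's setup phase: randomly partition the non-terminal vertices of the legal contracted graph $H_{\cset}$ into $\gamma$ parts $X_1,\ldots,X_\gamma$ so that (w.h.p., after verification and resampling) each part satisfies $|E_{H_\cset}(X_j)|\geq m/(2\gamma^2)$ while $|\out_{H_\cset}(X_j)|<10m/\gamma$. Un-contracting each $X_j$ and taking connected components (followed by well-linked decomposition of the small ones) yields $\gamma$ \emph{separate} acceptable clusterings $\cset_1,\ldots,\cset_\gamma$, each already satisfying $\phi(\cset_j)\leq\phi(\cset)-1$ — because the many edges of $E_{H_\cset}(X_j)$ become internal and stop contributing to the potential, outweighing the small increase on $\out(X_j)$ — and each containing large clusters confined to $X_j'$. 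This simultaneously supplies the $\gamma$ pairwise-disjoint, terminal-free large clusters $S_1,\ldots,S_\gamma$ that the iteration theorem consumes (and that ultimately become the well-linked sets of the crossbar), and guarantees that whenever any $\cset_j$ loses its last large cluster it is a good clustering certifying progress. Note also that the paper deliberately maintains $\gamma$ parallel clusterings rather than your single $\cset^*$, since the clusterings derived from different $X_j$'s need not be mutually consistent; each $\partition$/$\separate$ step is charged to the potential of the one $\cset_j$ it modifies. Without this setup phase your reduction does not close.
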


The rest of this section is dedicated to proving Theorem~\ref{thm: find good crossbar or a smaller contracted graph}.

We assume that we are given a good clustering $\cset$ of the vertices of $G$, and the corresponding legal contracted graph $G'=H_{\cset}$.

 Let $m$ be the number of edges in $G'\setminus \tset$. From Claim~\ref{claim: legal graph has many edges}, $m\geq k/3$.
As a first step, we randomly partition the vertices in $G'\setminus \tset$ into $\gamma$ subsets $X_1,\ldots,X_{\gamma}$,
where each vertex $v\in V(G')\setminus \tset$ selects an index $1\leq j\leq \gamma$ independently uniformly at random, and is then added to $X_j$. We need the following claim, that appeared in~\cite{EDP-old}. The proof appears in Section~\ref{sec: proof of partitioning claim} of the Appendix for completeness.

\begin{claim}\label{claim: random partition into gamma sets} With probability at least $\half$, for each $1\leq j\leq \gamma$, $|\out_{G'}(X_j)|< \frac{10m}{\gamma}$, while $|E_{G'}(X_j)|\geq \frac{m}{2\gamma^2}$.
\end{claim}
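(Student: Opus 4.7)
My approach is a standard second-moment calculation followed by a union bound over the $\gamma$ sets and the two events.

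First, by linearity of expectation I would compute the means. Since each non-terminal vertex of $G'$ picks its label independently and uniformly from $\set{1,\ldots,\gamma}$, for every edge $e\in E(G')$ with both endpoints non-terminal one has $\Pr[e\in E_{G'}(X_j)]=1/\gamma^2$ and $\Pr[e\in \out_{G'}(X_j)]=2(1-1/\gamma)/\gamma\leq 2/\gamma$, while for every edge with a terminal endpoint $\Pr[e\in \out_{G'}(X_j)]=1/\gamma$ and $e\notin E_{G'}(X_j)$. Writing $m_T$ for the number of edges incident to terminals, every terminal has degree $1$ so $m_T\leq k$, while Claim~\ref{claim: legal graph has many edges} gives $m\geq k/3$, whence $m_T\leq 3m$. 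Therefore $\mathbb{E}[|E_{G'}(X_j)|]=m/\gamma^2$ and $\mathbb{E}[|\out_{G'}(X_j)|]\leq (2m+m_T)/\gamma\leq 5m/\gamma$.

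Second, I would bound the variances using the mutual independence of the vertex labels. For the out-edges the cleanest handle is the identity $|\out_{G'}(X_j)|+2|E_{G'}(X_j)|=\sum_{v\in X_j}d_v$, whose right-hand side is a \emph{sum of independent} random variables indexed by the non-terminal vertices $v$ of $G'$ weighted by $d_v$; since $\cset$ is a good clustering, every cluster is small, so the maximum degree $\Delta$ of $G'$ is at most $k_1$, and the variance of the sum is at most $\sum_v d_v^2/\gamma\leq \Delta(2m+m_T)/\gamma$. For $|E_{G'}(X_j)|$ I would expand as a double sum over edge pairs, use the fact that pair covariances vanish unless the two edges share a vertex, and bound the number of vertex-sharing pairs by $O(\Delta m)$, giving $\text{Var}[|E_{G'}(X_j)|]\leq m/\gamma^2+O(\Delta m/\gamma^3)$. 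Chebyshev then yields $\Pr[|\out_{G'}(X_j)|\geq 10m/\gamma]=O(\Delta\gamma/m)$ and $\Pr[|E_{G'}(X_j)|<m/(2\gamma^2)]=O(\gamma^2/m+\Delta/m)$.

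Third, plugging in the paper's quantitative choices ($\gamma=2^{24}\gkrv^4=O(\log^8 k)$, $k_1\leq k/(192\gamma^3\log\gamma)$, and $m\geq k/3$), each of the failure probabilities above is $o(1/\gamma)$, so a union bound over $j=1,\ldots,\gamma$ and over the two events pushes the overall failure probability below $1/2$. The main obstacle I foresee is the covariance bookkeeping for $|E_{G'}(X_j)|$: since $G'$ can carry parallel edges (they are explicitly retained when forming $H_{\cset}$), distinct edges may share zero, one, or even two endpoints, and each case must be handled without over-counting; once this is done, the check that $\Delta\gamma/m$ and $\gamma^2/m$ are small enough under the parameter regime is routine, using $\Delta/m\leq 3k_1/k$ together with $k_1\gamma^2/k=O(1/(\gamma\log\gamma))$.
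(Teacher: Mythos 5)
Your argument is sound and reaches the right conclusion, but it takes a genuinely different route from the paper. The paper proves all three bounds via Chernoff: the degree sums $\sum_{v\in X_j}d(v)$ and the terminal contributions are handled as two separate sums of independent variables scaled by $k_1$ (your single identity $|\out_{G'}(X_j)|+2|E_{G'}(X_j)|=\sum_{v\in X_j}d_{G'}(v)$ merges these two events into one, which is cleaner), and for $|E_{G'}(X_j)|$ the paper first invokes the Hajnal--Szemer\'edi theorem to partition $E(G'\setminus\tset)$ into at most $2k_1$ classes of mutually independent (non-adjacent) edges, applies Chernoff within each class, and union-bounds over the classes. Your second-moment computation sidesteps Hajnal--Szemer\'edi entirely by absorbing the edge-indicator dependencies into a covariance bound, at the cost of obtaining only polynomially small failure probabilities instead of exponentially small ones; since the claim asks only for success probability $\half$ and the algorithm verifies the partition and resamples, this weaker concentration is entirely sufficient, and your parameter check ($\gamma^2\Delta/m\le 3\gamma^2 k_1/k=O(1/(\gamma\log\gamma))$) confirms the union bound goes through.

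One quantitative correction in the step you yourself flagged as the obstacle: for two \emph{parallel} edges $e,e'$ (same pair of non-terminal endpoints), $\mathbb{E}[\mathbf{1}_e\mathbf{1}_{e'}]=1/\gamma^2$, so their covariance is $1/\gamma^2-1/\gamma^4$, not $O(1/\gamma^3)$ as for edges sharing exactly one endpoint. Since a pair of supernodes can be joined by up to $\Delta\le k_1$ parallel edges, the parallel-pair contribution to the variance is $O(\Delta m/\gamma^2)$, so the correct bound is $\mathrm{Var}[|E_{G'}(X_j)|]\le m/\gamma^2+O(\Delta m/\gamma^2)$ rather than $m/\gamma^2+O(\Delta m/\gamma^3)$. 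Chebyshev then gives failure probability $O(\gamma^2\Delta/m)$, which under the paper's parameters is still $O(1/(\gamma\log\gamma))=o(1/\gamma)$, so the union bound and the claim survive unchanged.
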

 

 Given a partition $X_1,\ldots,X_{\gamma}$, we can efficiently check whether the conditions of Claim~\ref{claim: random partition into gamma sets} hold. If they do not hold, we repeat the randomized partitioning procedure.  From Claim~\ref{claim: random partition into gamma sets}, we are guaranteed that w.h.p., after $\poly(n)$ iterations, we will obtain a partition with the desired properties. Assume now that we are given the partition $X_1,\ldots,X_{\gamma}$ of $V(G')\setminus \tset$, for which the conditions of Claim~\ref{claim: random partition into gamma sets} hold. Then for each $1\leq j\leq \gamma$, $|E_{G'}(X_j)|>\frac{|\out_{G'}(X_j)|}{20\gamma}$. Let $X'_j\sse V(G)\setminus \tset$ be the set obtained from $X_j$, after we un-contract each cluster, that is, for each super-node $v_C\in X_j$, we replace $v_C$ with the vertices of $C$. Notice that $\set{X'_j}_{j=1}^{\gamma}$ is a partition of $V(G)\setminus\tset$.

The plan for the rest of the proof is as follows. For each $1\leq j\leq \gamma$, we will maintain an acceptable clustering $\cset_j$ of the vertices of $G$. That is, for each $1\leq j\leq \gamma$, $\cset_j$ is a partition of $V(G)$. In addition to being an acceptable clustering, it will have the following property:

\begin{properties}{P}
\item If $C\in \cset_j$ is a large cluster, then $C\subseteq X'_j$. \label{property of Cj}
\end{properties}

The initial partition $\cset_j$, for $1\leq j\leq \gamma$ is obtained as follows. Recall that $\cset$ is the current good clustering of the vertices of $G$, and every cluster $C\in \cset$ is either contained in $X'_j$, or it is disjoint from it. First, we add to $\cset_j$ all clusters $C\in \cset$ with $C\cap X'_j=\emptyset$. Next, we add to $\cset_j$ all connected components of $G[X'_j]$. If any of these components is  a small cluster, then we perform a well-linked decomposition of this cluster, using Theorem~\ref{thm: well-linked decomposition of small clusters}, and update $\cset_j$ accordingly. Let $\cset_j$ be the resulting final partition. Clearly, it is an acceptable clustering, with property (\ref{property of Cj}). Moreover, we show that $\phi(\cset_j)\leq \phi(\cset)-1$:

\begin{claim}
For each $1\leq j\leq \gamma$, $\phi(\cset_j)\leq \phi(\cset)-1$.
\end{claim}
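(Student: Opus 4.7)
The plan is to track how $\phi$ changes edge-by-edge between $\cset$ and $\cset_j$. Since Theorem~\ref{thm: well-linked decomposition of small clusters} guarantees that the final well-linked decomposition step does not raise the potential, it suffices to prove $\phi(\cset_j^{(0)}) \leq \phi(\cset) - 1$ where $\cset_j^{(0)}$ is the intermediate clustering consisting of (i) every cluster $C \in \cset$ with $C \cap X'_j = \emptyset$ kept unchanged, and (ii) the connected components of $G[X'_j]$. A key preliminary observation is that, since $\cset$ is a partition of $V(G)$ and every super-node of $G' = H_{\cset}$ lies in exactly one $X_j$, each cluster $C \in \cset$ is either entirely contained in $X'_j$ or disjoint from it, so no cluster of $\cset$ is ``split'' in passing to $\cset_j^{(0)}$: clusters only merge.

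I would then classify the edges of $G$ into four categories. (a) Edges internal to some $C \in \cset$ remain internal in $\cset_j^{(0)}$ and contribute $0$ to both sides. (b) Edges between two clusters of $\cset$ with both endpoints outside $X'_j$ are unchanged, contributing the same amount to both. (c) Edges in $E_{G'}(X_j)$ (cross-cluster in $\cset$, with both endpoints in $X'_j$) become internal to a single connected component of $G[X'_j]$ in $\cset_j^{(0)}$, since an edge between two clusters inside $X'_j$ automatically puts them in the same component; thus their $\phi$-contribution drops from at least $1$ to $0$. (d) Edges in $\out_{G'}(X_j)$ (cross-cluster in $\cset$, one endpoint in $X'_j$ and one outside) keep their outside endpoint's cluster unchanged, but their $X'_j$-endpoint now lies in a possibly larger cluster $C^*$; the potential increases by at most $\phi(|\out_G(C^*)|) - \phi(|\out_G(C)|)$, which is bounded by the universal estimate $\phi(h) \leq 1/(2^8\gamma)$ established in Section~\ref{sec: clusterings}.

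I would then plug in Claim~\ref{claim: random partition into gamma sets}: the total decrease from category (c) is at least $|E_{G'}(X_j)| \geq m/(2\gamma^2)$, while the total increase from category (d) is at most $|\out_{G'}(X_j)| \cdot 1/(2^8\gamma) < 10m/(2^8\gamma^2)$. The net drop is therefore at least $\frac{m}{\gamma^2}\bigl(\tfrac12 - \tfrac{10}{256}\bigr) \geq \frac{m}{4\gamma^2}$. Using $m \geq k/3$ from Claim~\ref{claim: legal graph has many edges} together with $\gamma = \poly\log k$ and the paper's standing assumption that $k$ is large enough (so that $k_1 > 4/\alpha$, and in particular $k \geq \gamma^2$), this net drop is at least $1$, giving $\phi(\cset_j^{(0)}) \leq \phi(\cset) - 1$, and hence the desired inequality after the well-linked decomposition step.

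The only places that need care are (i) confirming that in category (d) only one of the two $\phi(h)$ contributions in the formula $\phi(e) = 1 + \phi(h) + \phi(h')$ can grow, namely the one for the $X'_j$-endpoint; (ii) verifying the uniform bound $\phi(h) \leq 1/(2^8\gamma)$ in both regimes $h < k_1$ and $h \geq k_1$, using the explicit formulas for $\phi$; and (iii) confirming that the parameter hierarchy makes $m/(4\gamma^2) \geq 1$. I do not expect any of these to be a real obstacle; the substantive content of the proof is entirely the bookkeeping in the four-category decomposition, which is the only nontrivial step.
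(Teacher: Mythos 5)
Your proposal is correct and follows essentially the same route as the paper's proof: reduce to the intermediate clustering before the well-linked decomposition, credit a drop of at least $1$ per edge of $E_{G'}(X_j)$, charge the increase only to edges of $\out_{G'}(X_j)$ at $\frac{1}{2^8\gamma}$ each, and close with Claim~\ref{claim: random partition into gamma sets} and $m\geq k/3$. The only cosmetic difference is that you bound the increase directly by $\frac{10m}{2^8\gamma^2}$ where the paper first converts it to $\frac{|E_{G'}(X_j)|}{4}$; the arithmetic is the same.
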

\begin{proof}
Let $\cset'_j$ be the partition of $V(G)$, obtained as follows: we add to $\cset'_j$ all clusters $C\in \cset$ with $C\cap X_j=\emptyset$, and we add all connected components of $G[X_j]$ to $\cset'_j$ (that is, $\cset'_j$ is obtained like $\cset_j$, except that we do not perform well-linked decompositions of the small clusters). From Theorem~\ref{thm: well-linked decomposition of small clusters}, it is enough to prove that $\phi(\cset'_j)\leq \phi(\cset)-1$. The changes of the potential from $\cset$ to $\cset'_j$ can be bounded as follows:

\begin{itemize}
\item The edges in $E_{G'}(X_j)$ contribute at least $1$ to $\phi(\cset)$ and contribute $0$ to $\phi(\cset'_j)$.
\item The potential of edges in $\out_G(X'_j)$ may increase. The increase is at most $\phi(n)\leq \frac{1}{2^8\gamma}$ per edge. So the total increase is at most $\frac{|\out_{G'}(X_j)|}{2^8\gamma}\leq \frac{|E_{G'}(X_j)|}{4}$. These are the only edges whose potential may increase.
\end{itemize}

Overall, the decrease in the potential is at least $\frac{|E_{G'}(X_j)|}{2}\geq\frac{m}{4\gamma^2}\geq \frac{k}{12\gamma^2}\geq 1$.
\end{proof}

 If any of the partitions $\cset_1,\ldots,\cset_{\gamma}$ is a good partition, then we have found a good partition $\cset'$ with $\phi(\cset')\leq \phi(\cset)-1$. We terminate the algorithm and return $\cset'$. Otherwise, we select an arbitrary large cluster $S_j\in \cset_j$. We then consider the resulting collection $S_1,\ldots,S_{\gamma}$ of large clusters, and try to exploit them to construct a good crossbar. Since for each $1\leq j\leq \gamma$, $S_j\sse X'_j$, the sets $S_1,\ldots,S_{\gamma}$ are mutually disjoint and they do not contain terminals.
 Our algorithm performs a number of iterations, using the following theorem.

 \begin{theorem}\label{thm: iteration}
  Suppose we are given, for each $1\leq j\leq \gamma$, an acceptable partition $\cset_j$ of $V(G)$ that has the property (\ref{property of Cj}), and contains at least one large cluster $S_j$. Then there is an efficient randomized algorithm, that w.h.p. computes one of the following:
 
 \begin{itemize}
 \item Either a subset $\mset'\sse \mset$ of $k/\poly\log k$ demand pairs, and a routing of pairs in $\mset'$ with congestion at most $2$ in $G$;
 
 \item Or a good congestion-$2$ crossbar $(\sset^*,\mset^*,\tau^*)$;

 \item Or a $(k_1,\alpha)$-violating partition $(X,Y)$ of $S_j$,  for some $1\leq j\leq \gamma$;
 
 \item Or a cut $(A,B)$ in $G$ with $S_j\sse A$, $\tset\sse B$ and $|E_G(A,B)|<k_1/2$, for some $1\leq j\leq \gamma$.
 \end{itemize}
 \end{theorem}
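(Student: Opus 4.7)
I would treat the $\gamma$ large clusters $S_1,\ldots,S_\gamma$ as candidates for the family $\sset^*$ of a good congestion-$2$ crossbar, and attempt to extend them into a complete crossbar, outputting a sparse cut of some $S_j$ (case iii) or a small cut separating some $S_j$ from $\tset$ (case iv) as soon as any stage of the construction fails in a localizable way.

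First, for every $1\le j\le \gamma$ I would verify $(k_1,\alpha)$-well-linkedness of $S_j$ by running $\algsc$ on $\SC(G,S_j)$; if a cut of sparsity below $\alpha$ is returned, the corresponding partition of $S_j$ is $(k_1,\alpha)$-violating and yields case (iii). Next, for each $j$, I would compute an integral unit-capacity max-flow in $G$ from $\out(S_j)$ (each edge a unit source) to $\tset$ (unit sinks); a flow value below $k_1/2$ for some $j$ gives a set $A\supseteq S_j$ disjoint from $\tset$ with $|E(A,\overline A)|<k_1/2$, realizing case (iv). Assuming both tests succeed for all $j$, each $S_j$ is $\alphaWL$-well-linked and comes equipped with $k_1/2$ unit-congestion edge-disjoint paths to distinct terminals.

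I would then apply the many-subset grouping theorem (Theorem~\ref{thm: grouping-many-sets}) to the $\gamma$ terminal endpoint sets produced by the Step-2 flows, using the global $1$-well-linkedness of $\tset$, to distill mutually disjoint terminal subsets $\tset_j^*$ together with edge subsets $\Gamma^*_j\sse \out(S_j)$, each of size $\Theta(k/\poly\log k)$, such that $\bigcup_j \tset_j^*$ is $1$-well-linked in $G$ and each $\Gamma^*_j$ is routable to $\tset_j^*$ in $G$ with congestion $1$. With this scaffolding in place, I would run the cut-matching game of Theorem~\ref{thm: CMG} on a target of $2k^*$ terminals drawn from the $\tset_j^*$'s, devoting cluster $S_j$ to round $j$ of the game: the endpoints of the matching requested by the cut player are pushed through the Step-2 flows into $\Gamma^*_j$, then re-paired inside $S_j$ using its $\alphaWL$-well-linkedness together with the $O(\log k)$ flow-cut gap from Section~\ref{sec: Prelims}. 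The budget $\gamma=2^{24}\gkrv^4$ is far larger than the $\gkrv$ rounds of the game, leaving polylogarithmic slack to absorb grouping losses and failed candidates.

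Re-indexing the resulting paths by terminal rather than by round yields the trees $\tau^*$, each containing one distinct terminal and one distinct edge from each $\Gamma^*_j$. Since cluster $S_j$ is touched only during round $j$, every edge of $E(S_j)$ lies in at most one tree; every other edge is used at most twice, once entering and once leaving a cluster, so the global congestion is $2$ with internal congestion $1$, as required. For $\mset^*$, I would either take demand pairs directly induced on the chosen terminals and return the crossbar (case ii), or, if such pairs are scarce, invoke the KRV/Rao--Zhou routing engine inherited from \cite{EDP-old} to turn the crossbar into an explicit routing of $k/\poly\log k$ demand pairs with congestion $2$ (case i). The main obstacle I anticipate is the cut-matching step: one must show that whenever the matching requested by the cut player cannot be realized in $S_j$ within the prescribed congestion-$2$ budget, the obstruction localizes to $S_j$ and surfaces either as a $(k_1,\alpha)$-violating partition of $S_j$ or as a cut of size less than $k_1/2$ separating $S_j$ from $\tset$, so that every failure falls cleanly into case (iii) or case (iv) rather than leaving the algorithm stuck without a way to reduce the potential.
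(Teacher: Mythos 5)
Your proposal has a genuine gap, and it stems from conflating two separate stages of the argument. In the paper, the cut-matching game plays no role in the proof of this theorem: the good crossbar is a static combinatorial object (the family $\sset^*$, the demand pairs $\mset^*$, and the trees $\tau^*$), and the cut-matching game is only invoked \emph{afterwards}, in Theorem~\ref{thm: main: embed an expander or find a routing}, to turn an already-built crossbar into an embedded expander. What this theorem has to deliver is a collection of $2k^*$ trees, each containing one terminal of $\tset(\mset^*)$ and one distinct edge of $\Gamma^*_j$ for \emph{every} $j$ simultaneously, with the whole collection causing congestion at most $2$ (and congestion $1$ inside the clusters of $\sset^*$). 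Your plan to push the endpoints of the matching through the Step-2 flows into $\Gamma^*_j$ during round $j$ of the game routes each terminal to a different cluster once per round; over $\gkrv$ rounds these connections accumulate congestion $\Theta(\gkrv)=\Theta(\log^2 k)$ on the edges outside the clusters, not $2$. Nothing in your construction forces the $\gkrv$ different terminal-to-cluster connections to share edges, and your assertion that every other edge is used at most twice, once entering and once leaving a cluster, is exactly the claim that needs a mechanism and does not have one in your write-up. The paper supplies that mechanism with an apparatus you omit entirely: a degree-$3$ tree $T^*$ over a subset $\rset'$ of the clusters whose edges carry path collections of total congestion $1$ in the bi-directed graph (built via an auxiliary cluster graph, leaf/weight improvement steps, and directed edge-splitting in the style of Frank and Jackson), a stable-matching re-routing (Conforti et al.) to splice the terminal paths into these collections without extra congestion, and a top-down walk of $T^*$ that threads each tree $T_i$ through one path per tree-edge; the sets placed in $\sset^*$ are precisely those whose tree-vertex has degree at most $2$, which is what bounds the internal congestion by $1$.

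A second, smaller gap: the definition of a good crossbar requires $\mset^*\sse\mset$ to consist of genuine demand pairs with $\tset^*=\tset(\mset^*)$. Your fallback of invoking the KRV/Rao--Zhou routing engine when such pairs are scarce is not available --- without actual demand pairs attached to the trees there is nothing for that engine to route. The paper handles this with its Step 2 (routing the terminals of a matched subset $\mset_1$ to two designated clusters $S,S'$ so that the two sides of each pair land in prescribed places) and with Theorem~\ref{thm: grouping-advanced}, whose whole purpose is to preserve the pairing while boosting well-linkedness, and which outputs the edge-disjoint routing of case (i) whenever the pairing collapses inside clusters. Your opening moves (testing each $S_j$ with $\algsc$ for case (iii) and a max-flow to $\tset$ for case (iv)) do match the paper, but the core of the construction is missing.
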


 We provide the proof of Theorem~\ref{thm: iteration} in the following section, and complete the proof of Theorem~\ref{thm: find good crossbar or a smaller contracted graph} here. Suppose we are given a good partition $\cset$ of the vertices of $G$. For each $1\leq j\leq \gamma$, we compute an acceptable partition $\cset_j$ of $V(G)$ 
 as described above. If any of the partitions $\cset_j$ is a good partition, then we terminate the algorithm and return $\cset_j$. From the above discussion, $\phi(\cset_j)\leq \phi(\cset)-1$. Otherwise, for each $1\leq j\leq \gamma$, we select any large cluster $S_j\in \cset_j$, and apply Theorem~\ref{thm: iteration} to the current family $\set{\cset_j}_{j=1}^{\gamma}$ of acceptable clusterings. If the outcome of Theorem~\ref{thm: iteration} is a subset $\mset'$ of demand pairs with a routing of these pairs in $G$, then we terminate the algorithm and return this routing. If the outcome is a good congestion-$2$ crossbar, then we terminate the algorithm and return this good crossbar. We say that the iteration is successful if one of these two cases happen. Otherwise, we apply the appropriate action: $\partition(S_j,X,Y)$, or $\separate(S_j,A)$ to the clustering $\cset_j$. As a result, we obtain an acceptable clustering $\cset'_j$, with $\phi(\cset'_j)\leq \phi(\cset_j)-1$. Moreover, it is easy to see that this clustering also has Property~(\ref{property of Cj}): if the $\partition$ operation is performed, then we only partition existing clusters; if the $\separate$ operation is performed, then the only large clusters in the new partition $\cset'_j$ are subsets of large clusters in $\cset_j$.

 If all clusters in $\cset'_j$ are small, then we can again terminate the algorithm with a good partition $\cset'_j$, with $\phi(\cset'_j)\leq \phi(\cset)-1$. Otherwise, we select any large cluster $S'_j\in \cset'_j$, and continue to the next iteration. Overall, as long as we do not complete a successful iteration, and we do not find a good clustering $\cset'$ of $V(G)$ with $\phi(\cset')\leq \phi(\cset)-1$, we make progress in each iteration by decreasing the potential of one of the partitions $\cset_j$ by at least $1$, by performing either a $\separate$ or a $\partition$ operation on one of the large clusters of $\cset_j$. After polynomially many iterations we are then guaranteed to complete a successful iteration, or find a good clustering $\cset'$ with $\phi(\cset')\leq \phi(\cset)-1$, and finish the algorithm. Therefore, in order to complete the proof of Theorem~\ref{thm: find good crossbar or a smaller contracted graph} it is now enough to prove Theorem~\ref{thm: iteration}.

 \label{---------------------------------------------sec: iteration execution--------------------------------}
 \section{Proof of Theorem~\ref{thm: iteration}\label{sec: proof of iteration theorem}}
 
  Let $\rset=\set{S_1,\ldots,S_{\gamma}}$.
 
 Throughout the algorithm, we will sometimes be interested in routing flow across the sets $S_j\in \rset$. Specifically, given two subsets $\Gamma,\Gamma'\sse \out(S_j)$ of edges, with $|\Gamma|=|\Gamma'|\leq k_1/2$, we will be interested in routing the edges of $\Gamma$ to the edges of $\Gamma'$ inside $S_j$, with congestion at most $ 1/\alpha$. In other words, we will be looking for a flow $F: \Gamma\sconnect_{1/\alpha}\Gamma'$. Notice that if such a flow does not exist, then we can find a $(k_1,\alpha)$-violating partition $(X,Y)$ of $S_j$, by using the min-cut max-flow theorem. We can then return this partition and terminate the algorithm. Therefore, in order to simplify the exposition of the algorithm, we will assume that whenever the algorithm attempts to find such a flow, it always succeeds.
From the integrality of flow, we can also find a set $\pset: \Gamma\sconnect_{\lceil 1/\alpha\rceil }\Gamma'$ of paths contained in $S_j$.

We start by verifying that for each $1\leq j\leq\gamma$, the vertices of $S_j$ can send $k_1/2$ flow units with no congestion to the terminals. If this is not the case for some set $S_j$, then there is a cut $(A,B)$ with $S_j\sse A$, $\tset\sse B$ and $|E_G(A,B)|<k_1/2$. We then return the partition $(A,B)$ of $G$ and finish the algorithm.
 From now on we assume that each set $S_j$ can send $k_1/2$ flow units with no congestion to the terminals.
 
 Let $G'$ be the graph obtained from $G$ by replacing every edge of $G$ by two bi-directed edges.
 The rest of the proof consists of three steps. In the first step, we construct a degree-$3$ tree $\tT$, whose vertex set $V(\tT)=\set{v_S\mid S\in \rset'}$, for a large enough family $\rset'\sse \rset$ of vertex subsets, and each edge $e=(v_S,v_{S'})$ in tree $\tT$ corresponds to a collection $\pset_e$ of paths in graph $G'$, connecting the vertices of $S$ to the vertices of $S'$ or vice versa. Moreover, we will ensure that the paths in $\bigcup_{e\in E(\tT)}\pset_e$  only cause congestion $1$ in $G'$. In the second step, we find a subset $\mset^*\sse \mset$ of the demand pairs, and route the terminals in $\tset(\mset^*)$ to the vertices of $S\cup S'$, where $(S,S')$ is some pair of vertex subsets in $\rset'$. In the final third step, we construct a good congestion-$2$ crossbar.

\subsection{Step 1: Constructing the Tree}\label{subsec: step 1: building trees}
This step is summarized in the following theorem.

\begin{theorem}\label{thm: building trees}
There is an efficient algorithm, that either computes a $(k_1,\alpha)$-violating partition of some set $S\in \rset$, or  finds a subset $\rset'\sse \rset$ of size $r=8\gcmg$, a tree $\tT$ of maximum degree $3$ and vertex set $V(\tT)=\set{v_{S}\mid S\in \rset'}$, and a collection $\pset_e$ of $k_2=\Omega\left (\frac{k_1\alpha\cdot\alphaWL}{\gamma^{3.5}}\right )$ paths in graph $G'$ for every edge $e\in E(\tT)$, such that:

\begin{itemize}
\item For each edge $e=(v_S,v_{S'})\in E(\tT)$, every path $P\in \pset_e$ connects a vertex of $S$ to a vertex of $S'$, or a vertex of $S'$ to a vertex of $S$, and it does not contain the vertices of $\bigcup_{S''\in \rset'}S''$ as inner vertices.

\item The set $\bigcup_{e\in E(\tT)}\pset_e$ of paths causes congestion at most $1$ in $G'$.
\end{itemize}
\end{theorem}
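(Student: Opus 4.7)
The plan is to reduce the construction of $\tT$ to routing matchings between disjoint pools of terminals, each pool tethered to one $S_j$. First I would link every $S_j$ to its own distinct terminal pool using the internal structure of $S_j$, and then route matchings between pools along the edges of a small balanced binary tree via the flow-cut gap on $G$.

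First, for each $1 \le j \le \gamma$, test whether $S_j$ is $(k_1,\alpha)$-well-linked by running $\algsc$ on $\SC(G, S_j)$; if the returned cut has sparsity less than $\alpha$, output the resulting $(k_1,\alpha)$-violating partition of $S_j$ and halt. Otherwise, use the assumption established at the start of Section~\ref{sec: proof of iteration theorem}, that $S_j$ sends $k_1/2$ flow units with no congestion to $\tset$, together with integrality of flow, to produce a collection $\pset_j$ of $k_1/2$ edge-disjoint paths in $G$ from a subset $\Gamma_j \sse \out(S_j)$ of size $k_1/2$ to a subset $\tset_j \sse \tset$ of size $k_1/2$.

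Next, apply Theorem~\ref{thm: grouping-many-sets} to $\tset_1,\ldots,\tset_\gamma$, using that $G$ is $1$-well-linked for $\tset$; this yields mutually disjoint $\tset'_j \sse \tset_j$ of size $\Omega(k_1/\gamma^2)$ such that $G$ is $1$-well-linked for $\tset^\circ := \bigsqcup_j \tset'_j$. Restrict each $\pset_j$ to the subpaths ending in $\tset'_j$ and call the remaining starting edges $\Gamma'_j$. Choose any $\rset' \sse \rset$ with $|\rset'| = r = 8\gcmg$ and fix a balanced binary tree $\tT$ on $V(\tT) = \set{v_S : S \in \rset'}$, which has maximum degree $3$ and $r-1$ edges. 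For each $S \in \rset'$, partition $\tset'_S$ into at most three equal parts, assigning one part $\tset^e_S$ of size $\Omega(k_1/\gamma^2)$ to each incident tree edge $e$. For every $e=(v_S,v_{S'}) \in E(\tT)$, the flow-cut gap applied to the $1$-well-linked set $\tset^\circ$ routes a perfect matching between $\tset^e_S$ and $\tset^e_{S'}$ in $G$ with congestion $O(\log k)$; aggregating across all $r-1$ tree edges and rescaling yields a multi-commodity flow of congestion at most $1$ in $G'$, which by integrality of flow in $G'$ converts into $k_2 = \Omega\!\paren{k_1/(\gamma^2 r \log k)}$ edge-disjoint integral paths per tree edge. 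Concatenating these with the relevant subpaths of $\pset_S$ and $\pset_{S'}$ gives the required $\pset_e$ in $G'$.

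The main obstacle is enforcing the condition that the paths in $\pset_e$ contain no vertex of $\bigcup_{S''\in\rset'} V(S'')$ as an inner vertex. A priori, both the internal paths $\pset_j$ and the inter-pool matchings may freely traverse the interiors of other $S''\in\rset'$, since they are built from global well-linkedness in $G$. To enforce the constraint, one must perform the matching routing in the modified graph $\tilde G$ obtained from $G$ by contracting every $S'' \in \rset'$ into a super-node (or deleting its interior), and show that enough well-linkedness of $\tset^\circ$ survives in $\tilde G$ to still route $k_2$ matched pairs per tree edge. This survival argument — proving that the contraction of $r$ large clusters degrades the well-linkedness of $G$ for $\tset^\circ$ by at most a $\poly(\gamma)$ factor, and that $\algsc$ can recover a useful cut whenever this fails — is the delicate technical core. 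It is also precisely what forces the extra $\alpha\,\alphaWL/\gamma^{1.5}$ factor into the final bound $k_2 = \Omega(k_1\,\alpha\,\alphaWL/\gamma^{3.5})$.
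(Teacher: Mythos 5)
There is a genuine gap, and you have correctly located it but not closed it: the requirement that the paths in $\pset_e$ avoid $\bigcup_{S''\in \rset'}S''$ as inner vertices is not a technicality that can be deferred. Your proposed fix --- contract (or delete) the interiors of the clusters in $\rset'$ and argue that the well-linkedness of $\tset^{\circ}$ ``survives up to a $\poly(\gamma)$ factor'' --- is false in general. Nothing prevents a situation where essentially all connectivity between the terminal pools of two clusters $S,S'\in\rset'$ passes through the interior of a third cluster $S''\in\rset'$; after deleting that interior, the two pools may be almost completely disconnected, and no sparsest-cut computation on the original graph will hand you a $(k_1,\alpha)$-violating partition to compensate. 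This is precisely why you cannot pick $\rset'$ arbitrarily and impose a balanced binary tree on it: the choice of which $r$ clusters to keep, and which tree structure to put on them, must be dictated by which pairs of clusters actually admit many \emph{direct} paths (paths avoiding all cluster interiors).

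The paper's proof is organized around exactly this point. It defines an auxiliary graph $Z$ on $\set{v_1,\dots,v_\gamma}$ whose edges record pairs with at least $k'$ direct edge-disjoint connecting paths, proves $Z$ is connected, takes a spanning tree, and then locally improves it so that either it has many leaves (Case 1) or it contains a long degree-$2$ path (Case 2). The set $\rset'$ is then taken to be those leaves or the tail of that path --- an adaptive choice guaranteeing that consecutive/selected clusters are joined by many direct paths while non-adjacent ones are not. The decongestion from ``each pair individually has many direct paths'' to ``one simultaneous congestion-$1$ family in the bidirected graph $G'$'' is done by directed Eulerian edge-splitting (Frank--Jackson), which preserves both connectivity and directness; your rescale-and-round via the flow-cut gap could substitute for this step quantitatively, but only after the directness problem is solved, which it is not in your write-up. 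So the missing idea is the direct-connectivity graph $Z$ together with the structural case analysis that determines $\rset'$ and $\tT$; without it, the first bullet of the theorem cannot be established.
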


\begin{proof}
 Since the set $\tset$ of terminals is $1$-well-linked, every pair $(S_j,S_{j'})$ of vertex subsets can send $k_1/2$ flow units to each other with congestion at most $3$ (concatenate the flows from $S_j$ to a subset $\tset_1$ of the terminals, from $S_{j'}$ to a subset $\tset_2$ of the terminals, and the flow between the two subsets of the terminals).
 Equivalently, for each pair $(S_j,S_{j'})$, there are at least $\lfloor k_1/6\rfloor $ edge-disjoint paths connecting $S_j$ to $S_j'$ in $G$. We can assume that these paths do not contain any terminals, as the degree of every terminal in $G$ is $1$. We say that a path $P$ is \emph{direct} iff it does not contain the vertices of $\tset\cup\left (\bigcup_{j=1}^{\gamma}S_j\right )$ as its inner vertices.
 
 We build the following graph $Z$. Let $k'=\lfloor \frac{k_1}{6\gamma^2}\rfloor =\frac{k}{\poly\log k}$.
 The vertices of $Z$ are $\set{v_1,\ldots,v_{\gamma}}$, where vertex $v_j$ represents the set $S_j$. There is an edge $(v_j,v_{j'})$ for $j\neq j'$ in $Z$ iff there are at least $k'$ direct edge-disjoint paths connecting $S_j$ to $S_{j'}$ in $G$. (Notice that this can be checked efficiently). 
 
It is easy to verify that graph $Z$ is connected: indeed, assume otherwise. Let $A$ be any connected component of $Z$, and let $B$ contain the rest of the vertices. Let $v_j\in A$, $v_{j'}\in B$. Since there are at least $\lfloor k_1/6\rfloor $ edge-disjoint paths connecting $S_j$ and $S_{j'}$ in $G$, and these paths do not contain the terminals, there must be at least $\lfloor k_1/6\rfloor $ direct edge-disjoint paths connecting the vertices of $\bigcup_{v_i\in A}S_i$ to the vertices of $\bigcup_{v_i\in B}S_i$. Since $|A|+|B|=\gamma$, at least one pair $(S_i,S_{i'})$ with $v_i\in A$, $v_{i'}\in B$, has at least $k'$ direct edge-disjoint paths connecting them.

Let $T$ be any spanning tree of $Z$, and let $\gamma'=\gamma^{1/4}$. 
For every pair $v_i,v_j$ of vertices in graph $Z$, let $n(v_i,v_j)$ be the maximum number of direct edge-disjoint paths connecting $S_i$ to $S_j$ in $G$. For every edge $e=(v_i,v_j)$ of the tree $T$, let $w(e)=n(v_i,v_j)$, and let $w(T)=\sum_{e\in E(T)}w(e)$. Root the tree $T$ at any vertex.
We now perform a number of iterations, that, informally, aim to maximize the number of leaves and the value $w(T)$ of the tree $T$. In every iteration, one of the following two improvement steps is performed, whenever possible.

\paragraph{Improvement Step 1:} 
Let $v_j$ be any vertex of the tree $T$, whose parent $v_{j'}$ is a non-root degree-$2$ vertex. If there is any non-leaf vertex $v_{j''}$, such that $v_{j''}$ is not the descendant of $v_j$ in the tree $T$, and the edge $(v_j,v_{j''})$ belongs to the graph $Z$, then we delete the edge $(v_j,v_{j'})$ from the tree $T$, and add the edge $(v_j,v_{j''})$ to it. Notice that this operation increases the number of leaves in $T$ by $1$. 

\paragraph{Improvement Step 2:}
Let $v_j$ be any non-root degree-$2$ vertex of the tree, whose parent $v_{j'}$ is also a degree-$2$ non-root vertex. Let $e=(v_j,v_{j'})$ be the edge of tree $T$ connecting $v_j$ to its father. Consider the two connected components of $T\setminus e$, and let $e'\in E(Z)$ be the edge connecting these two components, with the maximum value $w(e')$. If $w(e')>w(e)$, then we delete $e$ from $T$ and add $e'$ to it. Notice that this operation does not decrease the number of leaves in $T$, and it increases $w(T)$ by at least $1$.

We perform one of the above two improvement steps, while possible. Since each such improvement step either increases the number of leaves in $T$, or leaves the number of leaves unchanged but increases $w(T)$ by at least $1$, after polynomially many improvement steps, we obtain a final tree $T$, on which none of these improvement steps can be executed. Let $L$ denote the set of leaves of $T$.

We say that Case 1 happens, if $T$ contains at least $\gamma'$ leaves. Otherwise, we say that Case 2 happens.
We consider each of the two cases separately below. Since Case 2 is easier to analyze, we consider it first.

\paragraph{Case 2}
A path $P$ in the tree $T$ is called a $2$-path iff it does not contain the root of $T$, and all its vertices have degree $2$ in $T$. A $2$-path $P$ is maximal iff it is not a strict sub-path of any other $2$-path. Since the number of leaves in $T$ is less than $\gamma'=\gamma^{1/4}$, the number of maximal $2$-paths in $T$ is at most $2\gamma'$. Therefore, $T$ contains at least one $2$-path of length at least $\gamma'$. We let $P'$ be such a $2$-path, that minimizes the size of the sub-tree rooted at the last (lowest) vertex of $P'$. Recall that $r=8\gkrv\leq \lceil \gamma'/8\rceil$. Let $P$ be the sub-path of $P'$, consisting of the last $r$ vertices of $P'$. Assume w.l.o.g. that $P=(v_1,\ldots,v_r)$, where  $v_r$ is the vertex that lies deepest in the tree $T$. 
We set $\rset'=\set{S_1,\ldots,S_r}$, and we will build a tree $\tT$ over the set $v_1,\ldots,v_r$ of vertices as required. In fact in this case, the tree $\tT$ will be a path, $(v_{S_1},\ldots,v_{S_r})$. Therefore, we only need to define, for each edge $e=(v_{S_i},v_{S_{i+1}})$ on this path, the corresponding set $\pset_e$ of paths in graph $G'$.

Since the number of leaves in the original tree $T$ is at most $\gamma'$, and every $2$-path in the sub-tree of $v_r$ has length at most $\gamma'$  (by the choice of $P'$), the size of the sub-tree rooted at $v_r$ is bounded by $3\gamma'^2$.

We need the following claim:

\begin{claim}\label{claim: properties of path P}
For each $1\leq i<r$, $n(v_i,v_{i+1})\geq \frac{k_1}{24\gamma^{3/2}}$, and for each $1\leq i,i'\leq r$ with $|i-i'|>1$, $n(v_i,v_{i'})< \frac{k_1}{6\gamma^2}$.
\end{claim}

\begin{proof}
Fix some $1\leq i< r$, and consider the edge $e=(v_i,v_{i+1})$. Notice that the size of the sub-tree rooted at $v_{i+1}$ is bounded by $r+3\gamma'^2\leq 4\gamma'^2$, since the size of the sub-tree rooted at $v_r$ is bounded by $3\gamma'^2$. Let $T_1,T_2$ be the two connected components of $T\setminus\set{e}$. Recall that there are at least $\lfloor k_1/6\rfloor $ direct paths connecting the sets corresponding to vertices of $T_1$ to the sets corresponding to vertices of $T_2$. So there is at least one pair $v\in T_1$, $v'\in T_2$ of vertices, with $n(v,v')\geq \frac{k_1}{24\gamma\cdot \gamma'^2}=\frac{k_1}{24\gamma^{3/2}}$. Since we could not perform the second improvement step, the weight of the edge $(v_i,v_{i+1})$ must be at least $ \frac{k_1}{24\gamma^{3/2}}$.

In order to prove the second assertion, consider some pair $v_i,v_{i'}$ of vertices on path $P$, where $|i-i'|>1$, and assume w.l.o.g. that $v_i$ is a descendant of $v_{i'}$. Since we could not execute the first improvement step for vertex $v_i$, there is no edge $(v_i,v_{i'})$ in graph $Z$. Therefore, $n(v_i,v_{i'})< \frac{k_1}{6\gamma^2}$.
\end{proof}

For each $1\leq i<r$, let $\tpset_i$ be the set of $\lceil \frac{k_1}{24\gamma^{3/2}}\rceil$ direct edge-disjoint paths, connecting $S_i$ to $S_{i+1}$. 
While the paths in each set $\tpset_i$ are edge-disjoint, the total congestion caused by the set $\bigcup_i\tpset_i$ of paths may be as high as $(r-1)$. In order to overcome this difficulty, we perform edge-splitting on an auxiliary graph $H$, constructed as follows. Start with the set of all the vertices and edges lying on the paths in $\bigcup_{i=1}^{r-1}\tpset_i$. Next, for each $1\leq i\leq r$, we contract the vertices of $S_i$, that belong to the current graph, into a single vertex $v_i$. Finally, we replace every edge by a pair of bi-directed edges. Let $H$ be the final directed graph. Observe that $H$ is a directed Eulerian graph, that has the following properties:

\begin{itemize}

\item Graph $H$ does not contain any terminals, or vertices from the sets $S\in\rset\setminus \rset'$.

\item For each $1\leq i<r$, there are at least $\lceil \frac{k_1}{24\gamma^{3/2}}\rceil$ edge-disjoint paths connecting 
$v_i$ to $v_{i+1}$, that do not contain any vertices in $\set{v_1,v_2,\ldots,v_r}$ as inner vertices.


\item For all $1\leq i\neq i'\leq r$, with $|i-i'|>1$, there are at most $\frac{k_1}{6\gamma^2}$ paths connecting $v_i$ to $v_{i'}$, that 
do not contain the vertices of $\set{v_1,v_2,\ldots,v_r}$ as intermediate vertices.
\end{itemize}
 
We use the following theorem to perform edge-splitting in graph $H$.
Let $D=(V,A)$ be any directed multigraph with no self-loops. For any pair $(v,v')\in V$ of vertices, their connectivity $\lambda(v,v';D)$ is the maximum number of edge-disjoint paths connecting $v$ to $v'$ in $D$. Given a pair $a=(u,v)$, $b=(v,w)$ of edges, a splitting-off procedure replaces the two edges $a,b$ by a single edge $(u,w)$. We denote by $D^{a,b}$ the resulting graph. We use the extension of Mader's theorem~\cite{Mader} to directed graphs, due to Frank~\cite{Frank} and Jackson~\cite{Jackson}. Following is a simplified version of Theorem 3 from~\cite{Jackson}:

\begin{theorem}\label{thm: splitting-off}
Let $D=(V,A)$ be an Eulerian digraph, $v\in V$ and $a=(v,u)\in A$.
Then there is an edge $b=(w,v)\in A$, such that for all $y,y'\in V\setminus\set{v}$:
$\lambda(y,y';D)=\lambda(y,y';D^{ab})$.
\end{theorem}

 We iteratively perform edge-splitting in graph $H$ using Theorem~\ref{thm: splitting-off}, by repeatedly choosing vertices $v\not\in\set{v_1\ldots,v_r}$, until all such vertices become isolated. Let $H'$ denote the resulting graph, after we discard all isolated vertices and make all edges undirected. Then $V(H')=\set{v_1,\ldots,v_r}$, and for each $1\leq i\neq i'\leq r$, each edge $e=(v_i,v_{i'})$ corresponds to a {\bf direct} path $P_e$ in graph $G'$, connecting some vertex of $S_i$ to some vertex of $S_{i'}$ (that is, $P_e$ does not contain any vertices of $\left (\bigcup_{S\in \rset}S\right )\cup \tset$ as inner vertices). Moreover, the set $\set{P_e\mid e\in E(H')}$ of paths causes congestion at most $1$ in $G'$, and the edges $e'\in E(G)$ whose both endpoints belong to the same set $S_i$ do not lie on any such path $P_e$.

 We need the following claim.
 
 \begin{claim}\label{claim: path structure preserved after splitting}
 For each $1\leq i, i'\leq r$ with $|i-i'|>1$, there are at most $\lceil \frac{k_1}{3\gamma^2}\rceil$ parallel edges $(v_i, v_{i'})$ in $H'$, and for each $1\leq i< r$, there are at least $\frac{k_1}{48\gamma^{3/2}}$ parallel edges $(v_i,v_{i+1})$ in $H'$.
 \end{claim}
 
\begin{proof}
Assume for contradiction that the first assertion is false, and let $v_i,v_{i'}$ be a pair of vertices in $H'$ with $1\leq i,i'\leq r$ and $|i-i'|>1$, such that there are more than $\lceil \frac{k_1}{3\gamma^2}\rceil $ parallel edges $(v_i,v_{i'})$ in $H'$. Let $\pset'$ be the set of paths in graph $G$, containing, for each such parallel edge $e$, the corresponding path $P_e$. Then $\pset'$ contains more than $\lceil \frac{k_1}{3\gamma^2}\rceil$ direct paths, connecting the vertices of $S_{i}$ to the vertices of $S_{i'}$ in $G$. Moreover, the paths in $\pset'$ cause congestion at most $2$ in $G$. Therefore, there is a flow $F$ between $S_i$ and $S_{i'}$ of value at least $\half\lceil \frac{k_1}{3\gamma^2}\rceil+\half$, with no congestion in graph $G$, where all flow-paths are direct paths. From the integrality of flow, there is a set of at least $\frac{k_1}{6\gamma^2}$ direct edge-disjoint paths connecting $S_{i}$ to $S_{i'}$ in $G$, contradicting Claim~\ref{claim: properties of path P}.

We now turn to prove the second assertion. Assume otherwise, and let $1\leq i<r$ be some index, such that there are fewer than $\frac{k_1}{48\gamma^{3/2}}$ parallel edges $(v_i,v_{i+1})$ in graph $H'$. Consider the following partition $(A,B)$ of $V(H')$: $A=\set{v_1,\ldots,v_{i}}$, $B=\set{v_{i+1},\ldots,v_r}$.
 Recall that graph $H$ contained at least $\frac{k_1}{24\gamma^{3/2}}$ edge-disjoint paths connecting $v_i$ to $v_{i+1}$. Since the edge splitting operations preserve the connectivity between $v_i$ and $v_{i'}$, graph $H'$ must contain at least $\frac{k_1}{24\gamma^{3/2}}$ edge-disjoint paths connecting $v_i$ to $v_{i+1}$. In particular, $|E_{H'}(A,B)|\geq \frac{k_1}{24\gamma^{3/2}}$. Let $E'\sse E_{H'}(A,B)$ contain all edges  between vertex sets $A$ and $B$, except for the set of parallel edges $(v_i,v_{i+1})$. Since there are fewer than $\frac{k_1}{48\gamma^{3/2}}$ such parallel edges, $|E'|> \frac{k_1}{48\gamma^{3/2}}$. 
 
Notice however that there are at most $\frac{r^2} 4-1$ pairs of vertices $(v_j,v_j')\in A\times B$, where $(v_j,v_{j'})\neq (v_i,v_{i+1})$. Each such pair contributes at most $\lceil \frac{k_1}{3\gamma^2}\rceil $ edges to $E'$, from the first assertion.
Therefore,

\[|E'|\leq \left (\frac{r^2} 4-1\right )\cdot \lceil \frac{k_1}{3\gamma^2}\rceil\leq \frac{r^2\cdot k_1}{12\gamma^2}< \frac{k_1}{48\gamma^{3/2}},\]
 
 as $r\leq\lceil\frac{\gamma'} 8\rceil\leq\frac{\gamma'}{4}=\frac{\gamma^{1/4}}{4}$, a contradiction.
\end{proof}

Our final tree $\tT$ is simply a path connecting the vertices $(v_{S_1},v_{S_2},\ldots,v_{S_r})$ in this order. For each edge $e=(v_{S_i},v_{S_{i+1}})$ on this path, we let $\pset_e$ be the set of paths in graph $G'$ corresponding to the set of parallel edges connecting $v_i$ to $v_{i+1}$ in graph $H'$. We discard paths from each set $\pset_e$ until $|\pset_e|=k_2$. It is immediate to verify that $\bigcup_{e\in E(\tT)}\pset_e$ cause congestion at most $1$ in $G'$, and moreover, each such path does not contain the vertices of $\bigcup_{S\in \rset'}S$ as inner vertices.

\paragraph{Case 1}

Recall that in this case, $T$ contains at least $\gamma'$ leaves. Let $\rset'\sse \rset$ be any set of $r<\gamma'$ vertex subsets, corresponding to the leaves in tree $T$. For simplicity of notation, we will assume that $\rset'=\set{S_1,\ldots,S_r}$. Our first step is to select, for each $S\in \rset'$, a subset $\Gamma(S)\sse \out(S)$ of $k_3=\Omega(k'\alpha\cdot \alphaWL/r^3)$ edges, such that for each $S,S'\in \rset'$, there is a set $\pset(S,S'):\Gamma(S)\sconnect_1\Gamma(S')$ of paths in graph $G$. We do so in the following claim.

\begin{claim}
There is an efficient algorithm, that either computes a $(k_1,\alpha)$-violating partition for some set $S\in \rset$, or computes, for each set $S\in \rset'$, a subset $\Gamma(S)\sse \out(S)$ of $k_3=\Omega(k'\alpha\cdot \alphaWL/r^3)$ edges, such that for each $S,S'\in \rset'$, there is a set $\pset(S,S'):\Gamma(S)\sconnect_1\Gamma(S')$ of paths in graph $G$, that do not contain the vertices of $\bigcup_{S''\in \rset'}S''$ as inner vertices.
\end{claim}
\begin{proof}
Let  $v_{j^*}$ be the root of the tree $T$, and let $S_{j^*}\in \rset$ be the corresponding vertex subset. We claim that for each set $S_j\in \rset'$, there is a flow $F_j$ of value $k'$, connecting the vertices of $S_{j^*}$ to the vertices of $S_j$ in $G$, with congestion at most $\frac{1}{\alpha} + \gamma\leq \frac{2}{\alpha}$, such that the paths in $\pset_j$ do not contain any vertices of $\bigcup_{S\in \rset'}S$ as inner vertices. Indeed, consider the path $(v_j^*=v_{i_1},v_{i_2},\ldots,v_{i_h}=v_j)$ in the tree $T$, connecting $v_{j^*}$ to $v_j$. 
For each edge $e_z=(v_{i_z},v_{i_z+1})$ on this path, there is a set $\qset_z$ of $k'$ direct edge-disjoint paths connecting the vertices of $S_{i_z}$ to the vertices of $S_{i_{z+1}}$, from the definition of the graph $Z$. Denote 

\[\Gamma^2_z=\set{e'\in \out(S_{i_z})\mid e'\mbox{ is the first edge on some path in }\qset_z},\]

 and similarly 
 
 \[\Gamma^1_{z+1}=\set{e'\in \out(S_{i_{z+1}})\mid e'\mbox{ is the last edge on some path in }\qset_z}.\]
 
  For each $1<z<h$, we now have $\Gamma^1_z,\Gamma^2_z\sse \out(S_{i_z})$, two subsets of edges of size $k'<k_1/2$. From our assumption, there is a flow $F'_z:\Gamma^1_z\sconnect_{1/\alpha}\Gamma^2_z$ inside set $S_z$. Concatenating the flows $(\qset_1,F'_1,\qset_2,F'_2,\ldots,F'_{h-1},\qset_{h-1})$, we obtain the desired flow $F_j$ of value $k'$. The total congestion caused by paths in $\bigcup_{z=1}^{h-1}\qset_z$ is at most $\gamma$ (since $h\leq \gamma$, and the paths in each set $\qset_z$ are edge-disjoint), while each flow $F'_z$ causes congestion at most $1/\alpha$ inside the graph $G[S_{i_z}]$. Therefore, the total congestion due to flow $F_j$ is bounded by $\frac{1}{\alpha}+\gamma\leq \frac{2}{\alpha}$.

Scaling all flows $F_j$, for $S_j\in \rset'$ down by factor $2r/\alpha$, we obtain a new flow $F$, where every set $S_j\in \rset'$ sends $\frac{k'\alpha}{2r}$ flow units to $S_{j^*}$, and the total congestion due to $F$ is at most $1$. From the integrality of flow, there is a collection $\set{\pset_j}_{j=1}^r$ of path sets, where for each  $1\leq j\leq r$, set $\pset_j$ contains $\lfloor \frac{k'\alpha}{2r}\rfloor$ paths connecting $S_j$ to $S_{j^*}$, the paths in $\bigcup_{j=1}^r\pset_j$ are edge-disjoint, and they do not contain the vertices of $\bigcup_{j=1}^rS_j$ as inner vertices. We will also assume w.l.o.g. that the paths in $\bigcup_{j=1}^r\pset_j$ do not contain the vertices of $S_{j^*}$ as inner vertices.

For each $1\leq j\leq r$, let $E_j\sse \out(S_{j^*})$ be the set of edges $e$, such that $e$ is the last edge on some path in $\pset_j$, and let $E'=\bigcup_{j=1}^rE_j$. Notice that $|E'|=r\cdot \lfloor \frac{k'\alpha}{2r}\rfloor<k_1$. We set up an instance of the sparsest cut problem as follows. First, partition every edge $e\in E'$ by a vertex $t_e$, and let $\ttset$ be the set of the resulting new vertices. Consider the sub-graph $G^*$ of the resulting graph induced by $S_{j^*}\cup \ttset$, with the set $\ttset$ of terminals. We apply the algorithm \algsc to the resulting instance of the sparsest cut problem. If the algorithm returns a cut whose sparsity is less than $\alpha$, then we have found a $(k_1,\alpha)$-violating partition of $S_{j^*}$. We return this partition, and terminate the algorithm. Otherwise, we are guaranteed that $S_{j^*}$ is $\alphawl$-well-linked for $E'$. 

We then apply Theorem~\ref{thm: grouping-many-sets} to the same graph $G^*$, with the set $\ttset$ of terminals, where the§ terminal subsets are defined to be $\tset_j=\set{t_e\mid e\in E_j}$ for $1\leq j\leq r$.
Notice that for all $j$, $|\tset_j|=k/\poly\log k$, while $r=\poly\log k$ and $\alphaWL=1/\poly\log k$, so the conditions of Theorem~\ref{thm: grouping-many-sets} hold.
From Theorem~\ref{thm: grouping-many-sets}, we obtain, for each $1\leq j\leq r$, a subset $\tset_j'\sse \tset_j$ of $\Omega\left (\frac{k'\cdot \alpha\cdot\alphaWL}{r^3}\right )=k_3$ terminals, such that $G^*$ is $1$-well-linked for $\bigcup_j\tset_j'$. Let $E'_j\sse E_j$ be the set of edges corresponding to the vertices in $\tset_j'$, let $\pset'_j\sse \pset_j$ be the set of paths terminating at the edges of $E'_j$, and let $\Gamma(S_j)\sse \out(S_j)$ be the set of edges where the paths in $\pset'_j$ start, for each $1\leq j\leq r$. Then for each $1\leq j\leq r$, $|\Gamma(S_j)|=|\tset_j'|=k_3$. Moreover, we claim that for all $1\leq j,j'\leq r$, there is a set $\pset(S_j,S_{j'}):\Gamma(S_j)\sconnect_1\Gamma(S_{j'})$ of paths in graph $G$, that do not contain the vertices of $\bigcup_{S\in \rset'}S$ as inner vertices.  
Indeed, since $S_{j^*}$ is $1$-well-linked for $\bigcup_iE'_i$, there is a set $\pset_{j,j'}:E_j\sconnect_1E_{j'}$ of paths contained in $S_{j^*}$. The set $\pset(S_j,S_{j'})$ of paths is obtained by concatenating $\pset_j',\pset_{j,j'}$ and $\pset_{j'}'$.
\end{proof}

Our next step is to perform edge splitting, similarly to Case 2. We build an auxiliary graph $H$ from graph $G$, as follows. We delete from $G$ all terminals, and for each $1\leq j\leq r$, we delete all edges in $\out(S_j)\setminus \Gamma(S_j)$. Next, we contract each set $S_j$ into a super-node $v_j$, for all $1\leq j\leq r$. Finally, we replace every edge in the resulting graph by a pair of bi-directed edges. Let $H$ be this final graph. Notice that in-degree and the out-degree of every super-node $v_j$, for $1\leq j\leq r$, in graph $H$ is exactly $k_3$, and for every pair $v_j,v_j'$ of such super-nodes, there are $k_3$ edge-disjoint paths connecting $v_j$ to $v_{j'}$, and $k_3$ edge-disjoint paths connecting $v_{j'}$ to $v_j$ in $H$. 
 We iteratively perform edge-splitting in graph $H$ using Theorem~\ref{thm: splitting-off}, by repeatedly choosing vertices $v\not\in\set{v_1\ldots,v_r}$, until all such vertices become isolated. Let $H'$ denote the resulting graph, after we discard all isolated vertices. Then $V(H')=\set{v_1,\ldots,v_r}$, and for each $1\leq i\neq i'\leq r$, each edge $e=(v_i,v_{i'})$ corresponds to a path $P_e$ in graph $G'$, connecting some vertex of $S_i$ to some vertex of $S_{i'}$, and $P_e$ does not contain any vertices of $\bigcup_{S\in \rset'}S$ as inner vertices  (recall that $G'$ is the directed graph obtained from $G$ by replacing each edge with a pair of bi-directed edges). Moreover, the set $\set{P_e\mid e\in E(H')}$ of paths causes congestion at most $1$ in $G'$.
For each vertex $v_j$ in graph $H'$, the in-degree and the out-degree of $v_j$ are equal to $k_3$, and for each pair $v_j,v_{j'}$ of vertices, there are exactly $k_3$ edge-disjoint paths connecting $v_j$ to $v_{j'}$ in $H'$, and $k_3$ edge-disjoint paths connecting $v_{j'}$ to $v_j$.

We now construct a graph $Z'$, whose vertex set is $\set{v_1,\ldots,v_r}$, and there is an edge $(v_i,v_j)$ in graph $Z'$ iff the total number of parallel edges $(v_i,v_j)$ and $(v_j,v_i)$ in graph $H'$ is at least $2k_3/r^3$. 
The proof of the next claim is identical to the proof of Claim 3 in~\cite{EDP-old} and is omitted here.

\begin{claim}
There is an efficient algorithm to find a spanning tree $\tilde{T}$ of maximum vertex degree at most $3$ in graph $Z'$.
\end{claim}

We output the tree $\tilde{T}$ as our final tree. Notice that each edge $e=(v_i,v_j)\in E(\tilde{T})$ corresponds to a collection $\pset_e$ of paths, where each path connects either a vertex of $v_i$ to a vertex of $v_j$, or vice versa in graph $G'$ (these are the paths that correspond to the edges of $H'$ in the original graph $G'$, obtained by the edge splitting procedure). Moreover, the paths in $\bigcup_{e\in E(\tilde{T})}\pset_e$ cause congestion at most $1$ in $G'$, and these paths do not contain the vertices of $\bigcup_{j=1}^rS_j$ as inner vertices. The number of paths in each set is:

\[ |\pset_e|\geq \frac{2k_3}{r^3}=\Omega\left(\frac{k'\alpha\cdot\alphaWL}{r^6}\right )=\Omega\left(\frac{k_1\alpha\cdot\alphaWL}{\gamma^2 r^6}\right )=\Omega\left (\frac{k_1\alpha\cdot\alphaWL}{\gamma^{3.5}}\right )=k_2.\]
\end{proof}

\subsection{Step 2: connecting the terminals}\label{subsec: step 2: connecting the terminals}
We assume w.l.o.g. that $\rset'=\set{S_1,\ldots,S_r}$.
In this step we connect a subset of terminals to two subsets $S,S'\in \set{S_1,\ldots,S_r}$, using the following theorem.

\begin{theorem}\label{thm: step 2 - connecting the terminals}
There is an efficient algorithm, that either finds a subset $\mset'\sse \mset$ of $k/\poly\log k$ demand pairs, and a routing of the pairs in $\mset'$ via edge-disjoint paths in $G$, or finds a subset $\mset_1\sse \mset$ of $k_4=\Omega(k_2/r^2)$ demand pairs, two subsets $\pset_1',\pset_1''$ of paths in $G$, and for each edge $e\in E(\tT)$, a subset $\pset'_e\sse \pset_e$ of $\lfloor k_2/2\rfloor $ paths, such that:

\begin{itemize}
\item We are given a partition of $\tset(\mset_1)$ into two subsets $\tset_1',\tset_1''$, where for each pair $(s,t)\in \mset_1$, $s\in \tset_1',t\in \tset_1''$, or vice versa.

\item There are two sets $S,S'\in \rset'$ (with possibly $S=S'$), such that $\pset_1': \tset_1'\connect \out(S)$, $\pset_1'':\tset_1''\connect \out(S')$.

\item The paths in $\pset_1'\cup \pset_1''\cup\left (\bigcup_{e\in E(\tilde{T})}\pset_e'\right )$ do not contain any vertices of $\bigcup_{j=1}^rS_j$ as inner vertices, and they cause congestion at most $2$ in $G$.
\end{itemize}
\end{theorem}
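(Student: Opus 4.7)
My plan has three stages: (a) reduce the problem to an integral routing between $\tset$ and $\out(S) \cup \out(S')$ for a single pair $(S, S') \in \rset' \times \rset'$; (b) solve that routing via the $1$-well-linkedness of $\tset$ combined with Theorem~\ref{thm: grouping-advanced}; and (c) absorb the resulting congestion blow-up by reserving half of every tree path bundle $\pset_e$ as slack.

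For stages (a) and (b), I would iterate over all $O(r^2)$ unordered pairs $(S, S')$ with $S, S' \in \rset'$ and, for each, subdivide every edge of $\out(S) \cup \out(S')$ to obtain a set $\tset_2$ of proxy terminals. Since $G$ is $1$-well-linked for $\tset$ and (as verified at the start of Section~\ref{sec: proof of iteration theorem}) each $S_j$ can send $k_1/2$ flow units to $\tset$ with no congestion, concatenating these flows yields that the resulting graph is $\Omega(1)$-well-linked for $\tset \cup \tset_2$. I then apply Theorem~\ref{thm: grouping-advanced} with $\tset_1 = \tset$ (carrying the matching $\mset$) and $\tset_2$ as above: it either returns an edge-disjoint routing of $\Omega(\alpha k) = k/\poly\log k$ demand pairs, in which case I output alternative~(i) and stop, or else returns disjoint $1$-well-linked subsets $\tset_1' \subseteq \tset$ and $\tset_2' \subseteq \tset_2$ that respect $\mset$. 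From the $1$-well-linkedness, I extract via min-cut/max-flow integral paths $\pset_1': \tset_1' \cap \tset_2'(S) \sconnect_1 \out(S)$ and $\pset_1'': \tset_1' \cap \tset_2'(S') \sconnect_1 \out(S')$, where $\tset_2'(S)$ and $\tset_2'(S')$ denote the parts of $\tset_2'$ arising from $\out(S)$ and $\out(S')$, respectively. Averaging over the $O(r^2)$ pairs ensures that at least one pair yields $|\mset_1| = \Omega(k_2/r^2) = k_4$.

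The main obstacle, and the delicate part of the argument, is stage~(c). The collection $\bigcup_e \pset_e$ already saturates $G$ to congestion~$2$ (it has congestion~$1$ in $G'$, and $G'$ doubles every undirected edge), so naively adding $\pset_1' \cup \pset_1''$ would push some edges to congestion~$3$. Since $\pset_1' \cup \pset_1''$ is itself edge-disjoint in $G$, each edge $f$ of $G$ gains at most one new use; consequently, at every edge where the combined congestion reaches~$3$, exactly one $\pset_e$-path through $f$ must be evicted in order to restore congestion~$2$. My plan is therefore to choose $\pset_e'$ adaptively: for each $e \in E(\tT)$ independently, let $\pset_e' \subseteq \pset_e$ consist of those $\pset_e$-paths that do not traverse any evicted edge. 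A counting argument, using that $\tT$ has only $r - 1 = O(\gkrv)$ edges, that each $\pset_e$ contains $k_2$ paths, and that a single eviction at an edge $f$ can be charged to at most one $\pset_e$, will show that the total number of evictions from any fixed $\pset_e$ is at most $\lceil k_2/2 \rceil$, so $|\pset_e'| \geq \lfloor k_2/2 \rfloor$ as required.

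The step most likely to cause trouble is ensuring that evictions are spread roughly evenly across the different $\pset_e$'s rather than concentrating on a single $e \in E(\tT)$; I expect to handle this by a pigeonhole or by rerouting $\pset_1', \pset_1''$ through a balanced portion of the tree, using the fact that the subgraph of $G$ carrying $\bigcup_e \pset_e \setminus \pset_e'$ can be chosen symmetrically between the two orientations of each $G$-edge in $G'$. If that balancing fails, the failure will itself expose a structural cut in $G$ that we can feed back into the $\partition$ or $\separate$ machinery of Section~\ref{sec: Alg}.
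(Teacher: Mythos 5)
There is a genuine gap in stage~(c), and it is not merely the load-balancing issue you flag at the end. Your eviction scheme keeps each terminal path $P\in\pset_1'\cup\pset_1''$ intact and deletes tree paths at every edge where the combined congestion reaches $3$. But the number of evictions this forces is governed by the number of \emph{conflict edges}, not by the number of terminal paths: a single terminal path $P$ may share an edge with a different path of the same bundle $\pset_e$ at every one of its (unboundedly many) edges, and evicting the tree path at one conflict edge does nothing for the conflicts at the other edges of $P$. So the total number of paths removed from a fixed $\pset_e$ can be as large as the total length of the terminal paths, and no counting over the $r-1$ edges of $\tT$ or the $k_2$ paths per bundle rescues the bound $|\pset_e'|\geq\lfloor k_2/2\rfloor$. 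The paper's proof resolves exactly this by \emph{rerouting the terminal paths rather than evicting tree paths}: it sets up a stable-matching instance (following Conforti et al.) between the paths of $\pset$ and the paths of $\qset=\bigcup_e\pset_e$, with one edge per shared $G$-edge and preference lists given by the order of appearance along each path. Each terminal path that is matched is spliced onto its partner tree path at the matched edge and follows it to $\bigcup_i S_i$; stability guarantees the resulting system has congestion $1$ in $G'$, and each tree path is consumed by at most one terminal path, so at most $|\pset|\leq k_2/2$ paths are lost from each $\pset_e$. This splicing idea is the essential missing ingredient; without it (or an equivalent rerouting), congestion $2$ in $G$ cannot be certified.

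A secondary issue is in stages~(a)--(b): fixing the pair $(S,S')$ in advance and subdividing all of $\out(S)\cup\out(S')$ does not give $\Omega(1)$-well-linkedness for $\tset\cup\tset_2$, since $|\out(S)|$ may greatly exceed the $k_1/2$ flow units that $S$ can send to $\tset$; and the max-flow paths you extract may pass through other clusters $S_{j''}$, violating the requirement that $\pset_1',\pset_1''$ avoid $\bigcup_j S_j$ as inner vertices. The paper sidesteps both problems by routing $\Omega(k_1)$ terminals to a single set $\out(S_1)$ via the grouping technique, truncating each path at the \emph{first} cluster of $\rset'$ it meets, and only then pigeonholing twice over the $r$ clusters to determine $S$ and $S'$ a posteriori --- which is also where the factor $k_4=\Omega(k_2/r^2)$ comes from.
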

\begin{proof}
 
We start with the following lemma, whose proof uses standard techniques.

\begin{lemma}\label{lem: connecting the terminals}
Let $S\in \rset$ be any set. Then there is an efficient algorithm, that either finds a subset $\mset'\sse \mset$ of $k/\poly\log k$ demand pairs and a routing of $\mset'$ via edge-disjoint paths in $G$, or finds a subset $\mset_0\sse \mset$ of $\Omega(k_1)$ pairs, and a set $\pset:\tset(\mset_0)\connect_1\out(S)$ of paths in graph $G$, connecting every terminal in $\tset(\mset_0)$ to some edge of $\out(S)$ via edge-disjoint paths.
\end{lemma}

\begin{proof}
Recall that we have assumed that there is a flow $F$ of value $k_1/2$ and no congestion in graph $G$ between $\out(S)$ and $\tset$. From the integrality of flow, there is a set $\pset^*$ of $\lfloor k_1/2\rfloor$ edge-disjoint paths connecting the terminals in $\tset$ to the edges in $\out(S)$. Recall that the degree of every terminal in $G$ is $1$, so every terminal is an endpoint of at most one path in $\pset^*$. Let $\tset_0\sse \tset$ be the subset of $\lfloor k_1/2\rfloor$ terminals that serve as endpoints of paths in $\pset^*$. We partition the remaining terminals into $h=\lceil \frac{k}{\lfloor k_1/2\rfloor}\rceil=O(\gamma^3\log\gamma)$ subsets of at most $\lfloor k/2\rfloor $ terminals each. Since the set $\tset$ of terminals is $1$-well-linked, for each $1\leq j\leq h$, there is a flow $\fset_j: \tset_j\connect_1\tset_0$. Concatenating flow $F_j$ with the paths in $\pset^*$, and taking the union of all resulting flows over all $1\leq j\leq h$, we obtain a flow $F'$, where every terminal in $\tset$ sends one flow unit to some edge in $\out(S)$, and the congestion caused by $F'$ is at most $2h=O(\gamma^3\log\gamma)$.

Our next step is to perform a grouping of the terminals, using Theorem~\ref{thm: grouping}, with the parameter $q=20h$. Let $\gset$ be the resulting partition of the terminals. Recall that each set $U\in \gset$ contains at least $q$ and at most $3q$ terminals, and it is associated with a tree $T_U$, containing the terminals of $U$, such that the trees $\set{T_U}_{U\in \gset}$ are edge-disjoint.

We partition $\mset$ into two subsets: $\mset_1'$ containing all pairs $(s,t)$ where both $s$ and $t$ belong to the same group $U\in \gset$, and $\mset_2'$ containing all remaining pairs.

Assume first that $|\mset_1'|\geq |\mset|/2$. We then select a subset $\mset'\sse \mset_1'$ of demand pairs as follows: for each group $U\in \gset$, we select one arbitrary pair $(s,t)\in \mset_1'$ with $s,t\in U$, if such a pair exists, and add it to $\mset'$. Notice that since every group contains at most $60h$ terminals, the number of pairs 
in $\mset'$ is at least $\frac{k}{120h}=\Omega\left (\frac{k}{\gamma^3\log\gamma}\right )=\Omega \left (\frac{k}{\poly\log k}\right )$. Every pair $(s,t)\in \mset'_1$ can be connected by some path contained in the tree $T_U$, where $s,t\in U$. Since the trees $\set{T_U}_{U\in \gset}$ are edge-disjoint, the resulting routing of the pairs in $\mset'$ is via edge-disjoint paths.
We assume from now on that $|\mset_2'|\geq |\mset|/2$. We need the following simple claim.

\begin{claim} 
We can efficiently find a subset $\mset''\sse \mset'_2$ of at least $k_1/960$ demand pairs, such that there is a flow $F^*:\tset(\mset'')\connect_{1.1}\out(S)$ in $G$.
\end{claim}
\begin{proof}
We start with $\mset''=\emptyset$. While $\mset_2'\neq \emptyset$, select any pair $(s,t)\in \mset'_2$ and add it to $\mset''$. Assume that $s\in U$, $t\in U'$. Remove from $\mset_2'$ all pairs $(s',t')$, where either $s'$ or $t'$ belong to $U\cup U'$. Notice that for each pair added to $\mset''$, at most $6q=120h$ pairs are deleted from $\mset_2'$. Therefore, at the end of this procedure, $|\mset''|\geq \frac{k}{240h}\geq\frac{k}{240}\cdot\frac{k_1}{4k}=\frac{k_1}{960}$.

We now show that there is a flow $F^*:\tset(\mset'')\connect_{1.1}\out(S_1)$ in $G$. Let $F''$ be the flow $F'$, scaled down by factor $20h$. Then every terminal in $\tset$ sends $1/(20h)$ flow units to the edges of $\out(S)$, and the total congestion caused by $F''$ is at most $1/10$.
In order to define the flow $F^*$, consider some terminal $t\in \tset(\mset'')$, and let $U_t\in \gset$ be the group to which it belongs. Let $U'_t\sse U_t$ be any subset of $20h$ terminals in $U_t$. Terminal $t$ then sends $1/(20h)$ flow units to each terminal in $U_t'$, inside the tree $T_{U_t}$. This flow is then concatenated with the flow that leaves each terminal in $U'_t$ in $F''$. Taking the union of this flow for each $t\in \tset(\mset'')$, we obtain a flow $F^*$, where every terminal in $\tset(\mset'')$ sends one flow unit to the edges of $\out(S)$. In order to bound the edge congestion, we observe that flow $F''$ contributes at most $1/10$ congestion to every edge. Since the connected components $\set{T_{U_t}}_{t\in \tset(\mset'')}$ are edge-disjoint, the flow inside these components contributes at most $1$ to the total edge congestion. Overall, the congestion caused by the flow $F^*$ is at most $1.1$.
\end{proof}

From the integrality of flow, there is a collection $\pset$ of $\lfloor \frac{2|\mset''|}{1.1}\rfloor\geq 1.8 |\mset''|$ of edge-disjoint paths connecting the terminals in $\tset(\mset'')$ to the edges of $\out(S)$. Let $\tset'\sse \tset(\mset'')$ be the set of terminals where these paths originate, and let $\mset_0\sse \mset''$ be the subset of the demand pairs contained in $\tset'$. Since $|\tset'|\geq 1.8|\mset''|$, we get that $|\mset_0|\geq |\mset''|/2$. Thus, we have obtained a set $\mset_0$ of $\Omega(k_1)$ demand pairs, such that all terminals in $\tset(\mset_0)$ can be connected to the edges of $\out(S)$ via edge-disjoint paths.
\end{proof}

In order to complete the proof of Theorem~\ref{thm: step 2 - connecting the terminals}, we start by applying Lemma~\ref{lem: connecting the terminals} to the set $S=S_1$. If the outcome is a subset $\mset'\sse \mset$ of $k/\poly\log k$ demand pairs and a routing of pairs in $\mset'$ via edge-disjoint paths in $G$, then we return $\mset'$ together with this routing, and terminate the algorithm.

We therefore assume from now on, that Lemma~\ref{lem: connecting the terminals} returns a set $\mset_0\sse \mset$ of $\Omega(k_1)$ pairs, together with the set $\pset:\tset(\mset_0)\connect_1\out(S_1)$ of edge-disjoint paths in graph $G$. We discard terminal pairs from $\mset_0$, and the corresponding paths from $\pset$ until $|\mset_0|=\lfloor k_2/4\rfloor$, so $|\pset|=|\tset(\mset_0)|\leq k_2/2$ holds.

For each path $P\in \pset$, we direct the path from the terminal towards $S_1$, and we truncate it at the first vertex that belongs to $\bigcup_{i=1}^rS_i$. The new collection $\pset$ of paths then connects every terminal in $\tset(\mset_0)$ to some vertex of $\bigcup_{i=1}^rS_i$. Moreover, the paths in $\pset$ do not contain the vertices of $\bigcup_{i=1}^rS_i$ as inner vertices.

 
 We now combine the two sets of paths, $\pset$ and $\bigcup_{e\in E(\tilde{T})}\pset_e$, to ensure that they cause congestion at most $1$ in $G'$ altogether, using the following simple claim.

\begin{claim}
We can efficiently find, for each edge $e\in E(\tT)$, a subset $\pset_e'\sse \pset_e$ of $\lfloor k_2/2\rfloor $ paths, and a collection $\pset':\tset(\mset_0)\connect_1\bigcup_{i=1}^rS_i$ of paths in $G'$, such that the paths in set $\left (\bigcup_{e\in E(\tilde{T})}\pset'_e\right )\cup \pset'$ cause congestion $1$ in $G'$.
\end{claim}
\begin{proof}
The proof is very similar to the arguments used by Conforti et al.~\cite{CHR}. We re-route the paths in $\pset$ via some paths in $\bigcup_{e\in E(\tT)}\pset_e$, by setting up an instance of the stable matching problem in a multi-graph. In this problem, we are given a complete bipartite multigraph $\tilde G=(A,B,E)$, where $|A|=|B|$. Each vertex $v\in A\cup B$ specifies an ordering $R_v$ of the edges adjacent to $v$ in $\tilde G$. A complete matching $M$ between the vertices of $A$ and $B$ is called stable iff, for every edge $e=(a,b)\in E\setminus M$, the following holds. Let $e_a,e_b$ be the edges adjacent to $a$ and $b$ respectively that belong to $M$. Then either $a$ prefers $e_a$ over $e$, or $b$ prefers $e_b$ over $e$. Conforti et al.~\cite{CHR}, generalizing the famous theorem of Gale and Shapley~\cite{Gale-Shapley}, show an efficient algorithm to find a complete stable matching $M$ in any such multigraph.

Let $\qset=\bigcup_{e\in E(\tilde{T})}\pset_e$.
We set up an instance of the stable matching problem as follows. Set $A$ contains a vertex $v(P)$ for each path $P\in \pset$, and set $B$ contains a vertex $v(Q)$ for each path $Q\in \qset$.
In order to ensure that $|A|=|B|$, we add dummy vertices to $A$ as needed.

For each pair $P\in \pset$, $Q\in \qset$ of paths, for each edge $e$ that these paths share, we add an edge $(v(P),v(Q))$, and we think of this edge as representing the edge $e$. We add additional dummy edges as needed to turn the graph into a complete bipartite graph.

Finally, we define preference lists for vertices in $A$ and $B$. For each vertex $v(P)\in A$, the edges incident to $v(P)$ are ordered according to the order in which they appear on the path $P$. The dummy edges incident to $v(P)$ are ordered arbitrarily at the end of the list. For each vertex $v(Q)\in B$, its adjacent edges are appear in the reverse order of their appearance on the path $Q$, with the dummy edges appearing in the end of the preference list. The preference lists of the dummy vertices are arbitrary.

Let $M$ be a stable matching in the resulting graph $\tilde G$. For each $e\in E(\tT)$, we let $\pset'_e\sse \pset_e$ be the set of paths $Q$, whose vertex $v(Q)$ is matched to a dummy vertex in $M$. Since $|\pset|\leq k_2/2\leq |\pset_e|/2$, this ensures that $|\pset_e'|\geq k_2/2$. 

In order to construct the set $\pset'$ of paths, consider some path $P\in \pset$. If $P$ participates in the matching $M$ via a dummy edge, then we add $P$ to $\pset'$. Otherwise, assume that $(v(P),v(Q))\in M$, and the edge $(v(P),v(Q))$ corresponds to some edge $e$ shared by both $P$ and $Q$. We then replace $P$ with a path $P'$ that consists of two subpaths: a segment of $P$ from the beginning of $P$ until the edge $e$, and a segment of $Q$ from the edge $e$ to the end of $Q$. Notice that $P'$ starts at the same terminal as $P$, and terminates at some vertex of $\bigcup_{i=1}^r S_i$. Path $P'$ is then added to set $\pset'$.

This finishes the definition of sets $\pset'$, $\set{\pset_e'}_{e\in E(\tT)}$ of paths. Observe that the paths in $\pset'$ connect every terminal in $\tset(\mset_0)$ to some vertex of $\bigcup_{j=1}^rS_j$, and they do not contain the vertices of $\bigcup_{j=1}^rS_j$ as inner vertices. It now only remains to show that the set $\left (\bigcup_{e\in E(\tilde{T})}\pset_e'\right )\cup \pset'$ of paths causes congestion $1$ in $G'$.

Clearly, the paths in $\bigcup_{e\in E(\tilde{T})}\pset_e'$ are edge-disjoint from each other in graph $G'$, since the paths in $\bigcup_{e\in E(\tilde{T})}\pset_e$ were edge-disjoint from each other.

Assume now for contradiction that some pair of paths $P',P''\in \pset'$ share an edge. Assume w.l.o.g. that $P'$ was obtained from concatenating a segment of some path $P_1\in \pset$ and some path $Q_1\in \qset$ via some edge $e_1$ (where possibly $Q_1$ is empty and $P'=P_1$), and similarly $P''$ was obtained from concatenating a segment of some path $P_2\in \pset$ and some path $Q_2\in \qset$ via some edge $e_2$. The paths $P',P''$ may share an edge only if the first segment of $P'$ shares an edge with the second segment of $P''$, or vice versa. Assume w.l.o.g. that it is the former, that is, there is an edge $e'$, that appears on $P_1$ before $e_1$, and $e'$ appears on $Q_2$ after $e_2$ (notice that it is impossible that $e'=e_1$ or $e'=e_2$, since $Q_1,Q_2$ cannot share an edge). But then both $P_1$ and $Q_2$ prefer the edge $e'$ to their current matching in $M$, contradicting the fact that $M$ is a stable matching.

Finally, if some pair of paths $P'\in \pset'$ and $P''\in \bigcup_{e\in E(\tilde{T})}\pset_e'$ share an edge, we reach a contradiction exactly as before.
\end{proof}

Let $\tset_0=\tset(\mset_0)$. Let $S\in \rset'$ be the set such that at least $|\tset_0|/r$ paths in $\pset'$ terminate at $S$. 
Let $\pset^1\sse \pset'$ be the subset of paths that terminate at $S$, and let $\tset^1\sse\tset_0$ be the subset of terminals where these paths originate. Let $\mset^1\sse \mset_0$ be the subset of pairs $(s,t)$ where both $s,t\in \tset^1$, and let $\mset^2\sse \mset_0$ be the subset of pairs $(s,t)$ where $s\in \tset^1$, $t\not\in \tset^1$.

Assume first that $|\mset^1|\geq |\tset^1|/4$. Then $\mset^1$ contains $\Omega\left(\frac{k_2}{r}\right )>k_4$ demand pairs, and we have a set $\pset^1:\tset(\mset^1)\connect S$ of paths connecting the terminals participating in pairs in $\mset^1$ to the vertices of $S$. 
We discard arbitrary pairs in $\mset^1$, until $|\mset^1|=k_4$ holds, and we discard the corresponding paths from $\pset^1$.
We then partition the resulting set $\tset(\mset^1)$ of terminals into two arbitrary subsets, $\tset_1',\tset_1''$, such that for each demand pair $(s,t)\in \mset^1$, exactly one terminal belongs to each of the subsets. This defines a corresponding partition $(\pset_1',\pset_1'')$ of the set $\pset^1$ of paths. We set $S=S'$ in this case.

From now on we assume that $|\mset^1|<|\tset^1|/4$, and so $|\mset^2|=\Omega\left(\frac{k_2}{r}\right )$.
Let $\tset^2=\set{t\mid (s,t)\in \mset^2, s\in \tset^1}$, and let $\pset^2\sse \pset'$ be the subset of paths originating at the terminals of $\tset^2$. Let $S'\in \rset'$ be the set where at least $|\tset^2|/r$ of the paths in $\pset^2$ terminate. We let $\pset_1''\sse \pset^2$ be the subset of paths terminating at the vertices of $S'$, and we let $\tset_1''\sse \tset^2$ be the subset of terminals where these paths originate. We let $\tset_1'=\set{s\mid (s,t)\in \mset^2,t\in \tset_1''}$, $\pset_1'\sse \pset^1$ the set of paths originating at the terminals of $\tset_1'$, and $\mset_1\sse \mset^2$ the set of pairs $(s,t)$ with $s\in \tset_1'$, $t\in \tset_1''$. Notice that $|\mset_1|=\Omega\left(\frac{k_2}{r^2}\right )=k_4$ as required.
\end{proof}

Let $v$ and $v'$ be the vertices corresponding to the sets $S$ and $S'$ respectively in the tree $\tT$. Assume first that $v\neq v'$. Since the maximum vertex degree in $\tT$ is at most $3$, if we remove $v$ and $v'$ from $\tT$, we obtain at most $5$ sub-trees. Let $T_0$ denote the sub-tree incident on both $v$ and $v'$, $T_1,T_2$ the sub-trees incident on $v$, and $T_3,T_4$ the sub-trees incident on $v'$ (possibly some of the sub-trees are empty). Let $T\in \set{T_1,\ldots,T_4}$ be the sub-tree containing the most vertices, and assume w.l.o.g. that $T=T_1$, so it is incident on $v$. Let $T^*$ be the tree obtained from $\tT$, by deleting the sub-trees $T_2,T_3,T_4$ from it. It is easy to see that $T^*$ must contain at least $r/4\geq 2\gKRV$ vertices, and it contains both vertices $v$ and $v'$. Moreover, the degree of $v$ is at most $2$, and the degree of $v'$ is $1$ in $T^*$. We will view $v$ as the root of $T^*$.

If $v=v'$, then we root the tree $\tT$ at vertex $v$, and consider the sub-trees rooted at the children of $v$. We delete all but one of these sub-trees, leaving the sub-tree containing the most vertices. We denote by $T^*$ the resulting tree. In this case, the degree of $v$ is $1$, and $T^*$ again contains at least $2\gKRV$ vertices.

Let $\rset''\sse \rset'$ be the collection of sets $S_j$ whose corresponding vertex $v_j\in V(T^*)$, so $|\rset''|\geq 2\gKRV$.

For each terminal $t\in \tset(\mset_1)$, let $P_t\in \pset_1'\cup \pset_1''$ be the unique path originating at $t$.
Recall that every edge $e\in \out(S)\cup \out(S')$ may participate in up to two paths in $\pset_1'\cup \pset_1''$ (and in this case $S=S'$ must hold, and $e$ must be the last edge on these paths). It will be convenient to ensure that each such edge participates in at most one such path. In order to achieve this, we construct two new subsets $\mset_1',\mset_1''\sse \mset_1$ of demand pairs, as follows. Start with $\mset_1'=\mset_1''=\emptyset$. While $\mset_1\neq \emptyset$, select any pair $(s,t)\in \mset_1$, and delete it from $\mset_1$. Let $e$ be the last edge on $P_s$, and let $e'$ be the last edge on $P_t$. If $e=e'$, then we add pair $(s,t)$ to $\mset_1'$, and continue to the next iteration.
Otherwise, we add $(s,t)$ to $\mset_1''$, and we delete from $\mset_1$ every pair $(s',t')\in \mset_1$, where either  $P_{s'}$ or $P_{t'}$ contain either of the edges $e$ or $e'$.
If, at the end of this procedure, $|\mset_1'|\geq |\mset_1|/2$, then we can route every pair $(s,t)\in \mset_1'$ by concatenating the paths $P_s$ and $P_t$. This gives a routing of the paths in $\mset_1'$ with congestion at most $2$. From now on we assume that $|\mset_1'|<|\mset_1|/2$.
Observe that for each pair that we add to $\mset_1''$, we delete at most three pairs from $\mset_1$. Therefore,  $|\mset_1''|\geq |\mset_1|/6$. We delete demand pairs from $\mset_1''$, until $|\mset_1''|=\lfloor k_4/6\rfloor$ holds, and we delete from  $\tset_1',\tset_1''$ all terminals that do not participate in the pairs in $\mset_1''$, and we delete their corresponding paths from $\pset_1'$ and $\pset_1''$. For convenience, we will refer to $\mset_1''$ as $\mset_1$ from now on. Observe that now every edge in $\out(S)\cup \out(S')$ participates in at most one path in $\pset_1'\cup \pset_1''$.

To summarize this step, we now have a tree $T^*$ with maximum degree $3$, and a subset $\rset''\sse \rset$ of at least $2\gKRV$ vertex subsets, such that $V(T^*)=\set{v_j\mid S_j\in \rset''}$. Tree $T^*$ contains two special vertices, vertex $v$ corresponding to the set $S\in \rset''$ - the root of the tree, whose degree is at most $2$, and vertex $v'$ corresponding to the set $S'\in \rset''$, whose degree is $1$. It is possible that $S=S'$ and so $v=v'$. For each edge $e=(v_i,v_j)\in E(T^*)$, we are given a set $\pset'_e$ of $\lfloor k_2/2\rfloor$ paths in graph $G$, connecting the vertices of $S_i$ to the vertices of $S_j$. We are also given a subset $\mset_1$ of $\lfloor k_4/6\rfloor <\lfloor k_2/2\rfloor $ demand pairs, and two subset $\tset_1',\tset_1''$ of terminals, such that for each pair $(s,t)\in \mset_1$, $s\in \tset_1'$, $t\in \tset_1''$ or vice versa. Additionally, we have two sets $\pset_1':\tset_1'\connect \out(S)$, $\pset_1'':\tset_1''\connect \out(S')$ of paths. The paths in $\pset_1'\cup \pset_1''\cup \left(\bigcup_{e\in E({T^*})}\pset_e'\right )$ do not contain the vertices of $\bigcup_{S''\in \rset''}S''$ as inner vertices, and the total congestion they cause in graph $G$ is at most $2$. Each edge in $\out(S)\cup \out(S')$ participates in at most one path in $\pset_1'\cup \pset_1''$.

\subsection{Step 3: Building the Good Crossbar}
We assume w.l.o.g. that $\rset''=\set{S_1,\ldots,S_{r''}}$, where $r''\geq 2\gkrv$. We say that a path $P$ in graph $G$ is good iff it does not contain the vertices of $\tset\cup\left (\bigcup_{j=1}^{r''}S_j\right )$ as inner vertices. Consider some set $S_j\in \rset''$, and let $\pset$ be any collection of good paths in $G$. We denote by $\Gamma_j(\pset)\sse \out(S_j)$ the multi-set of edges, that appear as the first or the last edge on any path in $\pset$. That is,
 
 \[\Gamma_j(\pset)=\set{e\in \out(S_j)\mid e\mbox{ is the first or the last edge on some path $P\in\pset$}}\]
 
 Our first step is to select, for each edge $e\in E(T^*)$, a large subset $\pset''_e\sse \pset'_e$ of paths, and two large subsets $\pset'_2\sse \pset'_1,\pset_2''\sse \pset''_1$ of paths, such that, on the one hand, for each $1\leq j\leq r''$, set $S_j$ is $1$-well-linked for the subset $\Gamma_j(\pset')$ of paths, where $\pset'=\pset_2'\cup \pset_2''\cup\left (\bigcup_{e\in E(T^*)}\pset''_e\right )$, while on the other hand the endpoints of paths in $\pset'_2$ and $\pset_2''$ form source-sink pairs in set $\mset_1$.
This is done in the following theorem.

\begin{theorem}\label{thm: step 3: ensuring well-linkedness}
There is an efficient algorithm, that either computes a subset $\mset'\sse \mset_1$ of $k/\poly\log k$ demand pairs and their routing with congestion at most $2$ in $G$, or a $(k_1,\alpha)$-violating partition of some set $S_j\in \rset''$, or it computes, for each edge $e\in E(T^*)$, a subset $\pset''_e\sse \pset'_e$ of $k_5=\Omega(\alphaWL^2k_2)$ paths, and two subsets $\pset_2'\sse \pset_1',\pset_2''\sse \pset_1''$, such that:

\begin{itemize}
\item Each set $S_j\in \rset''$ is $1$-well-linked for the set $\Gamma_j(\pset')$ of edges, where $\pset'=\pset_2'\cup \pset_2''\cup\left (\bigcup_{e\in E(T^*)}\pset''_e\right )$.

\item There is a subset $\mset_2\sse \mset_1$ of $\Omega(\alphaWL^2)\cdot |\mset_1|=\Omega(\alphaWL^2\cdot k_4)$ demand pairs, and two subsets $\tset_2'\sse \tset_1',\tset_2''\sse \tset_1''$ of terminals, such that for each pair $(s,t)\in \mset_2$, $s\in \tset_2',t\in \tset_2''$ or vice versa. Moreover, $\pset_2'$ connects every terminal of $\tset_2'$ to some edge in $\out(S)$, and $\pset_2''$ connects every terminal of $\tset_2''$ to some edge in $\out(S')$.
\end{itemize}
\end{theorem}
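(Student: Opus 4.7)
The plan proceeds in two phases: first, check that every cluster $S_j\in\rset''$ is $\alphaWL$-well-linked for its current set of path-endpoints, returning a $(k_1,\alpha)$-violating partition if the check fails at some $S_j$; then, apply the grouping technique iteratively, one cluster at a time, to boost the well-linkedness from $\alphaWL$ up to $1$, at the cost of losing only two factors of $\alphaWL$ per path.

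\textbf{Phase 1 (sparsity check).} For each $S_j\in\rset''$, let $\Gamma_j\sse\out(S_j)$ be the multiset of endpoints at $S_j$ of the paths in $\pset_1'\cup\pset_1''\cup\bigcup_{e\in E(T^*)}\pset_e'$; by construction $|\Gamma_j|=O(k_2)\ll k_1$. I run $\algsc$ on the sparsest cut instance $\SC(G,S_j,\Gamma_j)$. If the returned cut has sparsity below $\alpha$, the resulting bipartition $(X,Y)$ of $S_j$ (with $T_1=\out(X)\cap\Gamma_j$ and $T_2=\out(Y)\cap\Gamma_j$) is a $(k_1,\alpha)$-violating partition, which I return. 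Otherwise every $S_j$ is $\alphaWL$-well-linked for $\Gamma_j$.

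\textbf{Phase 2 (sequential grouping).} I process the clusters of $\rset''$ one at a time, in an order in which $S$ precedes $S'$. At the round for $S_j$, let $\Gamma_j^{\mathrm{cur}}\sse\Gamma_j$ be the surviving endpoint set; the hereditary property of well-linkedness (a subset of a $\beta$-well-linked $\Gamma$ is again $\beta$-well-linked) ensures $S_j$ is still $\alphaWL$-well-linked for $\Gamma_j^{\mathrm{cur}}$. I apply Theorem~\ref{thm: grouping-many-sets} inside $\SC(G,S_j,\Gamma_j^{\mathrm{cur}})$ with the natural partition of $\Gamma_j^{\mathrm{cur}}$ into $O(1)$ sub-families: one per edge of $T^*$ incident on $v_j$ (at most three), plus, when $S_j\in\set{S,S'}$, an extra sub-family formed by the endpoints of $\pset_1'$ or $\pset_1''$ at $S_j$. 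This extracts $\Gamma_j^*\sse\Gamma_j^{\mathrm{cur}}$, an $\Omega(\alphaWL)$-fraction in every sub-family, so that $S_j$ becomes $1$-well-linked for $\Gamma_j^*$; I then discard every path whose $S_j$-endpoint lies outside $\Gamma_j^*$. The same hereditary property applied to $1$-well-linkedness guarantees that subsequent rounds' deletions of additional paths through $S_j$ do not destroy the guarantee established here. Since each path in $\pset_e'$ has exactly two cluster endpoints (those of $e$), it suffers shrinkage at most twice, yielding $|\pset_e''|=\Omega(\alphaWL^2)k_2=k_5$.

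\textbf{Matching coordination and main obstacle.} To preserve $\mset_1$, at the $S'$-round I restrict the $\pset_1''$ sub-family to those endpoints whose $\mset_1$-partner in $\tset_1'$ has already survived the $S$-round. Writing $\tset_2'\sse\tset_1'$ for the survivors at $S$ (of size $\Omega(\alphaWL)|\tset_1'|$) and $\tset_2''\sse\tset_1''$ for the survivors of the restricted sub-family at $S'$ (of size $\Omega(\alphaWL)|\tset_2'|=\Omega(\alphaWL^2)|\tset_1'|$), the set $\mset_2=\set{(s,t)\in\mset_1:s\in\tset_2',\,t\in\tset_2''}$ has cardinality $|\tset_2''|=\Omega(\alphaWL^2k_4)$, as required. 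The alternative first output of the theorem arises if, when invoking Theorem~\ref{thm: grouping-advanced} at $S$ or $S'$ to handle matching preservation, the theorem instead routes $\Omega(\alphaWL)|\tset_2'|$ of the restricted matching pairs on edge-disjoint paths inside the cluster; concatenating those internal routes with the corresponding sub-paths of $\pset_1'$ and $\pset_1''$ yields a congestion-$2$ routing of $k/\poly\log k$ demand pairs in $G$. The central difficulty is exactly this coordination across many clusters at once—keeping the per-path shrinkage at $\alphaWL^2$ while simultaneously enforcing $1$-well-linkedness at every $S_j$ and preserving a constant fraction of the matching $\mset_1$ across the possibly distant clusters $S$ and $S'$—and the key enabling observation is the hereditary property of well-linkedness, which lets the groupings be done sequentially without invalidating one another.
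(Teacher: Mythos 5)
Your proposal follows essentially the same route as the paper: a sparsest-cut check on each $S_j\in\rset''$ to either extract a $(k_1,\alpha)$-violating partition or certify $\alphaWL$-well-linkedness for the relevant endpoint multiset, followed by one application of the grouping theorem per cluster, with the two-shrinkage accounting (each path set loses one $\alphaWL$ factor at each of its two cluster endpoints) and the hereditary property of well-linkedness justifying the sequential processing. The paper organizes the second phase as a bottom-up traversal of $T^*$ with an explicit invariant, but an arbitrary order works for the same reason you give, so this difference is cosmetic.

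The one place your write-up is genuinely under-specified is the case $S=S'$ (i.e.\ $v=v'$), which the theorem statement explicitly permits. Your matching-coordination mechanism --- restricting the $\pset_1''$ sub-family at the $S'$-round to partners that survived the $S$-round --- presupposes two distinct rounds and therefore $S\neq S'$. When $S=S'$, both endpoints of every pair in $\mset_1$ arrive at the same cluster, and running Theorem~\ref{thm: grouping-many-sets} with the $\pset_1'$- and $\pset_1''$-endpoints as two separate sub-families gives no control over which pairs survive on \emph{both} sides: the two surviving $\Omega(\alphaWL)$-fractions could be completely mismatched under $\mset_1$. This is precisely where the paper invokes Theorem~\ref{thm: grouping-advanced}, taking the first terminal set to be the union of both endpoint families together with the matching $\tmset$ induced by $\mset_1$ on $\out(S)$, so that either a large sub-matching is routed inside $S$ (yielding, after concatenation with $\pset_1'\cup\pset_1''$, the congestion-$2$ routing output) or the surviving endpoints respect the matching pair-by-pair. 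You do cite Theorem~\ref{thm: grouping-advanced}, but only in connection with the routing alternative; you should make it the primary mechanism for the $S=S'$ case rather than the sequential restriction. With that amendment the argument is complete.
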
 
 
\begin{proof} 
 We start with $\pset=\pset'_1\cup \pset''_1\cup\left(\bigcup_{e\in E(T^*)}\pset'_e\right )$, and we consider the sets $S_j$ one-by-one. For each such set $S_j$, we compute the instance $SC(G,S_j,\Gamma_j)$ of the sparsest cut problem, where $\Gamma_j=\Gamma_j(\pset)$, and  run the algorithm $\algsc$ on it. Let $(A,B)$ be the output of the algorithm. If the sparsity of the cut $(A,B)$ is less than $\alpha$, then $(A,B)$ defines a $(k_1,\alpha)$-violating partition of $S_j$. We then stop the algorithm and return this partition. Otherwise, we are guaranteed that set $S_j$ is $\alpha/\alphasc=\alphaWL$-well-linked for $\Gamma_j(\pset)$.

From now on, we assume that for each $1\leq j\leq r''$, set $S_j$ is $\alphaWL$-well-linked for $\Gamma_j(\pset)$. 
Our next step is to boost the well-linkedness of every subset $S_j$ to $1$, by selecting subsets of paths in each set $\pset_e'$ (and in sets $\pset'_1,\pset''_1$). This is done by iteratively applying Theorems~\ref{thm: grouping-many-sets} and \ref{thm: grouping-advanced} to the sets $S_1,\ldots,S_{r''}$.

We start with the case where $v'=v$.
We process the vertices of the tree $T^*$ in the bottom-up fashion. Let $v_j$ be some vertex of $T^*$. We maintain the invariant that once vertex $v_j$ is processed, for every edge $e$ in the sub-tree of $v_j$, we have computed a subset $\pset''_e\sse \pset'_e$ of at least $k_5$ paths, such that the following holds. Let $\qset$ denote the union of all paths in the sets $\pset''_e$, where $e$ belongs to the sub-tree of $v_j$. Then for any vertex $v_{j'}$ in the sub-tree of $v_j$, set $S_{j'}$ is $1$-well-linked for $\Gamma_{j'}(\qset)$. Additionally, if $e$ is the edge connecting $v_j$ to its parent, then we have also computed a subset $\tpset_e\sse \pset'_e$ of at least $\tk=\Omega(\alphaWL k_2)$ paths, such that set $S_j$ is $1$-well-linked for the set $\Gamma_j(\qset\cup \tpset_e)$ of edges.

We now show how to process each vertex, so that the above invariant is preserved. Assume first that $v_j$ is a leaf vertex, and let $e$ be the unique edge of $T^*$ incident to $v_j$. Let $\Gamma=\Gamma_j(\pset_e)$. 
Consider the instance $\SC(G,S_j,\Gamma)$ of the sparsest cut problem. Recall that it is defined on a graph $G_j(\Gamma)$, with a set $\tset(\Gamma)$ of terminals, where each terminal $t_e\in \tset(\Gamma)$ corresponds to an edge $e\in \Gamma$. We apply Theorem~\ref{thm: grouping-many-sets} to graph $G_j(\Gamma)$ with a single set $\tset(\Gamma)$ of terminals. Since $S_j$ is $\alphaWL$-well-linked for $\Gamma$, graph $G_j(\Gamma)$ is $\alphaWL$-well-linked for $\tset(\Gamma)$, and so we will obtain a subset $\tset'(\Gamma)\sse \tset(\Gamma)$ of $\Omega(\alphaWL k_2)\geq \tk$ vertices. Let $\Gamma'_j\sse \Gamma$ be the subset of edges corresponding to the vertices in $\tset'(\Gamma)$. We define $\tpset_e\sse \pset'_e$ to be the subset of paths whose last edge belongs to $\Gamma'_j$. Notice that each edge $e'\in \Gamma'_j$ may serve as the last edge for at most two paths in $\pset'_e$. If two such paths exist, then only one of them is added to $\tpset_e$. This defines the required set $\tpset_e$, with $|\tpset_e|\geq \tk$. Observe that $S_j$ is $1$-well-linked for $\Gamma_j(\tpset_e)$.

Consider now some non-leaf vertex $v_j\neq v$. We assume that $v_j$ has two children. The case where $v_j$ has one child is handled similarly. Assume that $e$ is the edge connecting $v_j$ to its parent, and $e_1,e_2$ are the edges connecting $v_j$ to its two children in $T^*$. We define three subsets of edges in $\out(S_j)$: $\Gamma_0$, containing all edges of $\out(S_j)$ that belong to paths in $\pset'_e$, and $\Gamma_1,\Gamma_2$, containing the edges of $\out(S_j)$ that belong to the paths in $\tpset_{e_1}$, $\tpset_{e_2}$, respectively. Let $\Gamma=\Gamma_0\cup \Gamma_1\cup \Gamma_2$. Recall that $S_j$ is $\alphaWL$-well-linked for $\Gamma$. As before, we build an instance $\SC(G,S_j,\Gamma)$ of the sparsest cut problem, with the graph $G_j(\Gamma)$ and the set $\tset(\Gamma)$ of terminals. Let $\tset_0(\Gamma),\tset_1(\Gamma),\tset_2(\Gamma)\sse \tset(\Gamma)$ be the subsets of these new terminals corresponding to the edges in the sets $\Gamma_0,\Gamma_1$, and $\Gamma_2$, respectively. Then $G_j(\Gamma)$ is $\alphaWL$-well-linked for $\tset(\Gamma)$, and moreover $\tset_0(\Gamma)\geq \lfloor k_2/2\rfloor $ and $\tset_1(\Gamma),\tset_2(\Gamma)\geq \tk/2$ (since each edge in $\out(S_j)$ may participate in up to two paths in $\pset$). We apply Theorem~\ref{thm: grouping-many-sets} to graph $G_j(\Gamma)$ and three subsets $\tset_0(\Gamma),\tset_1(\Gamma),\tset_2(\Gamma)$ of terminals. Let $\tset_0'(\Gamma)\sse \tset_0(\Gamma)$, $\tset_1'(\Gamma)\sse \tset_1(\Gamma)$, and $\tset_2'(\Gamma)\sse \tset_2(\Gamma)$ be the output of the theorem. Recall that $|\tset_0'(\Gamma)|\geq \Omega(\alphaWL k_2)\geq \tk$, while $|\tset_1'(\Gamma)|,|\tset_2'(\Gamma)|\geq \Omega(\alphaWL \tk)=\Omega(\alphaWL^2 k_2)\geq k_5$. Moreover, graph $G_j(\Gamma)$ is $1$-well-linked for $\tset_0'(\Gamma)\cup \tset_1'(\Gamma)\cup\tset_2'(\Gamma)$, and these vertex sets are mutually disjoint. As before, we let $\Gamma'_0\sse \Gamma_0$ be the subset of edges corresponding to the vertices of $\tset'_0(\Gamma)$, and we let $\Gamma_1',\Gamma_2'$ be the subsets of edges corresponding to the vertices of $\tset_1'(\Gamma)$ and $\tset_2'(\Gamma)$, respectively. Finally, we set $\tpset_e\sse \pset'_e$ be the subset of paths that contain an edge in $\Gamma_0'$ (if some edge in $\Gamma_0'$ is contained in two such paths, only one such path is added to $\tpset_e$). 
We define $\pset_{e_1}''\sse \tpset_{e_1}$ and $\pset_{e_2}''\sse \tpset_{e_2}$ similarly, using the sets $\Gamma_1,\Gamma_2$ of edges. Observe that $|\tpset_e|\geq \tk$, and $|\pset_{e_1}|,|\pset_{e_2}|\geq k_5$ as required. It is easy to verify that the invariant is preserved.

In our final step we process the root vertex $v=v_j$. Recall that since we have assumed that $v=v'$, vertex $v$ has degree $1$. Let $e$ be the unique edge incident on $v$, and let $S_j\in \rset''$ be the set corresponding to $v$. We again define three subsets $\Gamma_0,\Gamma_1,\Gamma_2\sse \out(S_j)$ of edges, where $\Gamma_0$ contains all edges participating in the paths in $\tpset_e$, and $\Gamma_1$ and $\Gamma_2$ contain all edges participating in the paths in $\pset_1'$ and $\pset_1''$, respectively. Notice that $\mset_1$ defines a matching $\tmset$ over the set $\Gamma_1\cup \Gamma_2$ of edges, where for each pair $(s,t)\in \mset_1$, if $P,P'\in \pset_1'\cup \pset_1''$ are the two paths originating at $s$ and $t$ respectively, and $e',e''$ are the last edges on these two paths, then $(e,e'')\in \tmset$. As before, we set up an instance $\SC(G,S_j,\Gamma)$ of the sparsest cut problem, with graph $G_j(\Gamma)$ and the set $\tset(\Gamma)$ of terminals, where $\Gamma=\Gamma_0\cup \Gamma_1\cup \Gamma_2$.
Let $\tset_0(\Gamma),\tset_1(\Gamma)$ and $\tset_2(\Gamma)$ be the subsets of terminals in $\tset(\Gamma)$ corresponding to the edges in sets $\Gamma_0$, $\Gamma_1$, and $\Gamma_2$, respectively. 
 We apply Theorem~\ref{thm: grouping-advanced} to graph $G_j(\Gamma)$, where the first set of terminals is $\tset_1(\Gamma)\cup \tset_2(\Gamma)$, together with the matching $\tmset$, while the second set of terminals is $\tset_0(\Gamma)$. 
 Let $\tset_0'(\Gamma)\sse \tset_0(\Gamma)$, $\tset_1'(\Gamma)\sse \tset_1(\Gamma)$, and $\tset_2'(\Gamma)\sse \tset_2(\Gamma)$ be the output of the theorem, and let $\tmset'\sse \tmset$ be the corresponding matching over the vertices of $\tset_1'(\Gamma)\cup \tset_2'(\Gamma)$. As before, we define subsets $\Gamma_0',\Gamma_1',\Gamma_2'$ of edges corresponding to the terminal sets $\tset_0'(\Gamma),\tset_1'(\Gamma)$ and $\tset_2'(\Gamma)$, respectively. The subset $\pset_e''\sse \tpset_e$ of paths is defined exactly as before, from the subset $\Gamma_0'$ of edges. 

Finally, let $\pset_2'\sse \pset_1',\pset_2''\sse \pset_1''$ be the subsets of paths containing edges in $\Gamma_1'$ and $\Gamma_2'$, respectively (recall that each edge in $\Gamma_1'\cup \Gamma_2'$ may belong to at most one such path). Let $\tset_2'\sse \tset_1'$ be the subset of terminals where the paths of $\pset_2'$ originate, and similarly let $\tset_2''\sse \tset_1''$ be the subset of terminals where the paths of $\pset_2''$ originate. Let $\mset_2=\set{(s,t)\in \mset_1\mid s\in \tset_2',t\in \tset_2''}$. From the above discussion, $|\mset_2|=|\tmset'|\geq \Omega(\alphaWL)|\mset_1|$. This finishes the proof of the theorem for the case where $v=v'$. 

Assume now that $v\neq v'$. The algorithm is exactly as before, except that the vertices $v$ and $v'$ are processed differently. In order to process vertex $v'$, assume that $v'=v_j$, and let $e$ be the unique edge incident on $v'$ in tree $T^*$. We consider the two sets $\pset_1''$, $\pset'(e)$ of paths, and the two corresponding subsets $\Gamma_1,\Gamma_2\sse \out(S_j)$ of edges, that lie on the paths in $\pset_1''$ and $\pset'(e)$, respectively. We then process vertex $v'$ as any other degree-$2$ inner vertex in tree $T^*$, using the sets $\pset_1''$, $\pset'(e)$ of paths, and the corresponding subsets $\Gamma_1,\Gamma_2$ of edges. The output of this iteration is a subset $\tpset(e)\sse \pset'(e)$ of at least $\tk$ paths, and a subset $\tpset_1''\sse \pset_1''$ of at least $\Omega(\alphaWL)|\mset_1|$ paths, such that set $S_j$ is $1$-well-linked for $\Gamma_j(\tpset_1''\cup \tpset(e))$.

Let $\ttset_1''\sse \tset_1''$ be the subset of terminals $t$, where $P_t\in \tpset_1''$.
Next, we discard from $\mset_1$ all pairs $(s,t)$ that do not contain a terminal in $\tpset_1''$, obtaining a new set $\tmset\sse \mset_1$ of demand pairs. We also construct a set $\ttset_1'\sse \tset_1'$ of terminals, that participate in the pairs in $\tmset$, and a set $\tpset_1'\sse \pset_1'$ of paths originating from these terminals.

Vertex $v$ is processed as follows. Let $e,e'$ be two edges adjacent to $v$ in $T^*$ (the case where the degree of $v$ is $1$ is handled similarly). We process vertex $v$ exactly like all other inner vertices of tree $T^*$, with the corresponding three subsets of paths: $\tpset(e),\tpset(e')$, and $\tpset_1'$. As a result, we obtain three new subsets of paths: $\pset''(e)\sse \tpset(e)$ and $\pset''(e')\sse \tpset(e')$ containing at least $k_5$ paths each, and a subset $\pset_2'\sse\tpset_1'$ of $\Omega(\alphaWL^2)|\mset_1|$ paths. Let $\tset_2'\sse \ttset_1'$ be the subset of terminals $t$ where $P_t\in \pset_2'$, and let $\mset_2\sse \mset_1$ be the set of all pairs $(s,t)$ that contain a terminal in $\tset_2'$. We set $\tset_2''\sse \ttset_1''$ be the set of all terminals participating in pairs in $\mset_2$, and we let $\pset_2''\sse \tpset_1''$ be the set of paths originating at the terminals of $\pset_2''$.
\end{proof}

Let $k_6=|\mset_2|=\Omega(\alphaWL^2)k_4$. Notice that $k_6<k_5$. For each edge $e\in E(T^*)$, while $|\pset''(e)|>k_6$, we discard arbitrary paths from $\pset''(e)$, until $|\pset''(e)|=k_6$ holds. We now assume that $|\pset''(e)|=k_6$ for all $e\in E(T^*)$, and recall that $|\pset_2'|,|\pset_2''|=k_6$.

We are now ready to define the good crossbar in graph $G$. Let $\sset^*$ contain all sets $S_j$, where the degree of vertex $v_j$ in tree $T^*$ is either $1$ or $2$ (excluding the set $S$ corresponding to the root vertex $v$). Notice that at least half the vertices of $T^*$ must have this property, and therefore, $|S^*|\geq \gKRV$. If $|S^*|>\gkrv$, then we discard vertex subsets from $S^*$ arbitrarily, until $|S^*|=\gkrv$ holds. 

Consider some set $S_j\in S^*$, and let $e$ be any one of the edges adjacent to the vertex $v_j$ in tree $T^*$. We then define $\Gamma^*_j\sse \out(S_j)$ to be the subset of edges that participate in the paths in $\pset''(e)$. Recall that from Theorem~\ref{thm: step 3: ensuring well-linkedness}, set $S_j$ is $1$-well-linked for $\Gamma^*_j$.


Consider any set $S_j\in \rset''$. If the degree of $v_j$ in tree $T^*$ is $2$, then let $e,e'$ be the two edges incident to $v_j$ in $T^*$, and let $\Gamma,\Gamma'\sse \out(S_j)$ be the sets of edges lying on the paths $\pset''(e)$ and $\pset''(e')$, respectively. Since $S_j$ is $1$-well-linked for $\Gamma\cup \Gamma'$, we can find a set $\qset_j: \Gamma\sconnect_1 \Gamma'$ of paths contained in $S_j$.
 
 Assume now that the degree of $v_j$ in tree $T^*$ is $3$ (in this case, $S_j\not \in \sset^*$), and let $e,e'$ and $e''$ be the three edges incident to $v_j$ in $T^*$, where $e$ connects $v_j$ to its parent. Let $\Gamma,\Gamma',\Gamma''\sse \out(S_j)$ be the sets of edges lying on the paths in $\pset''(e),\pset''(e')$, and $\pset''(e'')$, respectively. Since $S_j$ is $1$-well-linked for $\Gamma\cup \Gamma'\cup \Gamma''$, we can find two sets $\qset^1_j: \Gamma\sconnect_1\Gamma'$ and $\qset^2_j: \Gamma\sconnect_1\Gamma''$ of paths inside $S_j$. Let $\qset_j=\qset^1_j\cup \qset^2_j$.

 Assume now that $v=v'$, and
 let $S$ be the set corresponding to the vertex $v$. Let $e$ be the unique edge incident on $v$ in tree $T^*$, and let $\Gamma\sse \out(S)$ be the subset of edges lying on the paths in $\pset''(e)$. Similarly, let $\Gamma'\sse \out(S)$ be the subset of edges lying on the paths in $\pset_2'\cup \pset_2''$. We can then find a set $\qset(v): \Gamma\sconnect_1\Gamma'$ of paths contained in set $S$. 
 
 We are now ready to complete the construction of the good crossbar for the case where $v=v'$. We let $\mset^*=\mset_2$. For each terminal $t\in \tset(\mset^*)$, we construct a tree $T_t$, which is then added to $\tau^*$, as follows. We start with the path $P_t\in \pset_2'\cup \pset_2''$, and let $\tilde e$ be the last edge on this path. We then add to $T_t$ the unique path in $\qset(S)$ that originates from $\tilde e$. Let $\tilde e'$ be the last edge on this path, let $e'$ be the edge of $T^*$ connecting $v$ to its child, and let $P$ be the unique path in $\pset_{e'}''$ originating at $\tilde e'$. We then add $P$ to the tree $T_t$.
 
 In general, we process the vertices of the tree $T^*$ in the top-bottom manner. We maintain the invariant that when a vertex $v_j$ is processed, for each edge $e$ adjacent to $v_j$ in tree $T^*$, one path of $\pset''_e$ has been added to tree $T_t$. Vertex $v_j$ is processed as follows. Let $e$ be the edge connecting $v_j$ to its parent, and let $P\in \pset''_e$ be the path that belongs to tree $T_t$. Let $\tilde e\sse \out(S_j)$ be the last edge on path $P$. If the degree of $v_j$ is $2$ in $T^*$, then we add to $T_t$ the unique path in $\qset_j$ that originates at $\tilde e$. Let $\tilde e'$ be the last edge on this path, and let $e'$ be the edge in $T^*$ connecting $v_j$ to its child. We add the unique path in $\pset_{e'}''$ originating at $\tilde e'$ to the tree $T_t$.
 
  Otherwise, if the degree of $v_j$ is $3$, then we add to $T_t$ two paths:  $P_1\in \qset^1_j$, and $P_2\in \qset_j^2$, that originate at $\tilde e$. Let $e',e''$ denote the two edges of $T^*$ connecting the vertex $v_j$ to its two children. As before, we add a unique path $P_3\in \pset''_{e'}$, originating at the last edge of $P_1$, and $P_4\in \pset''_{e''}$, originating at the last edge of $P_2$ to tree $T_t$.
  
  If the degree of $v_j$ is $1$, then we do nothing. Once all vertices of $T^*$ are processed, we obtain a tree $T_t$, containing the terminal $t$, and for each set $S_j\in \sset^*$, a unique edge in $\Gamma^*$. We let $\tau^*$ denote the collection of all trees $T_t$, for all $t\in \tset(\mset^*)$. It is immediate to verify that the trees in $\tau^*$ cause congestion at most $2$ in graph $G$. Moreover, since the sets $S_j\in \sset^*$ correspond to vertices of degree $1$ or $2$ in $T^*$, each edge of $G[S_j]$ for such sets $S_j$ belongs to at most one tree.
 
 Assume now that $v\neq v'$. Let $e,e'$ be the two edges incident to the vertex $v$ in tree $T^*$ (if the degree of $v$ is $1$, the argument is similar). As before, we denote by $\Gamma',\Gamma''\sse \out(S)$ the subsets of edges that lie on the paths in $\pset''_e,\pset''_{e'}$, respectively, and we let $\Gamma\sse \out(S)$ denote the subset of edges lying on the paths in $\pset_2'$. We then compute two sets $\qset_1(S):\Gamma\sconnect_1\Gamma'$, and $\qset_2(S):\Gamma\sconnect_1\Gamma''$ of paths contained in set $S$.
 
 Let $e$ be the unique edge incident on the vertex $v'$ in tree $T^*$, and let $\Gamma\sse \out(S')$ be the subset of edges participating in the paths in $\pset''_e$. Let $\Gamma'\sse \out(S')$ be the subset of edges participating in the paths in $\pset''_2$. We compute a set $\qset(S'):\Gamma\sconnect_1\Gamma'$ of paths contained in $S'$.
 
 For each terminal $t\in \tset(\mset_2)$, we now build a tree $T_t$. This tree is constructed exactly as before, and it contains the path $P_t\in \pset_2'$, and some additional path $P\in \pset_2''$. As before, for each edge $e\in T^*$, the tree contains exactly one path in $\pset''_e$. Notice that now each tree $T\in \tau^*$ contains two terminals: one in $\tset_2'$, and one in $\tset_2''$. We partition $\mset_2$ into two subsets, $\mset_2'$ and $\mset_2''$, where $\mset_2'$ contains all pairs $(s,t)$ where both $s$ and $t$ belong to the same tree in $\tau^*$, and $\mset_2''$ contains all remaining pairs. If $|\mset_2'|\geq |\mset_2|/2$, then we can route every pair $(s,t)\in \mset_2'$ along the tree $T\in \tau^*$ containing $s$ and $t$. This gives a routing of the pairs in $\mset_2'$ with congestion at most $2$. We now assume that $|\mset_2''|\geq |\mset_2|/2$.
 
 We construct the set $\mset^*\sse \mset_2''$ of demand pairs as follows. Start with $\mset^*=\emptyset$. While $\mset_2''\neq \emptyset$, select any pair $(s,t)\in \mset_2''$ and move it to $\mset^*$. Let $T_s,T_t\in \tau^*$ be the trees in which $s$ and $t$ participate, respectively. Remove from $\mset^*$ all pairs $(s',t')$, where either $s'$ or $t'$ participate in either $T_s$ or $T_t$. Observe that for each pair that we add to $\mset^*$, at most three pairs are removed from $\mset_2''$. Therefore, $|\mset^*|\geq |\mset_2''|/3\geq |\mset_2|/6\geq k_6/6$. Finally, we remove from $\tau^*$ all trees that do not contain terminals in $\tset(\mset^*)$. Observe that now each tree in $\tau^*$ contains exactly one terminal in $\tset(\mset^*)$. For each set $S_j\in \sset^*$, we also discard from $\Gamma^*_j$ all edges that do not participate in trees in $\tau^*$. Observe that for each $S_j\in \sset^*$, we now have $|\Gamma^*_j|=|\tau^*|\geq k_6/6$. Setting $k^*=|\mset^*|=|\tau^*|$ finishes the construction.

\bibliography{congestion-2}
\bibliographystyle{alpha}

\newpage

\label{--------------------------------Start Appendix--------------------------------------------}
\appendix

\section{Proof of Theorems~\ref{thm: grouping-many-sets} and \ref{thm: grouping-advanced}}\label{sec: proofs for groupings}
We start with several definitions and observations that are common to the proofs of both theorems. Both proofs closely follow the arguments of~\cite{ANF}. Assume that we are given a connected graph $G$ with maximum vertex degree at most $4$, and a set $\tset\sse V(G)$ of terminals, such that $G$ is $\alpha$-well-linked for $\tset$, for some $\alpha<1$. 
Let $q$ be the smallest even integer with $q\geq 4/\alpha$, so $4/\alpha\leq q\leq 8/\alpha$. We assume that $|\tset|\geq q$.

We use the following simple theorem, whose proof is very similar to the proof of Theorem~\ref{thm: grouping}, to find an initial clustering of the vertices of $G$.

\begin{theorem}\label{thm: grouping simple degree 4}
There is an efficient algorithm that finds a partition $\cset$ of the vertices of $G$, where each set $C\in \cset$ contains at least $q$ and at most $4q$ terminals. Moreover, each cluster $C\in \cset$ induces a connected component in $G$.
\end{theorem}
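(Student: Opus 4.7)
The plan is to apply a spanning-tree grouping procedure, similar in spirit to Theorem~\ref{thm: grouping}, but carefully arranged so that each resulting cluster is connected in $G$ itself rather than merely spanned by an edge-disjoint tree.

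First I would compute a spanning tree $T$ of $G$ (which exists because $G$ is connected) and root it at an arbitrary vertex $r$. Since every vertex of $G$, and hence of $T$, has degree at most $4$, every non-root vertex of $T$ has at most $3$ children. I would then process $T$ in post-order, maintaining, for each vertex $v$, a \emph{residual} set $R(v)$ of vertices in the subtree $T_v$ that have not yet been assigned to any group. For a leaf $v$, set $R(v)=\{v\}$. For an internal vertex $v$, set $R(v)=\{v\}\cup\bigcup_{c\text{ child of }v}R(c)$; if $|R(v)\cap\tset|\geq q$, cut off $R(v)$ as a new group and reset $R(v)=\emptyset$; otherwise pass $R(v)$ up to the parent of $v$.

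A simple inductive argument shows that any residual $R(v)$ that is passed up (i.e., not cut off) satisfies $|R(v)\cap\tset|<q$ and spans a connected subtree of $T$ rooted at $v$. Consequently, when a group is formed at a non-root $v$, its terminal count is at most $1+3(q-1)=3q-2$; at the root the analogous bound is $1+4(q-1)=4q-3$. So every group so formed has size in $[q,4q-3]$, and is connected in $G$ because it induces a connected subtree of $T$.

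The one remaining case is that of a nonzero residual $R(r)$ at the root with $|R(r)\cap\tset|<q$. Since $|\tset|\geq q$, at least one group $C$ must have been cut off earlier, at some descendant vertex $c$. The parent $p(c)$ of $c$ was not removed along with $C$, so $p(c)\in R(r)$; the tree edge $(p(c),c)$ then joins $R(r)$ with $C$, and hence $G[C\cup R(r)]$ is connected. I would merge $R(r)$ into $C$; because $|C\cap\tset|\leq 3q-2$ and $|R(r)\cap\tset|<q$, the merged cluster contains at most $4q-3<4q$ terminals. The main (mild) obstacle is precisely controlling this last upper bound of $4q$ while absorbing the leftover residual; it holds because the tree degree is bounded by $4$, which limits how many sub-threshold residuals can accumulate at any single vertex.
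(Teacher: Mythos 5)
Your proposal is essentially the paper's proof: both grow clusters bottom-up along a spanning tree of $G$, cutting off a subtree as soon as it accumulates at least $q$ terminals, and both use the degree bound $4$ to cap each cluster at roughly $3q$ terminals (the paper sidesteps your final merge step by instead terminating and taking the entire remaining tree as one cluster as soon as it contains at most $4q$ terminals). The one imprecision is in your merge step: for an \emph{arbitrary} group $C$ cut off at a vertex $c$, the inference ``$p(c)$ was not removed with $C$, so $p(c)\in R(r)$'' does not follow, since $p(c)$ could later be absorbed into a different group; you should instead pick a vertex $c\notin R(r)$ whose tree-parent lies in $R(r)$ (such a $c$ exists whenever $R(r)\neq V$, and its group must have been cut off at $c$ itself, a non-root vertex), after which the bound $|C\cap\tset|\le 3q-2$ and hence the $4q$ cap go through as you wrote.
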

\begin{proof}
Let $T$ be any spanning tree of $G$. Our algorithm is iterative. We start with $\cset=\emptyset$. During the algorithm, we remove some vertices from $T$, and we maintain the invariant that $T$ contains at least $q$ terminals.

If $T$ contains at most $4q$ terminals, then we add all the vertices of $T$ into $\cset$ as a single cluster, and terminate the algorithm. Otherwise, we find the lowest vertex $v$ in $T$, such that the sub-tree rooted at $v$ contains at least $q$ terminals. Notice that since we have assumed that the maximum vertex degree in $G$ is $4$, $v$ has at most three children, and the subtree of each such child contains fewer than $q$ terminals. Therefore, the total number of terminals contained in the subtree of $v$ is at most $3q$. We add all the vertices in the sub-tree of $v$ as a new cluster to $\cset$, and remove them from $T$. Notice that since we assumed that $T$ contained at least $4q$ terminals at the beginning of the current iteration, it contains at least $q$ terminals at the end of the current iteration. We then continue to the next iteration.
\end{proof}

Let $\hset$ be the set of all sub-graphs induced by the vertices in the subsets in $\cset$, that is, $\hset=\set{G[C]\mid C\in \cset}$. We will refer to the sub-graphs $H\in \hset$ as \emph{clusters}. Notice that all clusters in $\hset$ are vertex disjoint, and every vertex belongs to at least one cluster. For each cluster $H\in \hset$, we let $\tset(H)\sse \tset$ be the subset of terminals contained in $H$, and for each terminal $t\in \tset$, we let $H(t)\in \hset$ denote the unique cluster to which $t$ belongs.

Let $E'\sse E(G)$ be the set of all cut edges of graph $G$ (recall that $e\in E(G)$ is a cut edge iff its removal disconnects the graph $G$). We need the following definition.

\begin{definition}
Let $H$ be any sub-graph of $G$, $\tset(H)$ any subset of terminals contained in $H$, and let $v\in \tset(H)$. We say that $v$ is a \emph{center} of $H$, iff $v$ can send one flow unit to the vertices of $\tset(H)$ in graph $H$, such that every vertex of $\tset(H)$ receives at most $1/q$ flow units, and
the flow on every edge $e\in E(H)$ is at most $1/2$.

We say that $v$ is a \emph{pseudo-center} of $H$, iff $v$ can send one flow unit to the vertices of $\tset(H)$, with the same restrictions as above, except that the edges $e\in E'$ are allowed to carry up to one flow unit.
\end{definition}

We need the following claim.
\begin{claim}\label{claim: 1-edge-wl}
Assume that we are given a collection $\gset$ of {\bf edge-disjoint} sub-graphs of $G$, a subset $\tset(H)\sse \tset$ of terminals for each sub-graph $H\in \gset$, and a subset $\tset'\subseteq \tset$ of terminals, such that:

\begin{itemize}
\item For $H\neq H'$, $\tset(H)\cap \tset(H')=\emptyset$.

\item For each $t\in \tset'$, there is a cluster $H(t)\in \gset$, such that $t\in \tset(H)$. Moreover, $t$ is a pseudo-center of $H$.

\item For each $H\in \gset$, $|\tset(H)\cap \tset'|\leq 1$.
\end{itemize}

Then $G$ is $1$-well-linked for $\tset'$.
\end{claim}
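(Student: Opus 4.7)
The plan is a direct cut argument: for any partition $(A,B)$ of $V(G)$, with $a = |A\cap\tset'|\leq b = |B\cap\tset'|$ (WLOG), I will show $|E(A,B)|\geq a$ by combining two complementary lower bounds extracted from the pseudo-center flows $\phi_t$. The high-level intuition is that whatever each $\phi_t$ fails to send across $E(A,B)$ must instead land at terminals of $\tset$ on the source side, so either the flow crosses the cut enough times directly, or $|A\cap\tset|$ and $|B\cap\tset|$ both become large enough that $\alpha$-well-linkedness of $G$ for $\tset$ forces a large cut.

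First I would normalize each $\phi_t$ so that $t$ itself absorbs exactly $1/q$ of its own unit of flow: if $t$ originally receives $d_t<1/q$, I delete $1/q-d_t$ units along some path from $t$ to another sink and ``self-route'' the same amount at $t$ (using no edges), which only lowers edge congestion and so preserves the pseudo-center property. For $t\in\tset'\cap A$ let $x_t$ be the total $\phi_t$-flow landing in $A\cap\tset(H(t))$, and analogously define $y_{t'}$ for $t'\in\tset'\cap B$. Set $X=\sum_{t\in\tset'\cap A}x_t$, $Y=\sum_{t'\in\tset'\cap B}y_{t'}$, $u=a-X$, $v=b-Y$. The self-sink normalization gives $x_t,y_{t'}\geq 1/q$, which combined with $q\geq 4/\alpha\geq 4$ yields $u\leq 3a/4$ and $v\leq 3b/4$.

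I would then extract two bounds on $|E(A,B)|$. For bound \textbf{(i)}, $\phi_t$ for $t\in\tset'\cap A$ routes net $1-x_t$ units from $A$ to $B$ across $E(A,B)\cap E(H(t))$; since every edge carries at most one unit of pseudo-center flow, $|E(A,B)\cap E(H(t))|\geq 1-x_t$. Edge-disjointness of the clusters across all of $\tset'$ lets me sum over both $\tset'\cap A$ and $\tset'\cap B$ to get $|E(A,B)|\geq (a-X)+(b-Y)=u+v$. For bound \textbf{(ii)}, each $\phi_t$-sink absorbs at most $1/q$, so $|A\cap\tset(H(t))|\geq qx_t$ for $t\in\tset'\cap A$ and $|A\cap\tset(H(t'))|\geq q(1-y_{t'})$ for $t'\in\tset'\cap B$, with analogous lower bounds on the $B$-side. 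Disjointness of the sets $\tset(H)$ then yields $|A\cap\tset|\geq q(a-u+v)$ and $|B\cap\tset|\geq q(u+b-v)$; applying $\alpha$-well-linkedness of $G$ for $\tset$ produces $|E(A,B)|\geq \alpha q\min(a-u+v,\,u+b-v)$.

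A short case analysis finishes the proof. If $u+v\geq a$, bound \textbf{(i)} already gives $|E(A,B)|\geq a$. Otherwise $u+v<a$, and the bounds $u\leq 3a/4$, $v\leq 3b/4$, $b\geq a$ force $a-u+v\geq a-u\geq a/4$ and $u+b-v\geq b-v\geq b/4\geq a/4$, so $\min(a-u+v,\,u+b-v)\geq a/4$ and bound \textbf{(ii)} gives $|E(A,B)|\geq \alpha q\cdot a/4\geq a$ by $\alpha q\geq 4$. The main subtlety I foresee is the self-sink normalization: without it we could have $X=0$ and $u=a$, in which case bound \textbf{(i)} collapses while nothing prevents $|A\cap\tset|$ from being small, so the two bounds would no longer cover the parameter range. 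The rerouting trick that forces $x_t,y_{t'}\geq 1/q$ is the inexpensive fix that makes the hypothesis $q\geq 4/\alpha$ suffice.
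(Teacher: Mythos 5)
Your overall strategy (extract two complementary lower bounds on $|E(A,B)|$ from the pseudo-center flows and the $\alpha$-well-linkedness of $G$ for $\tset$) is different from the paper's, but it has a genuine gap in the arithmetic that the case analysis relies on. From $x_t\geq 1/q$ you get $X\geq a/q$ and hence $u = a - X\leq a(1-1/q)$; the step ``$q\geq 4/\alpha\geq 4$ yields $u\leq 3a/4$'' is backwards, since $u\leq 3a/4$ would require $1/q\geq 1/4$, i.e.\ $q\leq 4$, whereas here $q\geq 4/\alpha>4$ (and in the intended applications $q$ is polylogarithmically large). With the correct bounds $u\leq a(1-1/q)$, $v\leq b(1-1/q)$, Case 2 only gives $a-u+v\geq a/q$ and $u+b-v\geq b/q$, so bound (ii) yields $|E(A,B)|\geq \alpha q\cdot (a/q)=\alpha a$, not $a$. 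Concretely, take $q=4/\alpha$ large, $a=b=q$, and suppose each $t\in\tset'\cap A$ keeps only its self-routed $1/q$ in $A$ (so $x_t=1/q$, $u=a(1-1/q)$) while each $t'\in\tset'\cap B$ keeps everything in $B$ (so $v=0$): bound (i) gives $u+v=a-1<a$ and bound (ii) gives $\alpha a\ll a$, so neither bound certifies $1$-well-linkedness. The missing ingredient is integrality per cluster: whenever $x_t<1$ the cluster $H(t)$ must contribute at least one whole crossing edge, not just $1-x_t$ of one; but even with that strengthening the two-bound accounting does not obviously close (the terms $q|A_0|$, $q|B_0|$ can still fail to dominate by the needed factor $1/\alpha$).

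The paper proves the claim by a different route that sidesteps this loss: it builds an explicit flow $F:\tset_A\connect_1\tset_B$ of value $|\tset_A|$, by making each pseudo-center flow $1/q$-integral, routing the resulting sets $A'$, $B'$ of $q|\tset_A|$ terminals to each other through $G$ using $\alpha$-well-linkedness (scaled down by $q\geq 4/\alpha$ so this middle stage has congestion at most $1/2$), and concatenating. This gives congestion at most $1$ everywhere except possibly $1.5$ on bridges of $G$, and a cycle-removal plus parity argument then shows bridges in fact carry at most $1$. If you want to salvage a per-cut accounting proof you would need to incorporate both the integrality of the crossing-edge count and a more careful split between clusters whose flow crosses the cut and those whose flow does not; as written, the proposal does not establish the claim.
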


\begin{proof}
Let $(A,B)$ be any partition of $V(G)$. Denote $\tset_A=A\cap \tset', \tset_B=B\cap \tset'$, and assume w.l.o.g. that $|\tset_A|\leq |\tset_B|$. In order to show that $|E(A,B)|\geq |\tset_A|$, it is enough to prove that there is a flow $F:\tset_A\connect_1\tset_B$ in $G$. Let $\tset_B'\sse \tset_B$ be any subset of terminals of size $|\tset_A|$.

We compute the flow $F$ in two steps. In the first step, we find a flow $F': \tset_A\sconnect\tset_B'$, where the flow on every edge $e\in E(G)\setminus E'$ is at most $1$, and the flow on every edge $e\in E'$ is at most $1.5$. In the second step, we transform $F'$ into the desired flow $F:\tset_A\connect_1\tset_B$.

We start by showing the existence of the flow $F'$. Recall that each terminal $t\in \tset'$ is associated with a cluster $H(t)$, and there is a flow $F_t$ inside $H(t)$, where $t$ sends one flow unit to the vertices of $\tset(H(t))$, and each such vertex receives at most $1/q$ flow units. The congestion due to flow $F_t$ on every edge $e\in E(G)\setminus E'$ is at most $\half$, and the congestion on every edge $e\in E'$ is at most $1$. Since we have selected $q$ to be an even integer, from the integrality of flow, we can assume that the flow $F_t$ is $1/q$-integral (to do so, set the capacity of every edge $e\not\in E'$ to be $q/2$, the capacity of every edge $e\in E'$ to $q$, add a source $s$, and connect it to every vertex $v\in \tset(H(t))$ with an edge of capacity $1$. The integral $s$-$t$ flow of value $q$ in this network, scaled down by factor $q$, gives the desired flow $F_t$).
Every vertex $v\in \tset(H(t))$ receives either $0$ or $1/q$ flow units in $F_t$. Let $S'_t\sse \tset(H(t))$ be the set of exactly $q$ vertices that receive $1/q$ flow units in $F_t$.

Let $A'=\bigcup_{t\in \tset_A}S'_t$, and let $B'=\bigcup_{t\in \tset_B'}S'_t$. Notice that $|A'|=q|\tset_A|=q|\tset_B'|=|B'|$, since the terminal sets $\tset(H),\tset(H')$ are disjoint for $H\neq H'$. Since both sets $A'$ and $B'$ only contain terminals, and the graph $G$ is $\alpha$-well-linked for the terminals, there is a flow $F'':A'\sconnect_{2/\alpha}B'$ in graph $G$. Scaling this flow down by factor $q\geq 4/\alpha$, we obtain a flow $F^*$, where every vertex $v\in A'$ sends $1/q$ flow units, every vertex $v\in B'$ receives $1/q$ flow units, and the total edge congestion is at most $1/2$. Let flow $F'$ be the concatenation of $\bigcup_{t\in\tset_A}F_t$, the flow $F^*$, and $\bigcup_{t\in \tset_B'}F_t$. Then every vertex $v\in \tset_A$ sends one flow unit in $F'$, every vertex $v\in \tset_B'$ receives one flow unit, the flow on every edge $e\not\in E'$ is at most $1$, and the flow on every edge $e\in E'$ is at most $1.5$. We assume w.l.o.g. that flow $F'$ is half-integral. That is, we can decompose $F'$ into flow-paths, each of which carries exactly $1/2$ flow units. 	Notice that every vertex is an endpoint of either $0$ or $2$ such flow-paths.

We now transform the flow $F'$ into a flow $F:\tset_A\connect_1\tset_B$. This part is similar to the proof of Lemma 3.10 in~\cite{ANF}. The transformation replaces every flow-path in $F'$ by a corresponding simple flow-path, by removing all cycles on the path, if such exist. Let $F$ be the resulting flow. It is now enough to show that every edge $e\in E'$ carries at most one flow unit. Let $e\in E'$ be any cut edge, and assume for contradiction that more than one flow unit is sent via $e$. Let $X,Y$ be the two connected components of $G\setminus\set{e}$. 
Since the amount of flow on $e$ is bounded by $1.5$ and we assume that all flow-paths are simple and carry exactly $1/2$ flow units each, there are three flow-paths that use edge $e$, and each such path connects a vertex of $X$ to a vertex of $Y$. 

For every vertex $v\in X$, let $n_1(v)$ be the total number of flow-paths originating or terminating at $v$, and let $n_1=\sum_{v\in X}n_1(v)$. Since $n_1(v)\in \set{0,2}$ for each $v$, $n_1$ is an even integer. Let $n_2(v)$ be the total number of flow-paths originating of terminating at $v$, whose other endpoint also belongs to $X$, and let $n_2=\sum_{v\in X}n_2(v)$. Since every flow-path contained in $X$ contributes $2$ to $n_2$, $n_2$ is also an even integer. However, since there are exactly three flow-paths that start at $X$ and terminate at $Y$, $n_1-n_2=3$, which is impossible.
\end{proof}

\subsection*{Proof of Theorem~\ref{thm: grouping-many-sets}}
The proof roughly follows the arguments of~\cite{ANF}, and consists of three steps. In the first step, we define an initial clustering of the vertices of $G$ and select one representative terminal from each cluster. If we could claim at this point that each selected representative is a pseudo-center of its cluster, then we would be done by Claim~\ref{claim: 1-edge-wl}. However, this is not necessarily the case. To overcome this difficulty, we use a procedure suggested by~\cite{ANF} called \emph{tagging}, where each cluster $C$ ``tags'' some other cluster $C'$, such that the representative of $C$ is a pseudo-center of the merged cluster $C\cup C'$. The tagging procedure is performed in Step 2. Finally, in Step 3, we merge some pairs of clusters and select one pseudo-center for each such merged cluster.  

\paragraph{Step 1: Initial Clustering and Representative Selection}
We use Theorem~\ref{thm: grouping simple degree 4} to find an initial clustering $\hset$ of the vertices of $G$. Next, we select a representative for each cluster $H\in \hset$, using the following lemma.

\begin{lemma}\label{lem: rep selection}
There is an efficient randomized algorithm that w.h.p. finds, for each $1\leq j\leq r$, a subset $\ttset_j\sse \tset_j$ of terminals, such that $|\ttset_j|\geq \Omega(\alpha |\tset_j|/r)$, the sets $\set{\ttset_j}_{j=1}^r$ are mutually disjoint, and for each cluster $H\in \hset$, $\tset(H)\cap \left(\bigcup_{j=1}^r\ttset_j\right )=1$.
\end{lemma}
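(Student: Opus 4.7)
\smallskip\noindent\textbf{Proof proposal.} The plan is to select representatives by a simple two-stage random procedure performed independently in each cluster. For every $H\in \hset$, first pick a terminal $t_H\in \tset(H)$ uniformly at random, and then pick an index $j_H$ uniformly at random from the (nonempty) set $\set{j\in [r]:t_H\in \tset_j}$; finally, add $t_H$ to $\ttset_{j_H}$. By construction, exactly one terminal of each cluster is chosen, and each chosen terminal is placed into a single set $\ttset_{j}$, so the sets $\set{\ttset_j}_{j=1}^r$ are pairwise disjoint and the ``exactly one representative per cluster'' property holds automatically.

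It remains to establish the size lower bound $|\ttset_j|=\Omega(\alpha|\tset_j|/r)$. Recalling that $|\tset(H)|\leq 4q=O(1/\alpha)$ and that every terminal belongs to at most $r$ of the sets $\tset_{j'}$, for any fixed $j\in [r]$ and any cluster $H$ with $\tset(H)\cap \tset_j\neq \emptyset$, the probability that $H$ contributes a terminal to $\ttset_j$ is
\[
\sum_{t\in \tset(H)\cap \tset_j}\frac{1}{|\tset(H)|}\cdot \frac{1}{|\set{j':t\in \tset_{j'}}|}\;\geq\; \frac{|\tset(H)\cap \tset_j|}{4q\,r}.
\]
Summing over all clusters $H$ (and using that the sets $\set{\tset(H)}_{H\in\hset}$ partition $\bigcup_j\tset_j$), we obtain $\mathbb{E}[|\ttset_j|]\geq |\tset_j|/(4qr)=\Omega(\alpha|\tset_j|/r)$.

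Since the two random choices in different clusters are independent, the indicator variables recording whether a cluster $H$ contributes to $\ttset_j$ are mutually independent. Using the hypothesis $|\tset_j|\geq \Omega(r^2\log r/\alpha)$, the expectation $\mathbb{E}[|\ttset_j|]$ is $\Omega(r\log r)$, and a standard Chernoff bound yields $|\ttset_j|\geq \mathbb{E}[|\ttset_j|]/2=\Omega(\alpha|\tset_j|/r)$ with probability at least $1-1/\mathrm{poly}(r)$ (indeed, with probability that can be boosted to $1-1/\mathrm{poly}(n)$ by repetition or by using the fact that $n$ and $r$ are polynomially related through the parameters of the paper). A union bound over the $r$ choices of $j$ completes the argument. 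The only delicate point is the concentration step, and this is handled by the standard Chernoff inequality precisely because the sample size hypothesis $k_1,k_2\geq \Omega(r^2\log r/\alpha)$ was designed to guarantee sufficiently large per-coordinate expectations; the rest of the argument is a direct calculation.
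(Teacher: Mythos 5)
Your proof is correct and follows essentially the same strategy as the paper's: choose one random representative per cluster so that each set $\tset_j$ retains an $\Omega(1/(qr))$-fraction in expectation, then apply a Chernoff bound (using $|\tset_j|=\Omega(r^2\log r/\alpha)$), a union bound over $j$, and repetition to get the w.h.p.\ guarantee. The only difference is cosmetic: the paper first greedily thins each $\tset_j$ to one candidate per cluster and then has each cluster pick uniformly among its $\leq r$ candidates, whereas you have each cluster pick a uniformly random terminal and then a uniformly random valid index; both yield the same expectation bound and the same independence structure needed for concentration.
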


\begin{proof}
We start by selecting, for each $1\leq j\leq r$, at most one representative $t\in\tset(H)\cap \tset_j$ for every cluster $H$, and we will later ensure that at most one representative is chosen for each cluster overall. 

Fix some $1\leq j\leq r$.
We construct an initial set $\ttset_j'\subseteq \tset_j$ of representatives, as follows. Start with $\ttset_j'=\emptyset$. While $\tset_j\neq \emptyset$, select any terminal $t\in \tset_j$ and add it to $\ttset_j'$. Remove from $\tset_j$ all terminals that belong to $\tset(H(t))$, and continue to the next iteration. Let $\ttset_j'$ be the final subset of terminals. Then $|\ttset_j'|\geq \frac{|\tset_j|}{4q}$, since for each terminal added to $\ttset_j'$, at most $4q$ terminals are removed from $\tset_j$. Moreover, for each cluster $H\in \hset$, at most one terminal from $\tset(H)$ belongs to $\ttset_j'$.

Consider now some cluster $H\in \hset$, and let $\rset(H)$ be a multi-set, containing, for each $1\leq j\leq r$, the unique terminal in $\tset(H)\cap \ttset_j'$, if it exists (if some terminal $t\in \tset(H)$ belongs to several sets $\ttset_j'$, then it appears several times in $\rset(H)$). Since $\rset(H)$ contains at most one representative from each set $\ttset_j'$, $|\rset(H)|\leq r$.  Cluster $H$ then selects one terminal $t\in \rset(H)$ uniformly at random, and terminal $t$ becomes the \emph{representative of the cluster $H$}. 

For each $1\leq j\leq r$, let $\ttset_j\sse \ttset_j'$ be the subset of terminals that serve as representatives of clusters in $\hset$. Then the sets $\set{\ttset_j}_{j=1}^r$ are mutually disjoint, and each cluster $H\in \hset$ has exactly one terminal in $\tset(H)\cap \left(\bigcup_{j=1}^r\ttset_j\right )$. It now only remains to show that for each $1\leq j\leq r$, $|\ttset_j|\geq \Omega(\alpha |\tset_j|/r)$.

Fix some $1\leq j\leq r$, and let $\mu_j$ denote the expectation of $|\ttset_j|$. Then $\mu_j\geq |\ttset_j'|/r\geq |\tset_j|/(4qr)=\Omega(\alpha |\tset_j|/r)=\Omega(r\log r)$. From the Chernoff bound, $\prob{|\ttset_j|<\mu_j/2}\leq e^{-\mu_j/12}\leq 1/r^r$. Using the union bound, the probability that $|\ttset_j|\geq \Omega(\alpha |\tset_j|/r)$ for all $1\leq j\leq r$ is at least $1/r^{r-1}$. 
By repeating the procedure a polynomial number of times, we can ensure that this algorithm succeeds w.h.p.
\end{proof}

For each cluster $H\in \hset$, the unique terminal in $\tset(H)\cap \left(\bigcup_{j=1}^r\ttset_j\right )$ is called the \emph{representative of $H$}.
So far we have constructed the sets $\ttset_j$, for $1\leq j\leq r$ of terminals that have all desired properties, except that we are not guaranteed that $G$ is $1$-well linked for $\bigcup_{j=1}^r\ttset_j$. If we could guarantee that each terminal $t\in \bigcup_{j=1}^r\ttset_j$ is a pseudo-center of its cluster $H(t)$, then this property would follow from Claim~\ref{claim: 1-edge-wl}. However, we cannot guarantee this property at this stage. We overcome this difficulty in the next step, using the tagging technique of~\cite{ANF}. For each $1\leq j\leq r$, we denote $\tk_j=|\ttset_j|=\Omega(\alpha |\tset_j|/r)$.

\paragraph{Step 2: Tagging}
The idea of the tagging technique is that we merge some pairs of clusters in $\hset$, and select one representative for each such merged cluster (from among the two representatives of the merged clusters), in a way that ensures that these representatives are pseudo-centers of their new clusters. We use the following result of~\cite{ANF}.

\begin{theorem} (Lemma 3.13 in~\cite{ANF})\label{thm: tagging}
Let $H\in \hset$ be any cluster, and let $s\in V(H)$ be any vertex that is not a pseudo-center of $H$. Then there is an edge $e=(x,y)$ with $x\in V(H)$, $y\not\in V(H)$, such that for any path $P$, connecting $H$ to another cluster $H'\in \hset$, where the first edge of $P$ is $e$, and $P$ is edge-disjoint from $H$, vertex $s$ is a pseudo-center of the cluster $H\cup P\cup H'$. Moreover, the edge $e=(x,y)$ can be found efficiently.
\end{theorem}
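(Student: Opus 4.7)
The plan is to analyze the structure of a minimum cut that certifies $s$ is not a pseudo-center of $H$, identify the edge $e$ from this structure, and then verify that adding any valid path $P$ to any second cluster $H'$ supplies enough capacity to make $s$ a pseudo-center of the merged cluster.

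First I would set up the flow network $N(H)$ with source $s$, super-sink $t^*$, an edge of capacity $1/q$ from each terminal of $\tset(H)$ to $t^*$, capacity $1/2$ on every non-cut edge of $H$, and capacity $1$ on every edge of $H\cap E'$. The hypothesis that $s$ is not a pseudo-center is exactly that the max-flow value $v^*$ from $s$ to $t^*$ is strictly less than $1$. Let $(S,\bar S)$ with $s\in S$ be a min cut, chosen so that $H[S]$ is connected (if not, replace $S$ with the connected component of $s$ in $H[S]$; by the same argument as for subsets the restricted cut is no larger). The inequality $c(\delta_H(S))+|S\cap\tset(H)|/q<1$, combined with $|\tset(H)|\geq q$, the connectedness of $H$, and the fact that non-cut edges have capacity $1/2$ while cut edges have capacity $1$, forces a very rigid structure: $\delta_H(S)$ is a single non-cut edge $e^*=(u,v)$ with $u\in S$, $v\in\bar S$, no cut edge crosses, $|S\cap\tset(H)|<q/2$, and hence $|\bar S\cap\tset(H)|>q/2$. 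Moreover $v^*=\tfrac12+|S\cap\tset(H)|/q\geq\tfrac12$, so the deficit $1-v^*$ is at most $\tfrac12$; this is the crucial inequality enabling single-edge augmentation.

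Next I identify $e$. Since $e^*\notin E'$, the graph $G\setminus\{e^*\}$ is connected, so there is a path $P_0$ in $G$ from $u$ to $v$ avoiding $e^*$. Because the only edge of $H$ crossing between $S$ and $\bar S$ is $e^*$ itself, the path $P_0$ must leave $V(H)$; let $e=(x,y)$ be the first edge of $P_0$ exiting $V(H)$. Then $x\in S$ and $y\notin V(H)$; this is the edge we output. It is found efficiently by one max-flow computation plus a BFS in $G\setminus\{e^*\}$.

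Given an arbitrary path $P$ starting with $e$, edge-disjoint from $H$, and ending in some other cluster $H'$, I would exhibit a feasible flow of value $1$ from $s$ to $\tset(H)\cup\tset(H')$ in $H\cup P\cup H'$ that respects the pseudo-center capacities. The flow is the superposition of the max flow $f^*$ of value $v^*$ (delivering to $\tset(H)$) and an augmenting flow of value $1-v^*\leq\tfrac12$ routed as $s\leadsto x\to y\leadsto_{P} z\leadsto \tset(H')$. The pieces outside $H$ are easy: on $P$ and inside $H'$ (via any spanning tree of $H'$) every edge carries at most the total augmenting value $1-v^*\leq\tfrac12$, which is at most any edge capacity; and since $|\tset(H')|\geq q$, the $1-v^*$ units can be distributed to give each terminal of $H'$ at most $(1-v^*)/q\leq 1/q$. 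The main obstacle is routing $1-v^*$ units from $s$ to $x$ inside $H$ in the residual graph of $f^*$: since $x\in S$ and $H[S]$ is connected, $s$ can reach $x$ in the residual, but showing that a full $1-v^*$ units can be pushed requires leveraging the rigid cut structure — in particular, that $\bar S$ contains the majority of terminals, so $f^*$ must route a specific amount across $e^*$, leaving residual capacity inside $S$ that can be recombined with backward-residual capacity on $f^*$'s paths to $\tset(H)\cap S$ to form an $s$-$x$ augmenting flow of value $1-v^*$. Should the direct residual analysis prove delicate, a fallback is to argue by contradiction on the min cut of the merged network $N(H\cup P\cup H')$: any cut there either contains $e$ (picking up the $P$-and-$H'$-side capacity of at least $1$) or coincides with $(S,\bar S)$ but now must also cut the $\geq q$ terminal edges on the $H'$ side, in either case yielding cut value at least $1$.
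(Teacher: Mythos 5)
This theorem is not proved in the paper at all; it is imported as Lemma 3.13 of~\cite{ANF}, so your attempt has to stand on its own. Your setup and rigidity analysis are correct: with capacities $1/2$ on non-cut edges, $1$ on edges of $E'$, and $1/q$ on the terminal--sink arcs, the hypotheses $|\tset(H)|\geq q$ and connectivity of $H$ do force every cut of value less than $1$ to cross exactly one non-cut edge of $H$ and contain fewer than $q/2$ terminals, whence $v^*\geq 1/2$; and the detour argument in $G\setminus\set{e^*}$ does produce an edge leaving $V(H)$ from the source side.

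The gap is in \emph{which} cut you extract $e$ from, and it is fatal as written. For $s$ to become a pseudo-center of $H\cup P\cup H'$ for \emph{every} admissible $P$ and $H'$, the vertex $x$ must lie on the source side of \emph{every} cut of $N(H)$ of value less than $1$, not just of the one min cut $(S,\bar S)$ you selected: if some deficient cut $A\ni s$ has $x\notin A$, then $A$ is still a cut of the merged network of the same value (the edge $e$ does not cross it, and $P$ and $H'$ may avoid $A$ entirely), so $s$ is still not a pseudo-center. Concretely, let $H$ be a path $s$--$a$--$b$--$c$ followed by a well-connected blob containing the $q$ terminals, with $(s,a)\in E'$ and $(a,b),(b,c)\notin E'$ (so $a$ and $b$ each have a neighbor outside $V(H)$). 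Then $\set{s,a}$ and $\set{s,a,b}$ are both min cuts of value $1/2+1/q<1$; choosing $S=\set{s,a,b}$, your procedure may output $e=(b,y_b)$, after which the cut $\set{s,a}$ still certifies failure. Your fallback is false for the same reason: a cut of the merged network need neither contain $e$ nor coincide with $(S,\bar S)$. The repair is to let $W$ be the intersection of all cuts of value less than $1$; submodularity of the cut function together with your rigidity argument shows $W$ is itself such a cut, crossing a single non-cut edge $e_W$, and running your detour construction on $e_W$ yields an exit edge with $x\in W$. With that choice the flow argument does close: every cut of $N(H)$ augmented by a capacity-$1/2$ arc from $x$ to the sink has value at least $1$ (cuts containing $x$ gain $1/2$ on top of $v^*\geq 1/2$; cuts excluding $x$ already have value at least $1$ since $x\in W$), and the at most $1/2$ units arriving at $x$ are forwarded along $e$, $P$, and a spanning tree of $H'$ to the at least $q$ terminals of $\tset(H')$.
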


For every terminal $t\in \bigcup_{j=1}^r\ttset_j$, if $t$ is not a pseudo-center of $H(t)$, let $e_t=(x_t,y_t)$ be the corresponding edge guaranteed by Theorem~\ref{thm: tagging}, and let $H'(t)$ be the cluster of $\hset$ containing $y_t$ (such a cluster exists since every vertex of $G$ belongs to one of the clusters in $\hset$, from Theorem~\ref{thm: grouping simple degree 4}). We then say that the terminal $t$ and the cluster $H(t)$ \emph{tag} the cluster $H'(t)$.

Intuitively, we can now create a new cluster, by merging $H(t)$ and $H'(t)$. From Theorem~\ref{thm: tagging}, we are guaranteed that $t$ is a pseudo-center of the new cluster. However, since we require that the clusters are disjoint and every cluster only has one representative, the representative of the cluster $H'(t)$ will need to be discarded. While in general we only expect to discard a constant fraction of the terminals, it is possible that many of these terminals belong to one of the sets $\ttset_j$, and we will end up discarding too many terminals from one such set. Another difficulty is when many terminals tag the same cluster $H'$. In this case, following the approach of~\cite{ANF}, we can partition all such terminals into pairs, and connect their corresponding clusters by edge-disjoint paths that are contained in $H'$. For each such new merged cluster, we can select one of the two representative terminals that is guaranteed to be a pseudo-center by Theorem~\ref{thm: tagging}. Again, we will need to discard some clusters (such as the cluster $H'$ in this case), and we need to ensure that we do not discard too many representatives from any set  $\ttset_j$. 

We start by building a collection $\hset^*\sse \hset$ of  clusters, such that for any pair $(H,H')$ of clusters, where $H$ tags $H'$, at most one of these clusters may belong to $\hset^*$. We also ensure that among the representatives of the clusters in $\hset^*$, enough terminals belong to each set $\ttset_j$.

For convenience, for each $1\leq j\leq r$, we say that all terminals in $\ttset_j$ are of color $j$.
For each terminal $t\in \ttset_j$, the color of the corresponding cluster  $H(t)$ is also $j$. Therefore, we have at least $\tk_j$ clusters of each color $j$. Following the arguments of~\cite{ANF}, we build a directed graph $D$, whose vertex set corresponds to the clusters in $\hset$, that is, $V(D)=\set{v_H\mid H\in \hset}$, and there is an edge $(v_H,v_{H'})$ iff $H$ tags $H'$. We say that a vertex $v_H$ has color $j$ iff the color of $H$ is $j$.
 As in~\cite{ANF}, we need to find a large independent set $\iset$ in graph $D$. However, we also need to ensure that $\iset$ contains $\Omega(\tk_j/r)$ vertices of each color $j$.

\begin{claim}\label{claim: independent set}
There is an efficient algorithm that finds an independent set $\iset$ in graph $D$, containing at least $\Omega(\tk_j/r)$ vertices of each color $1\leq j\leq r$.
\end{claim}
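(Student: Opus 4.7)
The plan exploits a key structural property of $D$: every vertex has out-degree at most $1$, because Theorem~\ref{thm: tagging} associates each cluster $H$ with at most one tagged cluster $H'(t)$ through the unique representative $t$ of $H$. The underlying undirected graph of $D$ therefore has at most $|V(D)|$ edges, making it a pseudoforest (each weakly connected component is a tree or a tree plus a single extra edge), and in particular is $3$-chromatic; a proper $3$-coloring of the underlying undirected graph can be computed in polynomial time.

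Given this structural fact, I would construct $\iset$ by a simple randomized rounding. Sample each vertex $v \in V(D)$ into a candidate set $R$ independently with probability $p = 1/3$, and take
\[
\iset \;:=\; \{\, v \in R : f(v) \notin R\,\},
\]
where $f(v)$ denotes the unique out-neighbor of $v$ in $D$ if it exists (else the condition $f(v) \notin R$ is vacuously true). Independence is immediate: for any edge $(u, f(u)) \in E(D)$, membership $u \in \iset$ forces $f(u) \notin R \supseteq \iset$, so $f(u) \notin \iset$. A direct calculation yields $E[|\iset \cap \ttset_j|] \geq p(1-p)\tk_j = \tfrac{2}{9}\tk_j$ for every color $j$.

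The main obstacle will be upgrading this per-color expectation into a \emph{simultaneous} lower bound $|\iset \cap \ttset_j| \geq \Omega(\tk_j/r)$ across all $r$ terminal colors, with probability at least $1/\mathrm{poly}(n)$; once such an inverse-polynomial guarantee is in hand, $\mathrm{poly}(n)$ independent repetitions followed by a verification step succeed with high probability. I would carry out the concentration step via a second-moment computation. The indicator $\mathbf{1}[v \in \iset]$ depends on only two coins $R(v)$ and $R(f(v))$, so $\mathrm{Cov}(\mathbf{1}[v \in \iset],\mathbf{1}[v' \in \iset])$ is non-zero only when the pairs $\{v, f(v)\}$ and $\{v', f(v')\}$ share a vertex. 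Overlaps of types $v' = f(v)$ and $v = f(v')$ contribute \emph{negative} covariance (which only helps), while overlaps of the form $f(v) = f(v') = w$ contribute a positive term proportional to $p^3 \sum_{w} (d^{\mathrm{in},j}_w)^2$, where $d^{\mathrm{in},j}_w = |\{u \in \ttset_j : f(u) = w\}|$. The out-degree-$1$ property gives $\sum_w d^{\mathrm{in},j}_w \leq \tk_j$, and combined with the hypothesis $\tk_j = \Omega(r\log r)$ inherited from $k_1, k_2 \geq \Omega(r^2\log r/\alpha)$ via Lemma~\ref{lem: rep selection}, this forces $\mathrm{Var}[|\iset \cap \ttset_j|]/E[|\iset \cap \ttset_j|]^2$ to be small enough that Chebyshev's inequality produces a per-color failure probability of $o(1/r)$; a union bound over the $r$ terminal colors then establishes the desired simultaneous guarantee, yielding the claimed polynomial-time randomized algorithm.
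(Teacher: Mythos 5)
Your construction of $\iset$ and the independence argument are sound: every edge of $D$ has the form $(u,f(u))$ because each cluster tags at most one other cluster, so discarding $v$ whenever $f(v)\in R$ destroys every edge, and the per-color expectation $\Omega(\tk_j)$ is correct. The genuine gap is in the concentration step. Writing $X_j$ for the number of color-$j$ vertices in $\iset$, the positive part of $\mathrm{Var}[X_j]$ is proportional to $\sum_w d^{\mathrm{in},j}_w(d^{\mathrm{in},j}_w-1)$, and the only bound available is $\sum_w d^{\mathrm{in},j}_w\leq \tk_j$, which does not control the sum of squares. Nothing in the construction bounds the in-degrees of $D$: arbitrarily many clusters may tag the same cluster, and indeed the paper must handle exactly this situation later (via the pairing lemma in Step 3). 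In the extreme case where all $\tk_j$ color-$j$ vertices tag one cluster $w_j$, we have $X_j=(1-R(w_j))\cdot|\{u: R(u)=1\}|$, so $\mathrm{Var}[X_j]=\Theta(\tk_j^2)=\Theta(E[X_j]^2)$ and Chebyshev gives only a constant per-color failure bound; in fact $X_j=0$ precisely when $R(w_j)=1$, i.e.\ with probability $1/3$. If each color has its own sink $w_j$, a single trial succeeds for all colors only when $R(w_j)=0$ for every $j$, an event of probability $(2/3)^r=2^{-\Theta(\log^2k)}$, which is superpolynomially small when $k=n^{\Omega(1)}$; polynomially many repetitions therefore do not rescue the algorithm. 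The hypothesis $\tk_j=\Omega(r\log r)$ is irrelevant here, since the obstruction is correlation through a single shared coin rather than insufficient expectation.

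Any repair must treat high in-degree vertices explicitly. The paper's proof is deterministic and entirely different in flavor: it splits into two cases according to whether $k_1'\leq rk_2'$, calls a vertex \emph{good} if its cross-color degree is $O(\beta)$ for the appropriate ratio $\beta$ (at least half the vertices of the relevant color are good by edge counting, using out-degree at most $1$), greedily extracts an independent set among the good vertices of one side, deletes the few vertices of the other side adjacent to it, and then greedily completes the remaining colors with per-color caps. A randomized fix of your scheme is also conceivable (for instance, first forcing the heavy sink vertices out of $R$, or derandomizing the rounding by conditional expectations), but the second-moment argument as written does not go through.
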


\begin{proof}
For each $1\leq j\leq r$, let $k'_j$ denote the number of vertices of color $j$ in graph $D$. We will assume w.l.o.g. that $k'_2=k'_3=\cdots=k'_r=\Omega(\alpha k_2/r)$, and $k'_1=\Omega(\alpha k_1/r)$,  by discarding, if necessary, some vertices from the graph $D$. We consider two cases. 

Assume first that $k_1'\leq rk_2'$. Let $\beta=rk'_2/k'_1$.
Notice that the out-degree of every vertex in $D$ is at most $1$. We say that a vertex $v_H$ of color $1$ is good, iff the total number of edges adjacent to $v_H$, whose other endpoint's color is different from $1$, is at most $8\beta$. Since the total number of vertices in the graph $D$ is at most $r k_2'+k_1'\leq 2rk_2'=2\beta k_1'$, the total number of edges is also at most $2\beta k_1'$. Therefore, at least half the vertices of color $1$ are good. Let $D'$ be the sub-graph of $D$ induced by the good color-$1$ vertices. Since the out-degree of every vertex in $D'$ is at most $1$, and $D'$ contains at least $k_1'/2$ vertices, we can efficiently find an independent set $\iset_1$ of size exactly $\lfloor \frac{k_1'}{16r}\rfloor$ in graph $D'$, by iteratively greedily selecting a vertex with minimum total degree. The selected vertex $v$ is then added to $\iset_1$, and it is deleted from $D'$, together with all its neighbors. Let $\iset_1$ be the resulting independent set of size $\lfloor \frac{k_1'}{16r}\rfloor$, and
let $V'$ be the subset of the vertices of colors $2,\ldots,r$, whose neighbors belong to $\iset_1$. Then $|V'|\leq \lfloor \frac{k_1'}{16r}\rfloor \cdot  8\beta \leq \frac{k_2'}{2}$. 

Let $D''$ be the sub-graph of $D$ induced by the vertices of colors $2,\ldots,r$, that do not belong to $V'$. Then for each $2\leq j\leq r$, $D''$ contains at least $k_2'/2$ vertices of color $j$.
Next we construct, for each $2\leq j\leq r$, a set $\iset_j$ of $\lfloor \frac{k_2'}{6r}\rfloor$ vertices of color $j$, such that $\bigcup_{j=1}^r\iset_j$ is an independent set in graph $D$. We start with $\iset_j=\emptyset$ for all $2\leq j\leq r$, and the perform iterations. In each iteration, we select a minimum-degree vertex $v$ from $D''$, and add it to the set $\iset_j$, where $j$ is the color of $v$. We then delete $v$ and all its neighbors from $D''$.
If, at any point of the algorithm execution, the size of the set $\iset_j$, for any $2\leq j\leq r$,  reaches $\lfloor \frac{k_2'}{6r}\rfloor$, then we delete all  the remaining vertices of color $j$ from $D''$, and continue. Notice that since the out-degree of every vertex is at most $1$  in any induced sub-graph of $D''$, we can always find a vertex whose degree is at most $2$ in the current graph. Therefore, in each iteration, at most $3$ vertices are deleted from $D''$.

It is easy to see that at the end of the algorithm, for each $2\leq j\leq r$, $|\iset_j|=\lfloor \frac{k_2'}{6r}\rfloor$. Indeed, assume otherwise, and assume that for some $j^*$, set $\iset_{j^*}$ contains fewer than $\lfloor \frac{k_2'}{6r}\rfloor$ vertices upon the termination of the algorithm. But the original graph $D''$ contained at least $k_2'/2$ vertices of color $j^*$, the number of iterations is bounded by $(r-1)\cdot \lfloor \frac{k_2'}{6r}\rfloor<\frac{k_2'}{6}$, and in each iteration at most $3$ vertices of color $j^*$ are deleted from $D''$. Therefore, upon the termination of the algorithm, at least one vertex of color $j^*$ must be present in the graph, a contradiction. The output of the algorithm is $\bigcup_{j=1}^r\iset_j$.

Consider now the second case, where $k'_1>rk_2'$. Let $\beta=\frac{k_1'}{rk_2'}$. We say that a vertex of colors $2,\ldots, r$ is good if it has at most $8\beta r$ neighbors of color $1$. Since the total number of edges in the graph is bounded by $k_1'+rk_2'\leq 2k_1'= 2\beta rk_2'$, for every color $2\leq j\leq r$, at least half the vertices of color $j$ are good. We construct a sub-graph $D'$ of graph $D$, induced by all good vertices of colors $2,\ldots,r$. Since graph $D'$ contains at least $k_2'/2$ vertices of each color, using the same algorithm as before, we can find, for each $2\leq j\leq r$, a collection $\iset_j$ of $\floor{\frac{k_2'}{16r}}$ good color-$j$ vertices, such that the set $\iset'=\bigcup_{j=1}^r\iset_j$ is an independent set in $D'$. Let $V'$ be the subset of color-$1$ vertices that have neighbors in $\iset'$. Then, since all vertices in $\iset'$ are good, $|V'|\leq r\cdot \frac{k_2'}{16r}\cdot 8\beta r=\frac{\beta k_2'r}{2}=\frac{k_1'}{2}$. Let $V''$ be the subset of color-$1$ vertices that do not belong to $V'$, and let $D''$ be the subgraph of $D$ induced by $V''$. Then $D''$ contains at least $k'_1/2$ vertices, and using the same algorithm as before, we can find an independent set $\iset_1$ of size at least $k'_1/8$ in graph $D''$. We then output $\bigcup_{j=1}^r\iset_j$, which is guaranteed to be an independent set.
\end{proof}

Let $\hset^*\sse \hset$ contain the set of clusters $H$ whose corresponding vertex $v_H\in \iset$. For each $1\leq j\leq r$, let $\ttset_j^*\sse \ttset_j$ denote the sets of terminals $t$ where $H(t)\in \hset^*$. Then for each $1\leq j\leq r$, $|\ttset_j^*|=\Omega(\alpha |\tset_j|/r^2)$ w.h.p.

\paragraph{Step 3: Merging the Clusters}
In this step we merge some pairs of clusters, and select a representative for each new merged cluster, such that on the one hand, this representative is a pseudo-center of the merged cluster, and on the other hand, we obtain enough representatives from each set $\ttset_j^*$ of terminals.

We construct a new collection $\gset$ of merged clusters, and the final collection $\tset'=\bigcup_{j=1}^r\tset_j'$ of representative terminals, where for each $1\leq j\leq r$, $\tset_j'\sse \ttset_j^*$.
We start with $\gset=\emptyset$ and $\tset'=\emptyset$.

We add to $\gset$ all clusters $H\in\hset^*$ that do not tag any cluster. Recall that in this case, the unique terminal $t\in \tset(H)\cap \left(\bigcup_{j=1}^r\ttset_j^*\right)$ is a pseudo-center of $H$. We add $t$ to $\tset'$.

Consider now some cluster $H\in \hset$, and let $\gset(H)\sse \hset^*$ be the set of all clusters in $\hset^*$ that tag $H$. If $|\gset(H)|=1$, then let $H'\in \hset^*$ be the unique cluster that tags $H$. Let $t\in \bigcup_{j=1}^r\ttset_j^*$ be the representative of $H$, that is, $t\in \tset(H)$. We then create a new merged cluster, $H^*=H\cup H'\cup e_{t(H)}$, which is added to $\gset$. We set $\tset(H^*)=\tset(H)\cup \tset(H')$. By Theorem~\ref{thm: tagging}, we are guaranteed that $t$ is a pseudo-center of $H^*$. We add $t$ to $\tset'$. 

Assume now that $|\gset(H)|>1$. We use the following lemma from~\cite{ANF}.
\begin{lemma} (Lemma 3.14 in~\cite{ANF}) \label{lemma: pairing}
Let $T$ be a tree and $A$ some even multi-set of vertices in $V(T)$. Then we can efficiently find $|A|/2$ edge-disjoint paths in $T$, such that each vertex $v\in A$ is the endpoint of exactly $n_v$ of these paths, where $v$ occurs $n_v$ times in $A$.
\end{lemma}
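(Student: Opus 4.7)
\textbf{Proof plan for Lemma~\ref{lemma: pairing}.}

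The plan is to root $T$ at an arbitrary vertex $r$ and process the vertices in post-order, greedily pairing ``tokens'' at each vertex and pushing at most one leftover token up the tree to its parent. The key invariant will be that the edge from a vertex $c$ to its parent is traversed by at most one of the constructed paths.

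First, for each vertex $v$, let $n_v$ denote the multiplicity of $v$ in $A$, think of each copy of $v$ in $A$ as a distinct \emph{token} placed at $v$, and let $S_v$ denote the total number of tokens in the subtree $T_v$ rooted at $v$. I process the vertices in post-order, maintaining the following inductive invariant for the subtree $T_v$: after processing $v$, we have output a collection of edge-disjoint paths contained in $T_v$ which together match up all but at most one of the tokens in $T_v$ (so all tokens are matched iff $S_v$ is even), and each such path has as its two endpoints the original locations of the two tokens of its pair.

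At vertex $v$, I form a working multiset $L_v$ consisting of (i) the $n_v$ tokens originally placed at $v$, and (ii) for every child $c$ of $v$ such that an unmatched token was left over from $T_c$, that leftover token. I then pair up the elements of $L_v$ arbitrarily, leaving one element unpaired iff $|L_v|$ is odd; this leftover (if any) will be pushed up to the parent of $v$ for later matching. For each pair $(a,b)$ formed at $v$, I output the unique tree-path connecting the original locations of $a$ and $b$ (note that this path necessarily passes through $v$, since $a$ and $b$ reside in different ``branches'' incident to $v$: either in distinct child-subtrees, or one of them is $v$ itself).

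For edge-disjointness, consider any edge $e=(c,v)$ where $c$ is a child of $v$. The paths produced strictly inside $T_c$ do not use $e$ by induction. A newly formed path uses $e$ only if one of its two tokens is the leftover token pushed up from $c$, and by construction there is at most one such token; this shows that each tree-edge lies on at most one output path. The fact that each vertex $u\in A$ is an endpoint of exactly $n_u$ paths is immediate from the bookkeeping, since each of the $n_u$ tokens at $u$ is matched exactly once (tokens are never duplicated, and since $|A|$ is even, no token remains unmatched at the root $r$). Finally, there are $|A|/2$ paths because every pair consumes two tokens. The whole algorithm runs in linear time in the obvious way.

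I do not anticipate a real obstacle here; the only subtlety is the bookkeeping needed to argue that paths formed at ancestors of a vertex $c$ do not contend with each other for the edge $(c,\mathrm{parent}(c))$, which is handled cleanly by the ``at most one leftover token per subtree'' invariant described above.
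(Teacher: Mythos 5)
The paper does not prove this lemma at all --- it imports it verbatim as Lemma~3.14 of~\cite{ANF} --- so there is no internal proof to compare against; your bottom-up token-pairing argument is the standard proof of this fact and it is correct. The two load-bearing points are both in place: the parity bookkeeping $|L_v|\equiv S_v \pmod 2$ ensures at most one leftover per subtree and none at the root (since $|A|$ is even), and edge-disjointness for an edge $(c,\mathrm{parent}(c))$ follows because the unique leftover token of $T_c$ is the only token located in $T_c$ that can ever appear in $L_w$ for a vertex $w\notin T_c$, and that token is paired exactly once. One cosmetic remark: when the two tokens of a pair sit at the same vertex (possible since $A$ is a multi-set), the resulting ``path'' is a single vertex with no edges; this is harmless for the lemma as stated and for its use in the paper, but is worth a parenthetical.
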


Let $\gset(H)=\set{H_1,\ldots,H_p}$, and for each $1\leq i\leq p$, let $t_i\in \tset(H_i)$ be the unique terminal that belongs to $\bigcup_{j=1}^r\ttset_j^*$.
If $p$ is odd, then we choose one of the clusters $H_1,\ldots,H_p$ uniformly at random, and discard it from $\hset^*$. Notice that $p\geq 3$ must hold in this case, so the probability that any cluster is discarded is at most $1/3$. We assume from now on that $p$ is even.
 Let $A=\set{y_{t_1},y_{t_2},\ldots,y_{t_p}}$. We use Lemma~\ref{lemma: pairing} to find a collection $\pset$ of edge-disjoint paths contained in $H$ (by first  building a spanning tree $T$ of $H$), where every path connects a distinct pair $y_{t_i},y_{t_j}$ of vertices. For each such path $P\in \pset$, if $P$ connects $y_{t_i}$ to $y_{t_j}$, then we construct a new cluster $H^*=H_i\cup P\cup H_j$, and add it to $\gset$.  We set $\tset(H^*)=\tset(H_i)\cup \tset(H_j)$. Notice that from Theorem~\ref{thm: tagging}, both $t_i$ and $t_j$ are pseudo-centers of $H^*$. We select one of these two terminals uniformly at random, and add it to $\tset'$.


Consider the final set $\gset$ of clusters. From the construction, we are guaranteed that all these clusters are edge-disjoint, since the original clusters $\hset$ were vertex-disjoint. Every terminal $t\in \tset'$ is guaranteed to be a pseudo-center of its cluster in $\gset$.
It is now enough to show that for each $1\leq j\leq r$, at least $|\ttset_j^*|/12=\Omega(\alpha|\tset_j|/r^2)$ terminals from $\ttset_j^*$ belong to $\tset'$ w.h.p.

Fix some $1\leq j\leq r$, let $t\in \ttset_j^*$ be any terminal, and let $H(t)\in \hset^*$ be the cluster that $t$ represents. Recall that $H(t)$ may be discarded from $\hset^*$ with probability at most $1/3$ (if $H(t)$ tags some other cluster $H'$, for which the parameter $p$ is an odd number greater than $1$), and additionally, if $H(t)$ is merged with another cluster $H'$, then $t$ is added to $\tset'$ with probability at least $\half$. Using the standard Chernoff bound, the probability that more than a $2/3$-fraction of clusters of color $j$ are discarded from $\hset^*$ is at most $1/\poly(r)$, and similarly, the probability that fewer than $\frac{|\ttset^*_j|}{12}$ terminals from $\ttset^*_j$ are added to $\tset'$ is bounded by $1/\poly(r)$, since $|\ttset^*_j|\geq \Omega(\alpha|\tset_j|/r^2)=\Omega(\log r)$. Using the union bound over all $1\leq j\leq r$, with probability at least $\half$, set $\tset'$ contains at least $|\ttset_j^*|/12$ terminals from $\ttset_j$, for all $1\leq j\leq r$. By repeating this procedure a polynomial number of times, we can ensure that it succeeds with high probability.
Applying Claim~\ref{claim: 1-edge-wl} to the set $\gset$ of clusters and the set $\tset'$ of terminals, we conclude that $G$ is $1$-well-linked for $\tset'$.

\subsection*{Proof of Theorem~\ref{thm: grouping-advanced}}

The proof closely follows the proof of Theorem~\ref{thm: grouping-many-sets}, except that now we are only given two subsets $\tset_1,\tset_2$ of terminals, and we need to ensure that the terminals in the set $\tset'_1$ are chosen so that for every pair $(s,t)\in \mset$, either both $s,t$ belong to $\tset'_1$, or none of them.

\paragraph{Step 1: Initial Clustering and Representative Selection}
As before, we use Theorem~\ref{thm: grouping simple degree 4} to find an initial clustering $\hset$. We then select 
representatives for the clusters $H\in \hset$, using the following analogue of Lemma~\ref{lem: rep selection}.

\begin{lemma}
There is an efficient algorithm, that either routes a subset $\mset'\sse \mset$ of $\Omega(\alpha k_1)$ pairs on edge-disjoint paths in $G$, or finds two subsets $\ttset_1\sse \tset_1$ and $\ttset_2\sse \tset_2$,  with $|\ttset_1|=\Omega(\alpha k_1)$, $|\ttset_2|=\Omega(\alpha k_2)$, such that:

\begin{itemize}
\item $\ttset_1\cap \ttset_2=\emptyset$, 

\item for each cluster $H\in \hset$, $|\tset(H)\cap (\ttset_1\cup \ttset_2)|\leq 1$, and 

\item for each pair $(s,t)\in \mset$, either both $s,t$ belong to $\ttset_1$, or neither of them does.
\end{itemize}
\end{lemma}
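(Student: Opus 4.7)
The plan is to follow the structure of Lemma~\ref{lem: rep selection}, but to first peel off an ``easy case'' in which the matching $\mset$ yields an edge-disjoint routing, and then handle the remaining case by decoupling the selection of $\ttset_1$ from that of $\ttset_2$ via a random cluster coloring.

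First I would invoke Theorem~\ref{thm: grouping simple degree 4} to obtain an initial partition $\hset$ of $V(G)$ into connected clusters of $q$ to $4q$ terminals. Call a pair $(s,t)\in\mset$ \emph{short} if $H(s)=H(t)$ and \emph{long} otherwise, and let $U\sse\hset$ be the set of clusters hosting at least one short pair. If $|U|\geq c_0\alpha k_1$ for a sufficiently small absolute constant $c_0$, I pick one short pair from each $H\in U$ and route it along any path inside $G[V(H)]$; since distinct clusters are vertex-disjoint, these paths are pairwise edge-disjoint, and together they route $\Omega(\alpha k_1)$ demand pairs, completing the first alternative. Otherwise the number of short pairs is at most $|U|\cdot 2q=O(c_0 k_1)$, so choosing $c_0$ appropriately leaves at least $L\geq k_1/4$ long pairs available. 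I would then independently and uniformly color each cluster \emph{type~$1$} or \emph{type~$2$}, build $\ttset_1$ by greedily constructing a maximal matching among the long pairs whose two endpoints lie in two distinct type-$1$ clusters (adding both endpoints of each accepted pair and marking both of its clusters as used), and build $\ttset_2$ by picking, for each type-$2$ cluster $H$ with $\tset_2(H)\neq\emptyset$, one arbitrary terminal of $\tset_2(H)$.

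By construction, $\ttset_1$ and $\ttset_2$ come from disjoint collections of clusters, so they are disjoint, each cluster supplies at most one representative overall, and every pair of $\mset$ is added to $\ttset_1$ as a whole or not at all. For the size bounds, the expected number of eligible long pairs is at least $L/4\geq k_1/16$, and since each cluster has degree at most $4q=O(1/\alpha)$ in the long-pair graph, any maximal matching on the eligible pairs picks an $\Omega(\alpha)$ fraction of them, so $\expect{|\ttset_1|}=\Omega(\alpha k_1)$; a Chebyshev bound exploiting the bounded-dependence structure of the color flips (each flip affects the eligibility of at most $4q$ pairs), combined with the hypothesis $k_1\geq 100/\alpha$, yields this bound with constant probability. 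Analogously, $|\ttset_2|$ is a sum of independent Bernoullis with mean at least $k_2/(8q)=\Omega(\alpha k_2)$ and concentrates by a standard Chernoff argument, so repeating the random coloring polynomially many times boosts the joint success probability to high probability. The main obstacle I anticipate is precisely this competition for clusters between $\ttset_1$ and $\ttset_2$: a naive deterministic greedy for $\ttset_1$ could monopolize exactly the clusters that host $\tset_2$ terminals, especially when $k_1\gg k_2$ and those clusters form only a small fraction of $\hset$. The random coloring cleanly decouples the two selections at the price of only constant factors in both bounds, and it is what makes the simultaneous lower bounds on $|\ttset_1|$ and $|\ttset_2|$ provable in a uniform way.
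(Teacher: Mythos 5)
Your construction is a genuine departure from the paper's. The paper proceeds deterministically: it first greedily builds $\ttset_2'$ cluster-by-cluster, then greedily builds a sub-matching $\mset_1$ by consuming both endpoint clusters of each accepted pair, peels off the pairs routable inside a single cluster (your ``short'' pairs), and finally resolves the competition between the two representative sets by an explicit balanced bipartition of the \emph{mixed} clusters (those holding both a $\ttset_1$- and a $\ttset_2$-representative), arranged so that the two clusters of any surviving pair land on the same side; each side loses only a constant fraction. Your random two-coloring of clusters is an elegant substitute for that bipartition, and the deterministic parts of your argument (the short-pair routing, the disjointness and one-representative-per-cluster properties, the ``both or neither'' property for pairs, and the $E_{\mathrm{elig}}/(8q)$ lower bound for the greedy matching) are all correct.

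The gap is in the probabilistic analysis: you never establish that the two size bounds hold \emph{simultaneously} with positive probability on a single coloring, and the route you sketch cannot do so under the stated hypotheses. The events ``many long pairs have both clusters colored $1$'' and ``many $\tset_2$-hosting clusters are colored $2$'' are anti-correlated by design -- every coin flip that helps one hurts the other -- so you cannot appeal to positive association, and a union bound requires each failure probability to be well below $1/2$. Your Chebyshev bound for $E_{\mathrm{elig}}$ has variance $\Theta(qL)$ against a required deviation of $\Theta(L)$, so it gives a nontrivial failure probability only when $L\gg q$, i.e.\ $k_1\geq C/\alpha$ for a large constant $C$; the theorem only assumes $k_1\geq 100/\alpha$, under which the number of clusters touched by $\tset_1$ (or $\tset_2$) can be a single-digit constant and no concentration is available (the same issue afflicts the Chernoff bound for $|\ttset_2|$ when $k_2=100/\alpha$). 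Reverse-Markov arguments give each event with probability roughly $1/7$ and $1/3$ respectively, which do not sum to more than $1$. Since you cannot rule out that the two events are (nearly) mutually exclusive on some admissible instance, ``repeat polynomially many times'' does not rescue the argument: amplification needs a per-trial joint success probability bounded away from $0$, which is precisely what is missing. To repair this you would either need to strengthen the hypothesis to $k_1,k_2\geq C/\alpha$ for a large explicit $C$ (which would then have to be propagated to every application of Theorem~\ref{thm: grouping-advanced}), or replace the random coloring by a deterministic conflict-resolution step such as the paper's balanced partition of mixed clusters.
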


\begin{proof}
 We start by selecting a subset $\ttset_2'\sse \tset_2$ of representatives from the set $\tset_2$, exactly as in the proof of Lemma~\ref{lem: rep selection}. We construct a collection $\ttset_2'\sse \tset_2$ of at least $\frac{k_2}{4q}$ terminals, such that for each cluster $H\in \hset$, at most one terminal of $\tset(H)$ belongs to $\ttset_2'$.

The selection of representatives for $\tset_1$ is performed slightly differently.
We construct a new matching $\mset_1\sse \mset$, that consists of the selected representatives. We start with $\mset_1=\emptyset$. While $\mset\neq \emptyset$, let $(s,t)\in \mset$ be any pair of terminals in $\mset$. Add $(s,t)$ to $\mset_1$, and remove from $\tset_1$ all terminals contained in $\tset(H(s))\cup \tset(H(t))$. Remove from $\mset$ all pairs that are not contained in the current set $\tset_1$ of terminals. Let $\mset_1$ be the final matching that the algorithm has constructed. It is easy to verify that $|\mset_1|\geq \frac{k_1}{16q}$, since for every pair that we add to $\mset_1$, we remove at most  $8q$ pairs from $\mset$. Let $\ttset_1'\sse\tset_1$ be the subset of terminals that participate in pairs in $\mset_1$. Then for every cluster $H\in \hset$, either $|\tset(H)\cap \ttset_1|\in\set{0,1}$, or $|\tset(H)\cap \ttset_1|=2$. In the latter case, the two terminals $s,t\in \tset(H)\cap \ttset_1$ belong to $\mset$ as a pair.

Partition $\mset_1$ into two subsets $\mset_1'$ and $\mset_1''$, where $\mset_1'$ contains all pairs $(s,t)$ where both $s,t\in \tset(H)$ for some cluster $\tset(H)$, and $\mset_1''$ contains all remaining pairs. If $|\mset_1'|\geq \half |\mset_1|$, then we have found a collection $\mset_1'\sse \mset$ of $\Omega(\alpha k_1)$ pairs, where each pair $(s,t)\in \mset_1'$ is associated with a distinct cluster $H\in \hset$, where $s,t\in \tset(H)$. Since each cluster $H\in \hset$ is connected, and the clusters are mutually edge-disjoint, we can route the pairs in $\mset_1'$ on edge-disjoint paths.
Therefore, we assume from now on, that $|\mset_1''|\geq |\mset_1|/2\geq \frac{k_1}{32q}$. To simplify the notation, we denote $\mset_1''$ by $\mset_1$ from now on. We denote by $\ttset_1'$ the set of terminals participating in the pairs in $\mset_1$, so $|\ttset_1'|=\Omega(\alpha k_1)$.

Notice that now for every cluster $H\in \hset$, at most one terminal $t\in \tset(H)$ belongs to $\ttset_1'$, and at most one terminal $t'\in \tset(H)$ belongs to $\ttset_2'$, while it is possible that $t=t'$. If $\tset(H)$ contains both a terminal of $\ttset_1$ and a terminal of $\ttset_2$, then we say that $H$ is a mixed cluster. Let $\hset'$ be the set of all mixed clusters.

We partition $\hset'$ into two subsets $\hset'_1,\hset'_2$, such that $|\hset'_1|,|\hset'_2|\geq |\hset'|/2-1$. 
Our partition ensures that if, for some pair $(s,t)\in \mset_1$, there are two clusters $H,H'\in \hset'$ with $s\in \tset(H)$ and $t\in \tset(H')$, then either both $H,H'\in \hset_1'$, or both $H,H'\in \hset_2'$.
The partition is performed by using the following greedy procedure. Start with $\hset_1',\hset_2'=\emptyset$. While $\hset'\neq \emptyset$, let $H\in \hset'$ be any cluster, and let $s$ be the unique terminal in $\tset(H)\cap \ttset_1$. Let $t$ be the terminal such that $(s,t)\in \mset_1$, and let $H'\in \hset$ be the cluster for which $t\in \tset(H')$. If $H'\not\in \hset'$, then we add $H$ to either $\hset_1'$ or $\hset_2'$ - whichever currently contains fewer clusters. Otherwise, we add both $H$ and $H'$ to one of the two sets $\hset_1',\hset_2'$, that contains fewer clusters.
The ties are broken arbitrarily. Let $(\hset'_1,\hset'_2)$ be the resulting partition of $\hset'$.

For each cluster $H\in \hset'_1$, we remove the unique terminal $t\in \tset(H)\cap \ttset_2'$ from $\ttset_2'$. For each cluster $H\in \hset'_2$, let $t$ be the unique terminal in $\tset(H)\cap\ttset_1'$, and let $s\in \ttset_1'$ such that $(s,t)\in \mset_1$. We remove $(s,t)$ from $\mset_1$, and we remove both $s$ and $t$ from $\ttset_1'$.
Notice that since $|\hset_1'|\geq |\hset_2'|-2$, we only remove a constant fraction of the terminals from $\ttset_1'$. Therefore, at the end of this procedure, we obtain a subset $\tmset\sse \mset$ of pairs of terminals, with $|\tmset|=\Omega(\alpha k_1)$, a set $\ttset_1$ of terminals participating in the pairs in $\tmset$, and another set $\ttset_2\sse \tset_2$ of terminals, with $|\ttset_2|\geq \Omega(\alpha k_2)$, such that $\ttset_1\cap \ttset_2=\emptyset$, and for every cluster $H\in \hset$, $\tset(H)$ contains at most one terminal $t\in \ttset_1\cup \ttset_2$. 
\end{proof}

From now on, we denote by $\tmset\sse \mset$ the collection of terminal pairs contained in $\ttset_1$. For each cluster $H\in \hset$, we say that the unique terminal $t\in \tset(H)\cap (\ttset_1\cup \ttset_2)$ is the \emph{representative} of $H$. Notice that some clusters $H\in \hset$ may have no representative.

\paragraph{Step 2: Tagging}
The tagging is performed exactly as before. Consider some cluster $H\in \hset$, that has a representative $t\in \tset(H)\cap (\ttset_1\cup\ttset_2)$. If $t$ is not a pseudo-center of $H$, then we can find an edge $e_t=(x_t,y_t)$ with $x_t\in V(H),y_t\not\in V(H)$, guaranteed by Theorem~\ref{thm: tagging}. Let $H'\in \hset$ be a cluster containing the vertex $y_t$ (again, such a cluster must exist since every vertex belongs to some cluster by Theorem~\ref{thm: tagging}). We say that cluster $H$ and terminal $t$ \emph{tag} the cluster $H'$.

We say that a pair $(s,t)\in \tmset$ is good if $H(s)$ tags $H(t)$ or vice versa. Let $\tmset'\sse \tmset$ be the set of all good pairs, and let $\tmset''=\tmset\setminus\tmset'$. If $|\tmset'|\geq |\tmset|/2$, then we can route all pairs in $\tmset'$ via edge-disjoint paths, where each pair $(s,t)\in \tmset'$ is routed inside the connected component $H(s)\cup H(t)\cup e_s$, if $H(s)$ tagged $H(t)$, or inside the component $H(s)\cup H(t)\cup e_t$ if $t$ tagged $s$. It is easy to see that all resulting paths are edge-disjoint. Therefore, we assume from now on that $|\tmset''|\geq |\tmset|/2$. To simplify notation, we will refer to $\tmset''$ as $\tmset$, and we discard from $\ttset_1$ terminals that do not belong to the new set $\tmset$.

We say that the terminals of $\ttset_1$ are red and the terminals of $\ttset_2$ are blue. If $H\in \hset$ has a representative terminal in $\ttset_1$, then we say that $H$ is a red cluster; if it has a blue representative, then it is a blue cluster. Notice that it is possible that $H$ has no representative, in which case it has no color.

As before, we would like to find a subset $\hset^*\sse \hset$ of clusters, such that whenever $H$ tags $H'$, only one of these two clusters belongs to $\hset^*$. However, we have an additional restriction: for every pair $(s,t)\in \tmset$,  we would like to ensure that either both $H(s)$ and $H(t)$ belong to $\hset^*$, or none of them.

In order to achieve this, we construct a directed graph $D$. The set of vertices of $D$ consists of two subsets: the set of vertices representing the terminals in $\ttset_2$, $V_2=\set{v_t\mid t\in \ttset_2}$, and the set of vertices representing the pairs of terminals in $\tmset$, $V_1=\set{v_{s,t}\mid (s,t)\in \tmset}$.

We say that a vertex $v_t\in V_1$ represents the cluster $H(t)$, and a vertex $v_{s,t}\in V_2$ represents the clusters $H(s)$ and $H(t)$. We let $V(D)=V_1\cup V_2$, and we add an edge $(u,u')$ to $D$ iff at least one of the clusters represented by $u$ tags a cluster represented by $u'$. We denote $|V_1|=\tk_1$ and $|V_2|=\tk_2$. Notice that $\tk_1=\Omega(\alpha k_1)$ and $\tk_2=\Omega(\alpha k_2)$. We use the following variation of Claim~\ref{claim: independent set}.

\begin{claim}
There is an independent set $\iset$ in $D$, containing at least $\lfloor \tk_1/24\rfloor$ vertices of $V_1$ and at least $\lfloor \tk_2/24\rfloor$ vertices of $V_2$.
\end{claim}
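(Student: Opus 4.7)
The plan is to mimic the proof of Claim~\ref{claim: independent set}, adapted to our two-sided setting where $V_1$ and $V_2$ play the role of the ``colors.'' The key structural input is that every cluster has at most one outgoing tag, so in $D$ each $V_2$-vertex (representing a single cluster) has out-degree at most $1$, while each $V_1$-vertex (representing the two clusters of a matched pair) has out-degree at most $2$. Consequently $|E(D)| \leq 2\tk_1 + \tk_2$, and the same bound holds for the number of $V_1$-$V_2$ edges.

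I would then split into the symmetric cases $\tk_2 \geq \tk_1$ and $\tk_1 > \tk_2$. In the first case, setting $\beta = \tk_2/\tk_1$, averaging the at most $3\tk_2$ cross edges over the $\tk_1$ vertices of $V_1$ shows that at least half of them are \emph{good}, in the sense of having at most $6\beta$ neighbors in $V_2$. On the induced subgraph of good $V_1$-vertices the $V_1$ out-degree bound of $2$ keeps the average degree $O(1)$, and a standard greedy procedure (iteratively take a minimum-degree vertex, add it to the independent set, and delete it along with its undirected neighbors) produces an independent set $\iset_1$ of size at least $\lfloor \tk_1/24 \rfloor$, which I truncate to exactly that size. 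By the goodness bound, $\iset_1$ has at most $6\beta \cdot \tk_1/24 = \tk_2/4$ neighbors in $V_2$, so at least $3\tk_2/4$ vertices of $V_2$ remain. Running the same greedy procedure on the induced subgraph of those remaining vertices, where the out-degree bound of $1$ again keeps the average degree $O(1)$, produces $\iset_2$ of size at least $\lfloor \tk_2/24 \rfloor$. Their union is the desired independent set, since by construction no vertex of $\iset_2$ is adjacent to any vertex of $\iset_1$.

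The case $\tk_1 > \tk_2$ is handled symmetrically: process $V_2$ first (here the tighter out-degree bound makes that side especially easy), then pass to $V_1 \setminus N(\iset_2)$ and apply greedy there. I do not expect any conceptual obstacle; the only bookkeeping care needed is to propagate the asymmetric out-degree bounds ($2$ on $V_1$ versus $1$ on $V_2$) through the two-level averaging argument, and the loose constant $24$ in the statement easily absorbs this asymmetry.
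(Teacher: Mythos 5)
Your proposal is correct and follows essentially the same route as the paper's proof: bound the out-degrees, mark as ``good'' the vertices of the smaller side with few neighbors on the larger side via an averaging argument, greedily extract $\iset_1$ from the good vertices, delete its neighborhood on the other side (small because the chosen vertices are good), and greedily extract $\iset_2$ from what remains. The paper merely streamlines the case analysis by treating all of $D$ as having out-degree at most $2$ and invoking symmetry to assume $\tk_1\leq\tk_2$, whereas you track the asymmetric bounds explicitly; the constants work out either way.
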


\begin{proof}
For simplicity, we treat $D$ as a general directed graph, where the out-degree of every vertex is at most $2$. Therefore, we can assume w.l.o.g. that $\tk_1\leq \tk_2$, and we denote $\beta=\tk_2/\tk_1$
We say that a vertex $v\in V_1$ is good iff it is adjacent to at most $8\beta$
vertices of $V_2$. Since the total number of vertices in the graph is bounded by $2\beta k_1$, the number of edges is bounded by $4\beta k_1$, and so at least half the vertices of $V_1$ are good.

Let $D'$ be the sub-graph of $D$ induced by the good vertices in $V_1$. Since the out-degree of every vertex in $D'$ is at most $2$, and $|V(D)|\geq \tk_1/2$, we can find an independent set $\iset_1$ in $D'$ of size exactly $\lfloor \tk_1/24\rfloor$, using the standard greedy algorithm as before.

Let $V'\sse V_2$ be the subset of vertices adjacent to the vertices of $\iset_1$. Since all vertices in $\iset_1$ are good, $|V'|\leq 8\beta \lfloor\frac{\tk_1}{24}\rfloor\leq \frac{\tk_2}{3}$. Let $D''$ be the sub-graph of $D$ induced by the vertices of $V_2\setminus V'$. Since the out-degree of every vertex in $D''$ is at most $2$, and $|V(D'')|\geq |V_2|/2$, we can find an independent set $\iset_2$ of size $\lfloor \tk_2/24\rfloor$ in $D''$. We output $\iset_1\cup \iset_2$ as our final solution.
\end{proof}

Let $\mset^*\sse \tmset$ denote the set of pairs $(s,t)$ with $v_{s,t}\in \iset$, let $\tset_1^*\sse \ttset_1$
be the set of all terminals participating in pairs in $\mset^*$, and let $\tset_2^*$ be the set of terminals $t$ with $v_t\in \iset$. Finally, let $\hset^*=\set{H(t)\mid t\in \tset^*_1\cup \tset^*_2}$. Then for every cluster $H\in\hset^*$, $\tset(H)$ has exactly one representative in $\tset_1^*\cup\tset_2^*$, and for any pair $H,H'$ of clusters, where $H$ tags $H'$, only one of the two clusters may belong to $\hset^*$. Notice that we are also guaranteed that $|\tset_1^*|=\Omega(\tk_1)=\Omega(\alpha k_1)$, and $|\tset_2^*|=\Omega(\tk_2)=\Omega(\alpha k_2)$.

\paragraph{Step 3: Merging the Clusters}
In this step, we create the final set $\gset$ of clusters, by merging some pairs of clusters in $\hset^*$. For each cluster $H\in \gset$, we first select up to two terminals $t,t'\in \tset(H)\cap (\tset_1^*\cup \tset_2^*)$ that we call \emph{potential pseudo-centers} for $H$. In the end, at most one such terminal will be selected for every cluster.

If $H\in \hset^*$ is a cluster that does not tag any other cluster, then we add $H$ to $\gset$. Let $t\in \tset(H)$ be the unique representative of $H$ in $\tset_1^*\cup \tset_2^*$. Then we say that $t$ is the potential pseudo-center of $H$. Notice that from Theorem~\ref{thm: tagging}, $t$ is indeed a pseudo-center of $H$.

Consider now some cluster $H\in \hset$, and let $\gset(H)\sse \hset^*$ be the set of clusters $H'$ that tag $H$. Assume that $\gset(H)=\set{H_1,\ldots,H_p}$. For each $1\leq i\leq p$, let $t_i\in \tset(H_i)$ be the representative of $H_i$ in $\tset_1^*\cup \tset_2^*$.  If $p=1$, then we create one new cluster $H^*=H\cup H_1\cup\set{e_{t_1}}$ and add it to $\gset$. We set $\tset(H^*)=\tset(H_1)\cup \tset(H)$, and we let $t_1$ be its potential pseudo-center. Notice that $t_1$ is indeed a pseudo-center of the new cluster, from Theorem~\ref{thm: tagging}.

Assume now that $p>1$, and it is even. As before, we use Lemma~\ref{lemma: pairing} to find a collection $\pset$ of edge-disjoint paths contained in $H$, where every path connects a distinct pair $y_{t_i},y_{t_j}$ of vertices. For each such path $P\in \pset$, if $P$ connects $y_{t_i}$ to $y_{t_j}$, then we construct a new cluster $H^*=H_i\cup P\cup H_j$, and add it to $\gset$. We set $\tset(H^*)=\tset(H_i)\cup \tset(H_j)$. Notice that from Theorem~\ref{thm: tagging}, both $t_i$ and $t_j$ are pseudo-centers of the new cluster. We say that $t_i$ and $t_j$ are \emph{potential pseudo-centers} for the new cluster, and we will later select at most one of them to represent this cluster.

Finally, assume that $p$ is odd. If the number of blue clusters in $\gset(H)$ is at least two, then we discard one arbitrary blue cluster from $\gset(H)$, and continue as in the case where $p$ is even. Otherwise, $\gset(H)$ must contain at least two red clusters. In this case we say that $H$ is an \emph{odd cluster}.

For each such odd cluster $H$, we would like to select one red cluster $H'\in \gset(H)$ to discard. However, once such a cluster is discarded, if $s\in \tset(H')$ is the representative of $H'$ in $\tset_1^*\cup \tset_2^*$, and $t$ is the terminal with $(s,t)\in \mset^*$, we will need to discard the terminal $t$ as well. That is, $t$ will not be able to serve as a pseudo-center of its cluster $H(t)$. In particular, if $H(t)\in \gset(H')$ of some other odd cluster $H'$, we need to ensure that $H(t)$ is the cluster that we discard from $\gset(H')$ in this case.

We proceed in two stages. In the first stage, while there is a pair $(s,t)\in \mset^*$, such that $H(s)\in \gset(H)$, $H(t)\in\gset(H')$, with $H\neq H'$, and both $H$ and $H'$ are odd clusters, we discard $H(s)$ from $\gset(H)$, $H(t)$ from $\gset(H')$, and $(s,t)$ from $\mset^*$. The clusters $H$ and $H'$ now stop being odd, as $|\gset(H)|$, $|\gset(H')|$ are now even. We process these two clusters as in the case where $p$ is even. When no such pair $(s,t)\in \mset^*$ remains, the first stage finishes and the second stage begins. We process the remaining odd clusters one-by-one. For each such cluster $H$, there must be at least one red cluster $H'\in \gset(H)$, such that, if $t\in \tset(H')$ is the representative of $H'$ in $\tset^*_1$, and $s$ is the terminal with $(s,t)\in \mset^*$, then $H(s)$ does not belong to the set $\gset(H'')$ of any odd cluster $H''$. We then discard $(s,t)$ from $\mset^*$, and discard $H'$ from $\gset(H)$. If $s$ is a potential pseudo-center for any cluster in $\gset$, then it stops being a potential pseudo-center for that cluster.

Let $\mset^{**}\sse \mset^*$ be the set of all surviving pairs. We need the following simple claim.

\begin{claim}
$|\mset^{**}|\geq |\mset^*|/4$.
\end{claim}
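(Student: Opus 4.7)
The plan is to bound the number of pairs discarded from $\mset^*$ by the number $O$ of odd clusters, and then to bound $O$ in terms of $|\mset^*|$ via a double counting argument on red clusters. Let $a$ and $b$ denote the number of Stage 1 and Stage 2 discards, respectively. Since each Stage 1 discard uses one pair to resolve two odd clusters, while each Stage 2 discard uses one pair to resolve a single odd cluster, we get $2a+b=O$ and total discards $a+b = O-a \leq O$. This yields the first inequality $|\mset^{**}| \geq |\mset^*| - O$.

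To bound $O$, for each odd cluster $H$ let $R(H) \subseteq \gset(H)$ denote the subset of red clusters. By the definition of ``odd'', $|\gset(H)| = p_H \geq 3$ and $\gset(H)$ contains at most one blue cluster, so $|R(H)| \geq p_H - 1 \geq 2$. The sets $R(H)$ for distinct odd $H$ are pairwise disjoint (inherited from the disjointness of the $\gset(H)$'s), and every red cluster in $\bigcup_H R(H)$ equals $H(s)$ for exactly one terminal $s$ of a unique pair in $\mset^*$. Summing gives $2O \leq \sum_H |R(H)| \leq 2|\mset^*|$, hence $O \leq |\mset^*|$. On its own this bound only yields $|\mset^{**}| \geq 0$, so a refined counting will be needed.

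The refinement is the following. Each discard ``attributed'' to an odd cluster $H$ (a Stage 1 pair resolving $H$ and $H'$ is attributed to both; a Stage 2 pair to its single $H$) uses exactly one red cluster of $R(H)$; thus at least $|R(H)| - 1 \geq 1$ red clusters of $R(H)$ lie in \emph{no} discarded pair -- call them unused. A short case analysis, using the disjointness of the $R(H)$'s together with the defining conditions of Stages 1 and 2, will show that the pair containing any unused red cluster is never attributed to any odd cluster and hence survives. Since a single surviving pair contributes at most $2$ red clusters to $\bigcup_H R(H)$ (its two halves), we obtain
\[
|\mset^{**}| \;\geq\; \tfrac{1}{2}\sum_H \bigl(|R(H)| - 1\bigr) \;\geq\; \tfrac{1}{2}(2O - O) \;=\; O/2.
\]

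Combining the two bounds $|\mset^{**}| \geq |\mset^*| - O$ and $|\mset^{**}| \geq O/2$ and taking their pointwise maximum, the minimum as a function of $O$ is attained at $O = 2|\mset^*|/3$, giving $|\mset^{**}| \geq |\mset^*|/3 \geq |\mset^*|/4$. I expect the main technical obstacle to be the case analysis verifying that every unused red cluster really produces a surviving pair: one must carefully treat pairs whose \emph{both} halves lie in $\bigcup_H R(H)$, and in particular pairs whose two halves lie in two distinct $R(H), R(H')$, which can survive Stage 1 only if both $H$ and $H'$ are resolved via other Stage 1 pairs first. Handling these cases, together with verifying that the paper's assertion ``there must be at least one free red cluster'' in Stage 2 always holds, is the one delicate piece of the argument.
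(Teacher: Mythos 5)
Your first bound $|\mset^{**}|\geq|\mset^*|-O$ is sound, but the second bound $|\mset^{**}|\geq O/2$ rests on the assertion that every ``unused'' red cluster (one of the $|R(H)|-1$ red clusters of $R(H)$ not removed from $\gset(H)$) lies in a surviving pair, and that assertion is false: the deferred ``short case analysis'' cannot be completed. The failure comes from Stage~2 discards. When an odd cluster $H$ is processed in Stage~2, the selection condition only requires that the partner's cluster not lie in the tag-set of any \emph{currently} odd cluster, and that partner's cluster may perfectly well be the unused red cluster of an odd cluster resolved earlier. Concretely, take three odd clusters $H_1,H_2,H_3$ with $R(H_i)=\set{A_i,B_i}$, and pairs $P_1$ joining (the representatives of) $A_1$ and $A_2$, $P_2$ joining $B_1$ and $A_3$, and $P_3,P_4$ joining $B_2$, resp.\ $B_3$, to clusters that tag nothing odd. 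Stage~1 may remove $P_1$, resolving $H_1$ and $H_2$, and must then halt, since $P_2$ now has only one half in a currently odd cluster. In Stage~2, the cluster $A_3\in\gset(H_3)$ satisfies the selection condition (its partner's cluster $B_1$ lies in $\gset(H_1)$, which is no longer odd), so the algorithm may discard $P_2$ and remove $A_3$ from $\gset(H_3)$. Now $B_1$ is unused for $H_1$, yet its pair $P_2$ is discarded; both elements of $R(H_1)$ lie in discarded pairs, so the inequality $\sum_H(|R(H)|-1)\leq 2|\mset^{**}|$ loses its justification. (In this instance the numerical conclusion happens to survive, but the argument does not, and chaining this pattern through several Stage~2 steps erodes the count further.) In short, a red cluster unused \emph{for its own} odd cluster can still sit in a pair whose other half is the cluster selected for some other odd cluster in Stage~2.

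This is also where your approach genuinely diverges from the paper's. The paper does not count odd clusters globally; it charges each discarded pair, within its own stage, to a distinct surviving pair: a Stage~1 discard resolving $H^*$ is charged to the pair of a red cluster that remains in $\gset(H^*)$, and a Stage~2 discard for $H$ is charged to a pair $(s',t')$ with $H(s')\in\gset(H)$ and $H(t')$ in no currently odd cluster, which survives the rest of Stage~2 because tag-sets are disjoint and oddness only decreases. Injectivity of each charging gives ``at most half lost per stage,'' hence the factor $1/4$. To repair your route you would at minimum need to replace your false lemma by the correct statement that each discarded pair contains at most one unused red cluster (Stage~1 discards contain none), and it is not clear the resulting optimization still beats $1/4$; the stage-wise charging is the cleaner path.
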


\begin{proof}
Consider the first stage of the algorithm, where we have removed a collection of pairs $(s,t)$, where $H(s)\in\gset(H)$, $H(t)\in\gset(H')$, for odd clusters $H$ and $H'$. For each such odd cluster $H^*$, we have removed at most one such pair. However, since $\gset(H^*)$ contained at least two red clusters, there is another pair $(s',t')\in \mset^*$, such that $H(t')\in \gset(H^*)$, and $(s',t')$ was not removed from $\mset^*$ during the first step. It is easy to verify that at most half the pairs in $\mset^*$ were removed during the first stage. Let $\mset'$ be the collection of pairs that survive after the first stage.

In the second stage, for each remaining odd cluster $H$, we select one red cluster $H'\in \gset(H)$, and delete the pair $(s,t)$ with $H(s)=H'$ from $\mset'$. However, since $H$ still remains an odd cluster, we are guaranteed that there is another pair $(s',t')\in \mset'$, such that $H(s')\in \gset(H)$, and $H(t')$ does not belong to any current odd cluster. (Otherwise we could have continued the first phase for another step). We can therefore charge the pair $(s,t)$ to the pair $(s',t')$, that will not be removed during the second step. So overall, at most half the pairs are removed from $\mset'$ during the second stage.
\end{proof}

We conclude that $|\mset^{**}|=\Omega(\alpha k_1)$. Let $\tset_1^{**}$ be the set of all red terminals participating in the pairs in $\mset^{**}$, and let $\tset_2^{**}$ be the set of all blue terminals that serve as potential pseudo-centers for clusters in $\gset$. It is easy to see that $|\tset_2^{**}|\geq |\tset_2^*|/2=\Omega(\alpha k_2)$, since every blue cluster discarded from $\hset^*$ can be charged to another blue cluster that remains in $\hset^*$. 

To summarize, we have defined a collection $\hset^*$ of edge-disjoint clusters, a set $\mset^{**}\sse \mset$ of $\Omega(\alpha k_1)$ pairs of terminals, a subset $\tset_1^{**}\sse \tset_1$ of terminals that participate in the pairs in $\mset^{**}$, and a subset $\tset_2^{**}\sse \tset_2$ of $\Omega(\alpha k_2)$ terminals, with $\tset_1^{**}\cap \tset_2^{**}=\emptyset$. Moreover, every terminal $t\in \tset_1^{**}\cup \tset_2^{**}$ is a potential pseudo-center for one of the clusters in $\gset$, and every
 cluster $H\in \gset$ now has either $0$, $1$, or $2$ potential pseudo-centers. The sets $\set{\tset(H)}_{H\in \gset}$ of terminals are mutually disjoint.

Notice that it is possible that for some clusters $H\in \gset$, we have two potential pseudo-centers $s,t$ for $H$, which form a pair in $\mset^{**}$. Let $\mset^{**}_1\sse \mset^{**}$ be the subset of all such pairs of terminals. In other words, $(s,t)\in \mset^{**}_1$ iff both $s$ and $t$ are potential pseudo-centers of the same cluster  $H\in \gset$. If $|\mset^{**}_1|\geq |\mset^{**}|/2$, then every pair in $\mset^{**}_1$ can be routed inside its own cluster, via edge-disjoint paths. In this case we terminate the algorithm and return this collection of paths. From now on we assume that this is not the case. To simplify notation, we denote by $\mset^{**}$ the set of pairs in $\mset^{**}\setminus \mset^{**}_1$, and by $\tset^{**}_1$ the set of terminals participating in pairs in $\mset^{**}$.

\paragraph{Step 4: Selecting the Pseudo-Centers}
In this step we  select a subset $\mset'\sse \mset^{**}$ of terminal pairs, a set $\tset'_1\sse \tset_1^{**}$ of terminals participating in pairs in $\mset'$, and a subset $\tset'_2\sse \tset_2^{**}$, such that for each cluster $H\in \gset$, at most one potential pseudo-center belongs to $\tset_1'\cup \tset_2'$. 

Our first step is to ensure that every cluster in $\gset$ contains at most one red potential pseudo-center. In order to do so, we start with $\mset'=\emptyset$. While $\mset^{**}\neq \emptyset$, we select an arbitrary pair $(s,t)\in \mset^{**}$, remove it from $\mset^{**}$ and add it to $\mset'$. Assume that $s$ is a potential pseudo-center of $H$, and $t$ is a potential pseudo-center of $H'$. We also remove from $\mset^{**}$ any other pair $(s',t')$, where either $s'$ or $t'$ are potential pseudo-centers of $H$ or $H'$. Notice that for each pair added to $\mset'$, at most three pairs are deleted from $\mset^{**}$. Therefore, $|\mset'|\geq |\mset^{**}|/3=\Omega(\alpha k_1)$. Let $\tset_1'$ be the set of all terminals participating in pairs in $\mset^{**}$.

Next, we ensure that every cluster in $\gset$ contains at most one blue potential pseudo-center. In order to do so, we start with $\tset_2'=\emptyset$. For every cluster $H\in \gset$ that contains two blue potential pseudo-center, we select one of these two blue terminals arbitrarily and add it to $\tset_2'$. Notice that $|\tset_2'|\geq |\tset_2^{**}|/2=\Omega(\alpha k_2)$.

Finally, we need to take care of clusters $H\in \gset$ that contain one red and one blue potential pseudo-centers. We call such clusters \emph{mixed clusters}, and we take care of them like in Step 1. Let $\gset'\sse \gset$ be the set of all mixed clusters.
We partition $\gset'$ into two subsets $\gset'_1,\gset'_2$, such that $|\gset'_1|,|\gset'_2|\geq |\gset'|/2-1$. 
As before, we ensure that if, for some pair $(s,t)\in \mset'$, there are two clusters $H,H'\in \gset'$, where $s$ is a potential pseudo-center of $H$, and $t$ is a potential pseudo-center of $H'$, then either both $H,H'\in \gset_1'$, or both $H,H'\in \gset_2'$. The partition is performed exactly as in Step 1.

For each cluster $H\in \gset'_1$, we remove its blue potential pseudo-center from $\tset_2'$. For each cluster $H\in \gset'_2$, let $t$ be its red potential pseudo-center, and let $s\in \tset_1'$ such that $(s,t)\in \mset'$. We remove $(s,t)$ from $\mset'$, and we remove both $s$ and $t$ from $\tset_1'$.
Since $|\gset_1'|\geq |\gset_2'|-2$, we only remove a constant fraction of the terminals from $\tset_1'$. Therefore, at most a constant fraction of the pairs is removed from $\mset'$, and at most a constant fraction of  terminals is removed from $\tset_2'$.

Let $\mset',\tset_1',\tset_2'$ be the resulting subsets of terminal pairs and terminals. Then $|\tset_1'|=\Omega(\alpha k_1)$, $|\tset_2'|=\Omega(\alpha k_2)$. For every cluster $H\in \gset$, at most one potential pseudo-center for $H$ belongs to $\tset_1'\cup \tset_2'$. Using Claim~\ref{claim: 1-edge-wl}, graph $G$ is $1$-well-linked for the set $\tset'=\tset_1'\cup \tset_2'$ of terminals.

\section{Proof of Theorem~\ref{thm: main}}\label{sec: complete proof of main thm} 
In this section we complete the proof of Theorem~\ref{thm: main}, using Theorem~\ref{thm: main: find good crossbar or find a routing}.

As in the previous work on the \EDPwC problem~\cite{RaoZhou,Andrews,EDP-old}, we follow the outline of~\cite{CKS}, by first partitioning the graph $G$ into a collection of sub-instances, where each sub-instance is well-linked for the corresponding set of the terminals. We then solve each such sub-instance separately, by embedding an expander into it. 
The specific type of embedding that we use is similar to the one proposed in~\cite{EDP-old}, and is summarized in the following definition.

\begin{definition}
Let $\mset'\sse \mset$ be a subset of the demand pairs, and let $\tset'=\tset(\mset')$ be the set of terminals participating in the pairs in $\mset'$. Let $H$ be any expander whose vertex set is $\tset'$. An \emph{embedding} of $H$ into $G$ maps every vertex $t\in V(H)$ into a connected component $C_t\sse G$ with $t\in C_t$, and every edge $e=(t,t')\in E(H)$ into a path $P_e$ in graph $G$, connecting some vertex $v\in C_t$ to some vertex $v'\in C_{t'}$. Given an edge $e'\in E(G)$, the \emph{load} on edge $e'$ is the total number of components $\set{C_t}_{t\in \tset'}$ and paths $\set{P_e}_{e\in E(H)}$ to which edge $e'$ belongs. The \emph{congestion} of the embedding is the maximum load on any edge $e'\in E(G)$.
\end{definition}

Our algorithm uses the cut-matching game of \cite{KRV}, summarized in Theorem~\ref{thm: CMG}, to embed an expander into the graph $G$. To simplify notation, we say that an $N$-vertex (multi)-graph $X$ is a \emph{good expander}, iff $N$ is even, the expansion of $X$ is at least $\alphaCMG(N)$, and 
the degree of every vertex of $X$ is exactly $\gcmg(N)$. Once we embed a good expander into the graph $G$, we will need to route demand pairs across $X$ via vertex-disjoint paths.

There are many algorithms for routing on expanders, e.g. \cite{LR,BFU,BFSU,KleinbergR,Frieze}, that give different types of guarantees. We use the following theorem, due to Rao and Zhou~\cite{RaoZhou} (see also a proof in~\cite{EDP-old}).

\begin{theorem}[Theorem 7.1 in~\cite{RaoZhou}]\label{thm: vertex-disjoint routing on expanders}
Let $G=(V,E)$ be any $n$-vertex $d$-regular $\alpha$-expander.
Assume further that $n$ is even, and that the vertices of $G$ are partitioned into $n/2$ disjoint demand pairs $\mset=\set{(s_1,t_1),\ldots,(s_{n/2},t_{n/2})}$. Then there is an efficient algorithm that routes $\Omega\left (\frac {\alpha n} {\log n\cdot d^2}\right )$ of the demand pairs on vertex-disjoint paths in $G$.
\end{theorem}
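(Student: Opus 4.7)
The plan is to produce vertex-disjoint routings in two stages: first obtain a fractional concurrent multicommodity flow from the expansion of $G$, then round it to integral vertex-disjoint paths via randomized selection combined with the Lov\'asz Local Lemma.

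For the fractional step, since $G$ is an $\alpha$-expander, every cut $(S,\bar S)$ with $|S|\le n/2$ satisfies $|E(S,\bar S)|\ge\alpha|S|$, so the uniform edge sparsest cut of $G$ has value at least $\alpha$. The flow-cut gap theorem of Leighton--Rao then produces a concurrent multicommodity flow $\mathcal F$ simultaneously routing $\lambda=\Omega(\alpha/\log n)$ flow units between every pair $(s_i,t_i)$ with edge-congestion at most $1$. Decomposing $\mathcal F$ into a weighted collection of paths, the total weighted edge-length is at most $|E(G)|=nd/2$, so the weighted-average path length is at most $d/\lambda=O(d\log n/\alpha)$; I would truncate to paths of length at most $L=O(d\log n/\alpha)$, losing only a constant fraction of the flow.

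For the rounding step, sample each pair $i$ independently with probability $p=\Theta(\alpha/(d^2\log n))$; if sampled, pick $P_i\in\mathcal F$ with probability proportional to its flow value. For any vertex $v$, $\Pr[v\in P_i\mid i\text{ sampled}]\le f_i(v)/\lambda$, where $f_i(v)$ denotes pair $i$'s flow through $v$. Edge-congestion $1$ together with $d$-regularity gives $\sum_i f_i(v)\le d/2$, so the expected number of sampled paths through $v$ is at most $pd/(2\lambda)=O(1)$, and for each sampled pair $i$ the expected number of other sampled pairs whose chosen path hits $P_i$ is at most $L\cdot pd/\lambda$. Applying the Lov\'asz Local Lemma to the bad events $B_{ij}=\{i,j\text{ both sampled and }P_i\cap P_j\neq\emptyset\}$, whose dependency degrees are bounded by $O(Ld/\lambda)$, yields a positive-probability outcome in which no bad event occurs, retaining $\Omega(pn)=\Omega(\alpha n/(d^2\log n))$ sampled pairs routed on pairwise vertex-disjoint paths.

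The main obstacle is carrying out the probabilistic analysis tightly enough to match the $\alpha n/(d^2\log n)$ bound rather than the weaker $\alpha^2 n/(d^2\log^2 n)$ bound one gets from a naive union bound on collisions. The saving comes from exploiting the second-moment structure of the flow decomposition, essentially the inequality $\sum_v\bigl(\sum_i f_i(v)\bigr)^2\le nd^2/4$, inside the LLL calculation. To make the argument constructive I would derandomize via Moser--Tardos, or equivalently replace LLL by a two-round alteration argument in which one first deletes sampled pairs whose chosen path collides with more than a constant number of others and then applies Chernoff concentration to control the residual load.
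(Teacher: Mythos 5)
The paper does not actually prove this statement---it is imported verbatim from Rao and Zhou, with a pointer to~\cite{EDP-old} for a proof---so there is no in-paper argument to compare yours against, and I can only assess your proposal on its own terms. As written it does not establish the stated bound, and the gap is exactly where you flag it. Your first-moment computation is fine up to the collision count: with $p=\Theta(\alpha/(d^2\log n))$, $\lambda=\Theta(\alpha/\log n)$ and $L=\Theta(d/\lambda)$, the expected number of other sampled paths meeting a fixed sampled path $P_i$ is $L\cdot pd/(2\lambda)=\Theta(\log n/\alpha)$, not $O(1)$. An alteration step therefore retains only a $\Theta(\alpha/\log n)$-fraction of the sampled pairs, landing precisely on the weaker $\Omega(\alpha^2 n/(d^2\log^2 n))$ bound you mention. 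Neither device you invoke recovers the lost factor. The inequality $\sum_v\bigl(\sum_i f_i(v)\bigr)^2\le nd^2/4$ is exactly what the collision count already uses (it is the bound $\sum_i f_i(v)\le d/2$ summed against itself), so it contains no additional saving. And the Lov\'asz Local Lemma is misapplied: $B_{ij}$ is determined by the random choices of pairs $i$ and $j$, so its dependency degree is the number of events sharing an index, i.e.\ $\Theta(n)$, not $O(Ld/\lambda)$; the symmetric LLL condition would then force $p$ far below what you need; and even if the LLL certified an outcome with no collisions, it would say nothing about how many pairs were sampled in that outcome, which is the quantity you must lower-bound. Moser--Tardos inherits all of these problems.

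It is worth noting that the weaker bound your argument does deliver would suffice for the only use of this theorem in the paper: the corollary applies it with $\alpha=\alphaCMG(N)=\Omega(\log N)$ and $d=\gkrv(N)=O(\log^2 N)$, where both $\Omega(\alpha N/(d^2\log N))$ and $\Omega(\alpha^2N/(d^2\log^2 N))$ read $\Omega(N/\log^4 N)$. But that is not the theorem as stated, which claims a bound linear in $\alpha$ and with a single logarithm for arbitrary $\alpha$ and $d$. Proving that requires a rounding mechanism that genuinely beats independent sampling of the Leighton--Rao flow decomposition---for instance an iterative routing that exploits the expander structure directly, as in the sources cited---and your proposal does not supply one.
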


\begin{corollary}
Let $X$ be a good $N$-vertex expander, and assume that the vertices of $X$ are partitioned into $N/2$ disjoint demand pairs $\mset=\set{(s_1,t_1),\ldots,(s_{N/2},t_{N/2})}$. Then there is an efficient algorithm that routes $
\Omega\left (\frac {N\cdot \alphaCMG(N)} {\log N\cdot \gkrv^2(N)}\right )=\Omega\left (\frac{N}{\log^4N}\right )$ of the demand pairs on vertex-disjoint paths in $X$.
\end{corollary}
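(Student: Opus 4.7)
The plan is to derive this corollary by a direct substitution of parameters into Theorem~\ref{thm: vertex-disjoint routing on expanders}. The statement is essentially a parameter-plugging exercise: the good $N$-vertex expander $X$ satisfies every hypothesis of the Rao--Zhou routing theorem by definition. Namely, $N$ is even, the graph is $\gkrv(N)$-regular, the expansion is at least $\alphaCMG(N)$, and the demand pairs $\mset$ form a perfect matching on $V(X)$, i.e., a partition of $V(X)$ into $N/2$ disjoint pairs.

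I would then invoke Theorem~\ref{thm: vertex-disjoint routing on expanders} with $n = N$, $d = \gkrv(N)$, and $\alpha = \alphaCMG(N)$. This yields an efficient algorithm that routes at least
\[
\Omega\!\left(\frac{\alpha\, n}{\log n \cdot d^2}\right) \;=\; \Omega\!\left(\frac{N\cdot \alphaCMG(N)}{\log N \cdot \gkrv^2(N)}\right)
\]
of the demand pairs in $\mset$ on vertex-disjoint paths in $X$, which is the first expression stated in the corollary.

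It then remains only to simplify the asymptotic expression. Using the bounds from Theorem~\ref{thm: CMG}, namely $\alphaCMG(N) = \Omega(\log N)$ and $\gkrv(N) = O(\log^2 N)$, the quantity above becomes
\[
\Omega\!\left(\frac{N\cdot \log N}{\log N \cdot \log^4 N}\right) \;=\; \Omega\!\left(\frac{N}{\log^4 N}\right),
\]
which matches the second expression in the corollary statement. There is no real obstacle here, since the good-expander definition was tailored precisely so that the hypotheses of Theorem~\ref{thm: vertex-disjoint routing on expanders} are met verbatim; the content of the corollary is just to record the clean polylogarithmic bound that will be used later when routing pairs through the expander embedded into $G$ via the cut-matching game.
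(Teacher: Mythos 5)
Your proposal is correct and is exactly the argument the paper intends: the corollary is an immediate instantiation of Theorem~\ref{thm: vertex-disjoint routing on expanders} with $n=N$, $d=\gkrv(N)$, $\alpha=\alphaCMG(N)$, followed by the substitution $\alphaCMG(N)=\Omega(\log N)$ and $\gkrv(N)=O(\log^2 N)$. The paper leaves this implicit, and your write-up fills it in verbatim.
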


The following theorem allows us to construct a good expander $H$ and embed it into $G$ with congestion at most $2$, given a good crossbar $(\sset^*,\mset^*,\tau^*)$ in graph $G$.

\begin{theorem}\label{thm: main: embed an expander or find a routing}
Assume that we are given an undirected graph $G=(V,E)$ with vertex degrees at most $4$, and a set $\mset$ of $k$ demand pairs, defined over a set $\tset$ of terminals. Assume further that the degree of every terminal is $1$, every terminal participates in exactly one pair in $\mset$, and $G$ is $1$-well-linked for $\tset$. Then there is an efficient randomized algorithm, that with high probability outputs one of the following:

\begin{itemize}
\item Either a subset $\mset'\sse \mset$ of $k/\poly\log k$ demand pairs and the routing of the pairs in $\mset'$ with congestion at most $2$ in $G$;

\item Or a good expander $H$ together with the embedding of $H$ into $G$ with congestion at most $2$.
\end{itemize}
\end{theorem}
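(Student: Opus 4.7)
The plan is to call Theorem~\ref{thm: main: find good crossbar or find a routing}; if it returns a routing we are done, so I focus on the case where its output is a good congestion-$2$ crossbar $(\sset^*,\mset^*,\tau^*)$, and I convert this into an embedding of a good expander via the KRV cut-matching game. The crossbar is perfectly tailored for this conversion: for each $j$, the trees $\tau^*$ establish a bijection $t_i\leftrightarrow e_{i,j}$ between $\tset^*$ and $\Gamma_j^*$, and the $1$-well-linkedness of $S_j^*$ for $\Gamma_j^*$ means that $S_j^*$ can serve as a ``swap room'' in which the matching player of round $j$ realizes any desired perfect matching between subsets of $\Gamma_j^*$ via paths inside $G[S_j^*]$ with congestion exactly $1$.

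Concretely, I would run the cut-matching game on $V=\tset^*$ (even, of size $2k^*$) for $\gkrv(2k^*)$ rounds with the KRV cut player of Theorem~\ref{thm: CMG}. In round $j$, given the balanced bipartition $(A_j,B_j)$, I transport it through the bijection to a bipartition $(\Gamma_{A_j},\Gamma_{B_j})$ of $\Gamma_j^*$, apply the $1$-well-linkedness of $S_j^*$ via min-cut/max-flow and integrality of flow to obtain a perfect matching between $\Gamma_{A_j}$ and $\Gamma_{B_j}$ realized by edge-disjoint paths inside $G[S_j^*]$, and return to the game the matching $M_j$ on $\tset^*$ that the bijection pulls this to. After $\gkrv(2k^*)$ rounds, $X=\bigcup_j M_j$ is a good expander with constant probability by Theorem~\ref{thm: CMG}, amplifiable to high probability by $\poly(n)$ independent re-runs.

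To embed $X$ into $G$, I set each component $C_{t_i}=T_i$, which is connected and contains $t_i$ as well as the $S_j^*$-side endpoint of every boundary edge $e_{i,j}$. For each matching edge $e=(t_i,t_{i'})\in M_j$, I let the embedding path $P_e$ be the round-$j$ matching path inside $G[S_j^*]$ from the $S_j^*$-side endpoint of $e_{i,j}\in C_{t_i}$ to the $S_j^*$-side endpoint of $e_{i',j}\in C_{t_{i'}}$, using only interior edges of $E(S_j^*)$ (the boundary edges $e_{i,j},e_{i',j}$ get absorbed into the components $C_{t_i},C_{t_{i'}}$). The main obstacle---and the place where the budget of $2$ is tight---is the load analysis. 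For $f\in E(G)\setminus\bigcup_j E(S_j^*)$, only components contribute to the load, and $f$ lies in at most $2$ trees by the crossbar-congestion-$2$ assumption, so the load is $\leq 2$. For $f\in E(S_j^*)$ (with $j$ unique by disjointness of the $S_j^*$'s), the stronger discount ``interior edges of $S_j^*$ lie in at most $c-1=1$ tree'' caps the component contribution at $1$, and the $1$-well-linked routing of round $j$ contributes at most $1$ more (no other round $j'\neq j$ touches $f$ since the $S_j^*$'s are disjoint), again yielding load $\leq 2$. This is exactly why the argument does not go through with a crossbar of congestion $>2$: if an interior edge of some $S_j^*$ could lie in two trees, the component load alone would exhaust the budget and leave no room for the matching path.
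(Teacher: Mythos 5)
Your proposal is correct and follows essentially the same route as the paper: invoke Theorem~\ref{thm: main: find good crossbar or find a routing}, and in the crossbar case run the cut-matching game on $\tset^*$, realizing round $j$'s matching via the $1$-well-linkedness of $S_j^*$ for $\Gamma_j^*$, with the trees of $\tau^*$ serving as the vertex components and the same case split (interior edges of some $S_j^*$ versus all other edges) giving congestion at most $2$. Your explicit trimming of the boundary edges $e_{i,j}$ from the matching paths, so that they are charged only to the trees, is a slightly more careful bookkeeping of the same argument.
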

\begin{proof}
We apply Theorem~\ref{thm: main: find good crossbar or find a routing} to graph $G$. If the outcome is a subset $\mset'\sse \mset$ of $k/\poly\log k$ demand pairs and their routing in $\mset'$ with congestion at most $2$, then we return this routing, and terminate the algorithm. We assume from now on that the algorithm in Theorem~\ref{thm: main: find good crossbar or find a routing} produces a good crossbar $(\sset^*,\mset^*,\tau^*)$.

The good expander $H$ is constructed as follows. The set of vertices of $H$, $V(H)=\tset(\mset^*)$. The embedding of each vertex $t\in V(H)$ is the unique tree $T\in \tau^*$, to which terminal $t$ belongs.
In order to compute the set of edges of $H$ and their embedding into $G$, we perform the cut-matching game, using Theorem~\ref{thm: CMG}. Recall that this game consists of $\gkrv$ iterations, where in iteration $j$, for $1\leq j\leq \gkrv$, the cut player computes a partition $(A_j,B_j)$ of $V(H)$, and the matching player responds with a perfect matching between the vertices of $A_j$ and the vertices of $B_j$. Given the partition $(A_j,B_j)$ of $V(H)$, computed by the cut player, we find a corresponding partition $(A'_j,B_j')$ of $\Gamma^*_j$, as follows. For each $t_i\in V(H)$, if $t_i\in A_j$, then we add $e_{i,j}$ to $A'_j$, and otherwise we add it to $B_j'$. Since the set $S^*_j$ is $1$-well-linked for $\Gamma^*_j$, we can find a collection $\qset^*_j: A'_j\sconnect_1 B'_j$ of edge-disjoint paths contained in $S^*_j$. These paths define a matching $M_j$ between the edges of $A'_j$ and the edges of $B'_j$, which in turn defines a matching between the terminals of $A_j$ and the terminals of $B_j$. We then add the edges of the matching to the graph $H$. For each such edge $e\in M_j$, its embedding into $G$ is the corresponding path $Q\in \qset^*_j$. From Theorem~\ref{thm: CMG}, after $\gkrv$ iterations, we obtain a good expander $H$, together with its embedding into $G$. It is easy to see that the embedding causes congestion at most $2$ in graph $G$. Indeed, consider any edge $e\in E$. If edge $e$ belongs to any sub-graph $G[S_j]$, for $S_j\in \sset^*$, then it belongs to at most one tree in $\tau^*$, and to at most one path in $\set{P_{e'}\mid e'\in E(H)}$. If $e$ does not belong to any such sub-graph, then it is contained in at most two trees in $\tau^*$, and it is not contained in any path in $\set{P_{e'}\mid e'\in E(H)}$. Therefore, the total congestion of this embedding is at most $2$.
\end{proof}

We are now ready to complete the proof of Theorem~\ref{thm: main}.
Our starting point is similar to that used in previous work on \EDP~\cite{ANF,CKS,RaoZhou,Andrews,EDP-old}. We use the standard multicommodity flow LP-relaxation for the \EDPwC problem to partition our graph into several disjoint sub-graphs, that are well-linked for their respective sets of terminals. In the standard LP-relaxation for \EDPwC, we have an indicator variable $x_i$ for each $1\leq i\leq k$, for whether the pair $(s_i,t_i)$ is routed. Let $\pset_i$ be the set of all paths connecting $s_i$ to $t_i$ in $G$. The LP relaxation is defined as follows.

\begin{eqnarray} 
\mbox(LP1)&\max \quad\sum_{i=1}^kx_i& \nonumber\\
\mbox{s.t.}
&\sum_{P\in \pset_i}f(P)\geq x_i&\forall 1\leq i\leq k\\
&\sum_{P: e\in P}f(P)\leq 2&\forall e\in E \label{eq: capacity constraint}\\
&0\leq x_i\leq 1&\forall 1\leq i\leq k\\
&f(P)\geq 0&\forall 1\leq i\leq k,\forall P\in \pset_i
\end{eqnarray}

While this LP has exponentially many variables, it can be efficiently solved using standard techniques, e.g. by using an equivalent polynomial-size LP formulation. We denote by $\opt$ the value of the optimal solution to the LP. Clearly, the value of the optimal solution to the \EDPwC problem instance with congestion $2$ is at most $\opt$.

Notice that if we replace Constraint~(\ref{eq: capacity constraint}) with the following constraint:

\[\sum_{P: e\in P}f(P)\leq 1\quad\quad \forall e\in E, \]

then we obtain the standard multicommodity LP relaxation for the \EDP problem itself, where no congestion is allowed. Let $\opt'$ denote the optimal solution value of this new LP. Then clearly $\opt'\leq \opt\leq 2\opt'$.

The next theorem follows from the work of Chekuri, Khanna and Shepherd~\cite{ANF,CKS}, and we provide a short proof sketch for completeness.

\begin{theorem}\label{thm: starting point}
Suppose we are given a graph $G=(V,E)$ and a set $\mset$ of $k$ source-sink pairs in $G$. Then we can efficiently partition $G$ into a collection $G_1,\ldots,G_{\ell}$ of vertex-disjoint induced sub-graphs, and compute, for each $1\leq i\leq \ell$, a collection $\mset_i\sse \mset$ of source-sink pairs contained in $G_i$, such that $\sum_{i=1}^{\ell}|\mset_i|=\Omega(\opt/\log^2k)$, and moreover, if for each $1\leq i\leq \ell$, $\tset_i$ denotes the set of terminals participating in pairs in $\mset_i$, then $\tset_i\sse V(G_i)$, and $G_i$ is $1$-well-linked for $\tset_i$.
\end{theorem}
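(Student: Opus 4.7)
The plan is to apply the standard iterative well-linked decomposition of Chekuri, Khanna, and Shepherd~\cite{ANF,CKS}, with the only modification being that our LP has congestion $2$ instead of $1$. First, solve the LP relaxation (LP1) to obtain a fractional solution $\{f(P)\}$ of total value $\opt$ with edge congestion at most $2$. Initialize a worklist $\mathcal{W}$ with the single pair $(G, \tset)$; we maintain the invariant that each item $(G', \tset')$ in $\mathcal{W}$ is an induced sub-instance whose terminal set lies in its vertex set.

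Iterate the following step until $\mathcal{W}$ is empty. Pick an item $(G', \tset')$ and run $\algsc$ on the sparsest-cut instance $\SC(G', \tset')$. If the returned cut has sparsity at least $\alphasc(k)$, then by the approximation guarantee the true minimum sparsity of $(G', \tset')$ is at least $1$, so $G'$ is $1$-well-linked for $\tset'$; move $(G', \tset')$ to the output collection. Otherwise the algorithm returns a cut $(A, B)$ with
\[
|E_{G'}(A,B)| < \alphasc(k) \cdot \min\{|A \cap \tset'|, |B \cap \tset'|\};
\]
permanently delete the crossing edges, and replace $(G', \tset')$ in $\mathcal{W}$ with the two sub-instances $(G'[A], \tset' \cap A)$ and $(G'[B], \tset' \cap B)$. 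When $\mathcal{W}$ is empty, output the final collection $\{(G_i, \tset_i)\}$ and define $\mset_i := \{(s,t) \in \mset : s,t \in \tset_i\}$. By construction, each $G_i$ is $1$-well-linked for $\tset_i$.

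To control $\sum_i |\mset_i|$, associate with every sub-instance $(G', \tset')$ a residual LP value $\opt(G')$ equal to the sum of $f(P)$ over all LP paths $P$ that lie entirely in $G'$ and connect two terminals of $\tset'$. Initially $\opt(G) = \opt$. At each split along $(A, B)$, the only flow that is destroyed is that carried by the cut edges, so the loss is at most $2|E(A,B)| < 2\alphasc(k) \cdot |T_{\mathrm{small}}|$, since every edge of the LP carries at most $2$ units of flow. The recursion tree has depth at most $\log_2 k$, because a terminal can belong to the smaller side only $O(\log k)$ times before its containing cluster becomes a singleton; the standard amortized charging argument of CKS then shows that the total surviving LP value satisfies $\sum_i \opt(G_i) \ge \Omega(\opt/(\alphasc(k)\log k)) = \Omega(\opt/\log^2 k)$. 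Since each surviving demand pair $(s,t) \in \mset_i$ contributes at most $1$ to $\opt(G_i)$, we get $\sum_i |\mset_i| \ge \sum_i \opt(G_i) = \Omega(\opt/\log^2 k)$.

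The main obstacle is the amortized charging that converts the per-split loss of $O(\alphasc(k) \cdot |T_{\mathrm{small}}|)$ into a $\polylog(k)$ multiplicative factor of $\opt$ rather than an additive loss that could swamp $\opt$ when $\opt$ is small; the key is to charge the loss to the smaller side's LP value (not just to its terminal count) by exploiting that, absent well-linkedness, one can still bound the local LP value against $|T_{\mathrm{small}}|$ using the sparsity inequality, and to amortize this over the $O(\log k)$-depth recursion. Combining this with the $\alphasc(k) = O(\sqrt{\log k})$ approximation of $\algsc$ gives the claimed $\Omega(\opt/\log^2 k)$ bound on $\sum_i |\mset_i|$.
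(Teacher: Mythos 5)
There is a genuine gap, and it sits exactly where you flagged ``the main obstacle'': the loss accounting for the recursive decomposition does not work when the target is $1$-well-linkedness. To certify that a piece is $1$-well-linked using \algsc, you must keep cutting as long as the returned cut has sparsity below $\alphasc(k)$, so each split may delete up to $\alphasc(k)\cdot|T_{\mathrm{small}}|$ edges, destroying up to $2\alphasc(k)\cdot|T_{\mathrm{small}}|$ units of LP flow. But the LP value attributable to the smaller side is at most $|T_{\mathrm{small}}|$ (each terminal sends at most one flow unit), so the loss at a \emph{single} split already exceeds the budget you propose to charge it to by a factor of $2\alphasc(k)$. No amortization over the $O(\log k)$ recursion depth can recover from this: the total loss is $\Theta(\alphasc(k)\log k\cdot k)$, which swamps $\opt$ even when $\opt=\Theta(k)$. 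The ``standard amortized charging argument of CKS'' that you invoke only yields a constant-fraction survival guarantee when the well-linkedness target is $\alpha$ with $\alpha\cdot\alphasc(k)\cdot\log k=O(1)$, i.e.\ $\alpha=O(1/\log^{1.5}k)$, and moreover it charges against LP-derived terminal \emph{weights} (summing to $O(\opt)$) rather than terminal counts. Cut-based recursion alone cannot reach $\alpha=1$.

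The paper's proof takes a structurally different second step precisely to avoid this. It first invokes the known CKS decomposition (Theorem~\ref{thm: starting point2}) to obtain vertex-disjoint pieces that are \emph{flow}-well-linked (hence $1/2$-cut-well-linked) while retaining $\Omega(\opt/\log^2k)$ of the pairs; this is the part your recursion could legitimately deliver, with a weaker constant. It then boosts from $1/2$- to $1$-well-linkedness not by further cutting but by the grouping technique (Theorem~\ref{thm: grouping-advanced}): terminals are clustered into groups of size $\Theta(1/\alpha)$ joined by edge-disjoint trees, a constant fraction of representative terminals is selected (respecting the pairing in $\mset_i$), and $1$-well-linkedness is certified for the \emph{selected subset} via a flow argument through pseudo-centers. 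The price is that only a constant fraction of the terminals of each piece survive, which is affordable; your approach instead tries to keep all terminals of each final piece and pays for it in the edge deletions. To repair your proof you would need to replace the final rounds of cutting with such a terminal-selection/grouping step.
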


\begin{proof}
We need the following definition.

\begin{definition}
Given a graph $G=(V,E)$, and a subset $\tset\sse V$ of vertices called terminals, we say that $\tset$ is \emph{flow-well-linked} in $G$, iff any matching $\mset$ on $\tset$ can be fractionally routed with congestion at most $2$ in $G$.
\end{definition}

The next theorem follows from the work of Chekuri, Khanna and Shepherd~\cite{ANF,CKS}, and its proof also appears in~\cite{EDP-old}.


\begin{theorem}\label{thm: starting point2}
Suppose we are given a graph $G=(V,E)$ and a set $\mset$ of $k$ source-sink pairs in $G$. Then we can efficiently partition $G$ into a collection $G_1,\ldots,G_{\ell}$ of vertex-disjoint induced sub-graphs, and compute, for each $1\leq i\leq \ell$, a collection $\mset_i\sse \mset$ of source-sink pairs contained in $G_i$, such that $\sum_{i=1}^{\ell}|\mset_i|=\Omega(\opt/\log^2k)$, and moreover, if for each $1\leq i\leq \ell$, $\tset_i$ denotes the set of terminals participating in pairs in $\mset_i$, then $\tset_i\sse V(G_i)$, and $G_i$ is flow-well-linked for $\tset_i$.
\end{theorem}

Fix some $1\leq i\leq \ell$. From the min-cut max-flow theorem, since $G_i$ is flow-well-linked for $\tset_i$, 
then it must be $1/2$-well-linked for $\tset_i$. We apply the grouping technique from Theorem~\ref{thm: grouping-advanced} to graph $G_i$, where the first set of terminals is $\tset_i$, together with the matching $\mset_i$ defined over $\tset_i$, and the second set of terminals is $\emptyset$. Notice that $\alpha=1/2$ in this case. If the outcome is a subset $\mset'_i\sse \mset_i$ of $\Omega(|\mset_i|)$ pairs, together with their routing on edge-disjoint paths, then graph $G_i$ is $1$-well-linked for $\tset(\mset'_i)$. Otherwise, we obtain a subset $\mset'_i\sse \mset_i$ of $\Omega(|\mset_i|)$ pairs, where $G_i$ is $1$-well-linked for $\tset(\mset'_i)$. In any case, for each graph $G_i$, we now obtain a subset $\mset'_i\sse \mset_i$ of the demand pairs, such that $G_i$ is $1$-well-linked for $\tset(\mset'_i)$, and $\sum_{i=1}^{\ell}|\mset'_i|=\Omega(\opt/\log^2k)$.
\end{proof}

We now proceed to solve the problem on each one of the graphs $G_i$ separately. For each $i$, let $k_i=|\mset'_i|$.
Notice that in order to complete the proof of Theorem~\ref{thm: main}, it is now enough to show that there is an efficient randomized algorithm that for each $1\leq i\leq \ell$, routes $\Omega(k_i/\poly\log k_i)$ pairs in $\mset'_i$ with congestion at most $2$ in graph $G_i$ w.h.p.

Fix some $1\leq i\leq \ell$.
We apply Theorem~\ref{thm: main: embed an expander or find a routing} to graph $G_i$ and the set $\mset'_i$ of the demand pairs. If the output of Theorem~\ref{thm: main: embed an expander or find a routing} is a collection $\mset_i''\sse \mset'_i$ of $k_i/\poly\log k_i$ demand pairs, together with their routing with congestion at most $2$ in $G_i$, then we terminate the algorithm and return this routing for graph $G_i$.
Otherwise, we obtain a collection $\mset_i''\sse \mset'_i$ of $k_i/\poly\log k_i$ demand pairs, a good expander $H$ whose vertex set is $V(H)=\tset(\mset_i'')$, and the embedding of $H$ into the graph $G_i$ with congestion $2$. Using Theorem~\ref{thm: vertex-disjoint routing on expanders}, we can compute a collection $\tmset_i\sse \mset_i''$ of $\Omega\left (\frac{|\mset''_i|}{(\log |\mset''_i|)^4}\right )=\Omega\left (k_i/\poly\log k_i\right )$ demand pairs, and a routing $\pset$ of the pairs in $\tmset_i$ in the expander $H$ via vertex-disjoint paths. We now construct a routing of the pairs in $\tmset_i$ in the graph $G_i$, with congestion at most $2$.

In order to construct the routing, consider any path $P\in \pset$, and assume that it connects a pair $(s,t)\in \tmset_i$. We transform the path $P$ into a path $Q$, connecting $s$ to $t$ in graph $G_i$, as follows. Assume that $P=(t_0=s,t_1,\ldots,t_r=t)$, and let $e_j$ be the edge connecting $t_j$ to $t_{j+1}$, for $0\leq j<r$. For each such edge $e_j$, let $P_{e_j}$ be the path via which $e_j$ is embedded into $G$. Let $v_j$ be the first vertex on  $P_{e_j}$, and let $u_{j+1}$ be the last vertex on $P_{e_j}$. Then $v_j$ belongs to the connected component $C_{t_j}$, and $u_{j+1}$ belongs to the connected component $C_{t_{j+1}}$. 

For each $1\leq j<r$, let $Q_j$ be any path  connecting $u_j$ to $v_j$ inside the connected component $C(t_j)$. Let $Q_0$ be any path connecting $t_0$ to $v_0$ inside $C_{t_0}$, and let $Q_r$ be any path connecting $u_r$ to $t_r$ inside $C_{t_r}$. Path $Q$ is then obtained by concatenating $Q_0,P_{e_0},Q_1,\ldots,P_{r-1},Q_r$. Notice that path $Q$ connects the original pair $(s,t)$ of terminals to each other.

Let $\qset$ be the final set of paths, obtained after processing all paths $P\in\pset$. From the above discussion, every pair $(s,t)\in\tmset_i$ has a path connecting $s$ to $t$ in $\qset$. In order to bound the congestion caused by $\qset$, recall that the paths in $\pset$ are vertex-disjoint. Therefore, the paths in $\qset$ traverse every component $C_t$ for $t\in V(H)$ at most once, and use every path $P_e$ for $e\in E(H)$ at most once. Since the congestion of the embedding of $H$ is bounded by $2$, the paths in $\qset$ cause congestion at most $2$ overall. 

To conclude, for each $1\leq i\leq \ell$, we have computed a collection $\mset^*_i\sse \mset_i$ of $k_i/\poly\log k_i$ demand pairs, and a routing $\qset_i$ of the pairs in $\mset^*_i$ with congestion at most $2$ in $G_i$. Since $\sum_{i=1}^{\ell}k_i=\Omega(\opt/\log^2k)$, overall $\bigcup_{i=1}^{\ell}\qset_i$ is a $(\poly\log k)$-approximate solution for the \EDPwC instance $(G,\mset)$ with congestion $2$.


\section{Proof of Claim~\ref{claim: random partition into gamma sets}}\label{sec: proof of partitioning claim}

Let $H=G'\setminus\tset$.
Fix some $1\leq j\leq \gamma$. 
Let $\event_1(j)$ be the bad event that $\sum_{v\in X_j}d_{H}(v)> \frac{2m}{\gamma}\cdot \left (1+\frac 1 {\gamma}\right )$. In order to bound the probability of $\event_1(j)$, we define, for each vertex $v\in V(H)$, a random variable $x_v$, whose value is $\frac{d_{H}(v)}{k_1}$ if $v\in X_j$ and $0$ otherwise. Notice that $x_v\in [0,1]$, and the random variables $\set{x_v}_{v\in V(H)}$ are pairwise independent. Let $B=\sum_{v\in V(H)} x_v$. Then the expectation of $B$, $\mu_1=\sum_{v\in V(H)} \frac{d_{H}(v)}{\gamma k_1}=\frac{2m}{\gamma k_1}$. Using the standard Chernoff bound (see e.g. Theorem 1.1 in~\cite{measure-concentration}),

\[\prob{\event_1(j)}=\prob{B> \left (1+1/\gamma\right )\mu_1}\leq e^{-\mu_1/(3\gamma^2)}=e^{-\frac{2m}{3\gamma^3 k_1}}<\frac 1 {6\gamma}\]

since $m\geq k/6$ and $k_1=\frac{k}{192\gamma^3\log \gamma}$.

For each terminal $t\in \tset$, let $e_t$ be the unique edge adjacent to $t$ in graph $G'$, and let $u_t$ be its other endpoint. Let $U=\set{u_t\mid t\in \tset}$. For each vertex $u\in U$, let $w(u)$ be the number of terminals $t$, such that $u=u_t$. Notice that $w(u)\leq k_1$ must hold. We say that a bad event $\event_2(j)$ happens iff $\sum_{u\in U\cap X_j}w(u)\geq \frac k{\gamma}\cdot \left (1+\frac 1 {\gamma}\right )$. 
In order to bound the probability of the event $\event_2(j)$, we define, for each $u\in U$, a random variable $y_u$, whose value is $w(u)/k_1$ iff $u\in X_j$, and it is $0$ otherwise. Notice that $y_u\in [0,1]$, and the variables $y_u$ are independent for all $u\in U$. Let $Y=\sum_{u\in U} y_u$. The expectation of $Y$ is $\mu_2=\frac{k}{k_1\gamma}$, and event $\event_2(j)$ holds iff $Y\geq \frac{k}{k_1\gamma}\cdot  \left (1+\frac 1 {\gamma}\right )\geq \mu_2\cdot \left (1+\frac 1 {\gamma}\right )$. Using the standard Chernoff bound again, we get that:

\[\prob{\event_2(j)}\leq e^{-\mu_2/(3\gamma^2)}\leq e^{-k/(3k_1\gamma^3)}\leq \frac 1 {6\gamma}\]

since $k_1=\frac{k}{192\gamma^3\log \gamma}$. Notice that if events $\event_1(j),\event_2(j)$ do not hold, then:

\[|\out_{G'}(X_j)|\leq \sum_{v\in X_j}d_{H}(v)+\sum_{u\in U\cap X_j}w(u)\leq \left (1+\frac 1 {\gamma}\right )\left (\frac{2m}{\gamma}+\frac{k}{\gamma}\right )< \frac{10m}{\gamma}\]

since $m\geq k/6$.

Let $\event_3(j)$ be the bad event that $|E_{G'}(X_j)|< \frac{m}{2\gamma^2}$. We next prove that $\prob{\event_3(j)}\leq \frac 1 {6\gamma}$. We say that two edges $e,e'\in E(G'\setminus \tset)$ are \emph{independent} iff they do not share any endpoints. Our first step is to compute a partition $U_1,\ldots,U_r$ of the set $E(G'\setminus \tset)$ of edges, where $r\leq 2k_1$, such that for each $1\leq i\leq r$, $|U_i|\geq \frac m{4k_1}$, and all edges in set $U_i$ are mutually independent. In order to compute such a partition, we construct an auxiliary graph $Z$, whose vertex set is $\set{v_e\mid e\in E(H)}$, and there is an edge $(v_e,v_{e'})$ iff $e$ and $e'$ are not independent. Since the maximum vertex degree in $G'$ is at most $k_1$, the maximum vertex degree in $Z$ is bounded by $2k_1-2$. Using the Hajnal-Szemer\'edi Theorem~\cite{Hajnal-Szemeredi}, we can find a partition $V_1,\ldots,V_r$ of the vertices of $Z$ into $r\leq 2k_1$ subsets, where each subset $V_i$ is an independent set, and $|V_i|\geq \frac{|V(Z)|}{r}-1\geq \frac{m}{4k_1}$. The partition $V_1,\ldots,V_r$ of the vertices of $Z$ gives the desired partition $U_1,\ldots,U_r$ of the edges of $G'\setminus \tset$. For each $1\leq i\leq r$, we say that the bad event $\event_3^i(j)$ happens iff $|U_i\cap E(X_j)|< \frac{|U_i|}{2\gamma^2}$. Notice that if $\event_3(j)$ happens, then event $\event_3^i(j)$ must happen for some $1\leq i\leq r$. Fix some $1\leq i\leq r$. The expectation of $|U_i\cap E(X_j)|$ is $\mu_3=\frac{|U_i|}{\gamma^2}$. Since all edges in $U_i$ are independent, we can use the standard Chernoff bound to bound the probability of $\event_3^i(j)$, as follows:

\[\prob{\event_3^i(j)}=\prob{|U_i\cap E(X_j)|<\mu_3/2}\leq e^{-\mu_3/8}=e^{-\frac{|U_i|}{8\gamma^2}}\].

Since $|U_i|\geq \frac{m}{4k_1}$, $m\geq k/6$, $k_1=\frac{k}{192\gamma^3\log \gamma}$, and $\gamma=\Theta(\log^4k)$, this is bounded by $\frac{1}{12k_1\gamma}$. We conclude that $\prob{\event_3^i(j)}\leq \frac{1}{12k_1\gamma}$, and by using the union bound over all $1\leq i\leq r$, $\prob{\event_3(j)}\leq \frac{1}{6\gamma}$.

 Using the union bound over all $1\leq j\leq \gamma$, with probability at least $\half$, none of the events $\event_1(j),\event_2(j),\event_3(j)$ for $1\leq j\leq \gamma$ happen, and so for each $1\leq j\leq \gamma$, $|\out_{G'}(X_j)|< \frac{10m}{\gamma}$, and $|E_{G'}(X_j)|\geq\frac{m}{2\gamma^2}$ must hold.

\section{Integrality Gap of the Multi-Commodity Flow Relaxation for EDP}\label{sec: gap}

Recall that in the multcommodity flow relaxation of the \EDP problem, the goal is to maximize the total amount of flow sent between the demand pairs, with no congestion, with at most one flow unit being sent between any pair. The gap construction for this relaxation, due to Garg, Vazirani and Yannakakis~\cite{trees2} starts from a sub-graph of the $(k+1)\times (k+1)$ grid depicted in Figure~\ref{fig: grid example}. Every degree-$4$ vertex in this graph is then replaced by the gadget depicted in the figure. This gadget ensures that whenever two paths traverse a degree-$4$
vertex of the grid, one vertically and one horizontally, then they must share an edge.

\begin{figure}[h]
\begin{center}
\scalebox{0.4}{\includegraphics{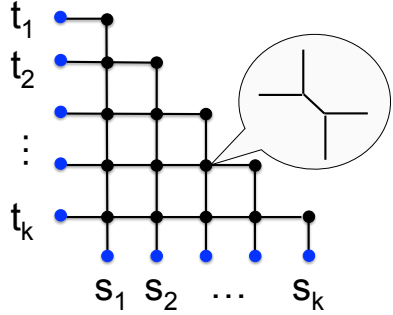}} \caption{\label{fig: grid example} The gap example for the multicommodity flow relaxation}
\end{center}
\end{figure}

Notice that every pair $(s_i,t_i)$ can send $\half$ flow unit from $s_i$ to $t_i$ along the path that contains the vertical segment of the grid incident to $s_i$ and the horizontal segment incident to $t_i$. Therefore, the value of the multicommidity flow relaxation is $k/2=\Omega(\sqrt n)$. However, it is easy to see that the value of the optimal integral solution is $1$. Indeed, for any path $P$ connecting any pair $(s_i,t_i)$, if we remove all edges of $P$ from the graph, then all other demand pairs become disconnected. Therefore, the integrality gap of this example is $\Omega(\sqrt n)$.

Notice also that in this example we can route $k=\Omega(\sqrt n)$ demand pairs integrally with congestion $2$, while at most one demand pair can be routed with congestion $1$. This shows a polynomial gap between routing with congestion $1$ and routing with congestion $2$ and higher.

Figure~\ref{fig: brick-wall graph} depicts a brick-wall graph. Obtaining sub-polynomial approximation algorithms for \EDP on brick-wall graphs remains an interesting open problem.

\begin{figure}[h]
\begin{center}
\scalebox{0.4}{\includegraphics{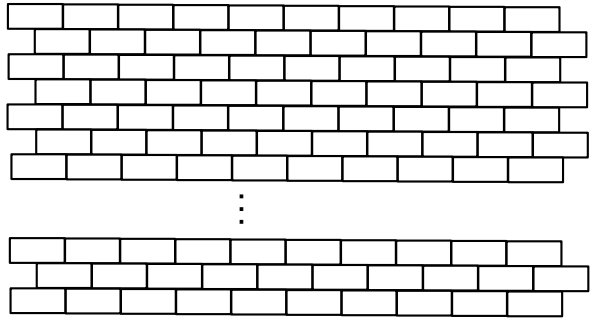}} \caption{\label{fig: brick-wall graph} A brick-wall graph}
\end{center}
\end{figure}

\section{Table of Parameters}\label{sec: appendix-params}
\renewcommand{\arraystretch}{1.4}
\begin{tabular}{|l|l|p{10cm}|} \hline
$\beta(k)$&$\Theta(\log k)$&Flow-cut gap in undirected graphs.\\ \hline
$\gkrv$&$\Theta(\log^2k)$&Parameter for the cut-matching game, from Theorem~\ref{thm: CMG}. \\ \hline
$\gamma$&$2^{24}\gkrv^4=O(\log^8k)$&  Number of sets in the initial partition, $\rset=\set{S_1,\ldots,S_{\gamma}}$ \\ \hline
$\gamma'$&$\gamma^{1/4}=2^6\gkrv$& Threshold for the number of leaves for Cases 1 and 2 \\ \hline
$\alphasc(k)$&$O(\sqrt{\log k})$& Approximation factor of the algorithm of~\cite{ARV} for Sparsest Cut.\\ \hline
$\alpha$&$\frac{1}{2^{11}\cdot \gamma\cdot\log k}=\Omega\left (\frac 1 {\log^9k}\right )$&Parameter for well-linked decomposition.\\ \hline
$\alphaWL$&$\alpha/\alphasc(k)=\Omega\left (\frac 1 {\log^{9.5}k}\right )$&Well-linkedness parameter.\\ \hline
$r$&$8\gkrv$& Number of sets in $\rset'$ \\ \hline
$k_1$&$\frac{k}{192\gamma^3\log \gamma}=\Omega \left (\frac k {\log^{24}k\log\log k}\right )$&Threshold for large clusters.\\ \hline
$k'$&$\floor{\frac{k_1}{6\gamma^2}}=\Omega \left (\frac k {\poly\log k}\right )$&Parameter in the construction of graph $Z$ in Step 1.\\ \hline
$k_2$&$\Omega\left (\frac{k_1\alpha\cdot\alphaWL}{\gamma^{3.5}}\right )=\Omega \left (\frac k {\poly\log k}\right )$&Number of paths for every edge in $\tT$ in Step 1.\\ \hline
$k_4$&$\Omega(k_2/r^2)=\Omega \left (\frac k {\poly\log k}\right )$&Number of demand pairs routed to $S,S'$ in Step 2.\\ \hline
$k_6$&$\Omega(\alphaWL^2k_4)=\Omega \left (\frac k {\poly\log k}\right )$&Number of trees in $\tau^*$ is at least $k_6/2$.\\ \hline
\end{tabular}

\end{document}